\xpatchcmd{\ps@firstpagestyle}{Manuscript submitted to ACM}{}{\typeout{First patch succeeded}}{\typeout{first patch failed}}
\xpatchcmd{\ps@standardpagestyle}{Manuscript submitted to ACM}{}{\typeout{Second patch succeeded}}{\typeout{Second patch failed}}    \@ACM@manuscriptfalse
\renewcommand\footnotetextcopyrightpermission[1]{} 
\newif\iffull
\newcommand{\id}{\ensuremath{\text{id}}}
\newcommand{\poly}{\text{poly}}
\def\E{{\textrm E}}
\def\N{\mathbb{N}}
\newcommand{\lf}{\lfloor}
\newcommand{\rf}{\rfloor}
\newtheorem{corollary}{Corollary}
\def\BibTeX{{\rm B\kern-.05em{\sc i\kern-.025em b}\kern-.08emT\kern-.1667em\lower.7ex\hbox{E}\kern-.125emX}}
\begin{document}

%
\title{Distributed Computation in Node-Capacitated Networks}
\iffull{\titlenote{This is an extended version of a paper that will appear at SPAA 2019.}}\fi

%

\author{John Augustine}
\email{augustine@iitm.ac.in}
\affiliation{%
  \institution{IIT Madras}
  \state{India}
}

\author{Mohsen Ghaffari}
\email{ghaffari@inf.ethz.ch}
\affiliation{%
  \institution{ETH Zurich}
  \state{Switzerland}
}

\author{Robert Gmyr}
\email{rgmyr@uh.edu}
\affiliation{%
  \institution{University of Houston}
  \state{USA}
}

\author{Kristian Hinnenthal}
\author{Christian Scheideler}
\email{{krijan, scheidel}@mail.upb.de}
\affiliation{%
  \institution{Paderborn University}
  \state{Germany}
}

\author{Fabian Kuhn}
\email{kuhn@cs.uni-freiburg.de}
\affiliation{%
  \institution{University of Freiburg}
  \state{Germany}
}

\author{Jason Li}
\email{jmli@cs.cmu.edu}
\affiliation{%
  \institution{Carnegie Mellon University}
  \state{USA}
}

%
\renewcommand{\shortauthors}{Augustine et al.}

%
\begin{abstract}
In this paper, we study distributed graph algorithms in networks in which the nodes have a limited communication capacity.
Many distributed systems are built on top of an underlying networking infrastructure, for example by using a virtual communication topology known as an \emph{overlay network}.
Although this underlying network might allow each node to directly communicate with a large number of other nodes, the amount of communication that a node can perform in a fixed amount of time is typically much more limited.

We introduce the \emph{Node-Capacitated Clique} model as an abstract communication model, which allows us to study the effect of nodes having limited communication capacity on the complexity of distributed graph computations.
In this model, the $n$ nodes of a network are connected as a clique and communicate in synchronous rounds.
In each round, every node can exchange messages of $O(\log n)$ bits with at most $O(\log n)$ other nodes.
When solving a graph problem, the input graph $G$ is defined on the same set of $n$ nodes, where each node knows which other nodes are its neighbors in $G$.

To initiate research on the Node-Capacitated Clique model, we present distributed algorithms for the \emph{Minimum Spanning Tree} (MST), \emph{BFS Tree}, \emph{Maximal Independent Set}, \emph{Maximal Matching}, and \emph{Vertex Coloring} problems.
We show that even with only $O(\log n)$ concurrent interactions per node, the MST problem can still be solved in polylogarithmic time.
In all other cases, the runtime of our algorithms depends linearly on the \emph{arboricity} of $G$, which is a constant for many important graph families such as planar graphs.
\end{abstract}

%
%
\begin{CCSXML}
<ccs2012>
<concept>
<concept_id>10003752.10003809.10010172</concept_id>
<concept_desc>Theory of computation~Distributed algorithms</concept_desc>
<concept_significance>500</concept_significance>
</concept>
</ccs2012>
\end{CCSXML}

\ccsdesc[500]{Theory of computation~Distributed algorithms}

%
\keywords{Distributed Algorithms, Node Capacity, Graph Algorithms}

%
\maketitle

\section{Introduction}
Nowadays, most of the distributed systems and applications do not have a dedicated communication infrastructure, but instead share a common physical network with many others.
The logical network formed on top of this infrastructure is called an \emph{overlay network}.
For these systems, the amount of information that a node can send out in a single round does not scale linearly with the number of its incident edges.
Instead, it rather depends on the bandwidth of the connection of the node to the communication infrastructure as a whole.
For these networks, it is therefore more reasonable to impose a bound on the amount of information that a {\em node} can send and receive in one round, rather than imposing a bound on the amount of information that can be sent along each of its incident {\em edges}.
Also, the topology of the overlay network may change over time, and these changes are usually under the control of the distributed application.
To capture these aspects, we propose to study the so-called {\em Node-Capacitated Clique} model.
The model is inspired in part by the \emph{Congested Clique} model introduced first by Lotker, Patt-Shamir, Pavlov, and Peleg~\cite{LSPP05}, which has received significant attention recently~\iffull{\cite{JN18b, Len13, GP16, HPPS15, Kor16, LSPP05, JN18, CKKL15, DLP12, BKKL17, CPS17,HP14, HPS14, Leg16, konrad2018mis, ghaffari2018improved, ghaffari2017distributed, BMRT18, GN18}}\else{\cite{Len13, GP16, HPPS15, Kor16, LSPP05, JN18, CKKL15, DLP12, BMRT18, BKKL17, CPS17, HP14, HPS14, JN18b, GN18, Leg16, DKO14}}\fi.

Similarly to the Congested Clique model, the nodes of the Node-Capacitated Clique are interconnected by a complete graph.
However, in the Node-Capacitated Clique every node can only send and receive at most $O(\log n)$ messages consisting of $O(\log n)$ bits in each round.
This limitation is added precisely to address the issue explained above.
It particularly rules out the possibility that the model allows one node to be in contact with up to $\Theta(n)$ other nodes at the same time; a property of the Congested Clique that seems to severely limit its practicability.
We comment that the capacity bound of $O(\log n)$ messages per node per round is a natural choice: it is small enough to ensure scalability and any smaller would require unnecessarily complicated techniques for the protocol to ensure nodes do not receive more messages than the capacity bound. 

Compared to traditional overlay network research, the Node-Capacitated Clique model has the advantage that it abstracts away the issue of designing and maintaining a suitable overlay network, for which many solutions have already been found in recent years.
Nevertheless, it is closely related to overlay networks:
every overlay network algorithm (i.e., an algorithm in which overlay edges can be established by introducing nodes to each other, and which satisfies the capacity bound of $O(\log n)$ messages) can be simulated in the Node-Capacitated Clique without any overhead.
Furthermore, any algorithm for our model can be simulated with a multiplicative $O(\log n)$ runtime overhead in the CRCW PRAM model (by assigning each processor $O(\log n)$ memory cells, and letting nodes write into randomly chosen cells of other processors), which in turn can be simulated with only $O(\log n)$ overhead by a network of constant degree \cite{Ran91}.
The Congested Clique model and its broadcast variant, on the other hand, are far more powerful (and arguably beyond what is possible in overlay networks):
Whereas in the Congested Clique a total of $\widetilde{\Theta}(n^2)$ bits can be transmitted in each round, in the Node-Capacitated Clique only $\widetilde{\Theta}(n)$ bits may be sent.
For example, the gossip problem---i.e., delivering one message from each node to every other node---can be solved in a single round in the Congested Clique, whereas the problem requires at least $\Omega(n/\log n)$ rounds in the Node-Capacitated Clique model.
Even the simple broadcast problem---i.e., delivering one message from one node to all nodes---already takes time $\Omega(\log n / \log \log n)$ in the Node-Capacitated Clique.

In this paper, we assume some edges of the network are marked as edges of an \emph{input graph} $G$, where each node knows which other nodes are its neighbors in $G$, and aim to solve \emph{graph problems} on $G$ using the power of the Node-Capacitated Clique.
Such edges can, for instance, be seen as edges of an underlying physical network, or represent relations between nodes in social networks.
Our results in that direction also turn out to be useful for some other theoretical models as well: they are relevant for \emph{hybrid networks} \cite{GHSS17} and also the \emph{$k$-machine model} for processing large scale graphs \cite{KNPR15}.

The concept of hybrid networks has just recently been considered in theory (e.g., \cite{GHSS17}).
In a hybrid network, nodes have different \emph{communication modes}:
We are given a network of cheap links of arbitrary topology that is not under the control of the nodes and may potentially be changing over time.
In addition to that, the nodes have the ability to build arbitrary overlay networks of costly links that are fully under the control of the nodes.
Cell phones, for example, can communicate in an ad-hoc fashion via their WiFi interfaces, which is for free but only has a limited range, and whose connections may change as people move.
Additionally, they may use their cellular infrastructure, which comes at a price, but remains fully under their control.
Although in the idealized setting this overlay network may form a clique, to save costs, the nodes might want to exchange only a small amount of messages of small size in each communication round.
This property is captured by the Node-Capacitated Clique.
The network of cheap links, on the other hand, can be seen as an input graph in the Node-Capacitated Clique for which the nodes want to solve a graph problem of interest.

Another interesting application of the Node-Capacitated Clique is the recently introduced \emph{$k$-machine model} \cite{KNPR15}, which was designed for the study of data center level distributed algorithms for large scale graph problems.
Here, a data center with $k$ servers is modeled as $k$ machines that are fully interconnected and capable of executing synchronous message passing algorithms.
A standard approach for the $k$-machine model is to partition the input graph in a fair way so that each machine stores a set of nodes of the input graph with their incident edges.
It is quite natural to simulate algorithms designed for the Node-Capacitated Clique model in the $k$-machine model.
Precisely, any algorithm that requires $T$ rounds in the Node-Capacitated Clique model can be simulated to take at most time $\widetilde{O}(nT/k^2)$.
\iffull{
The details of this simulation can be found in Appendix~\ref{app:simulation}.
} \else {
The details of this simulation can be found in the full version of this paper\footnote{The full version can be found on arXiv under \url{https://arxiv.org/abs/1805.07294}.}.
} \fi
To illustrate the usefulness of this simulation, we remark that the running time of the fast minimum spanning tree algorithm provided by Pandurangan et al.~\cite{PRS16} can be obtained simply by converting the algorithm we provide in this work to the $k$-machine model.

As we demonstrate in this paper, many graph problems can be solved efficiently in the Node-Capacitated Clique, which shows that many interesting problems can be solved efficiently in distributed systems based on an overlay network over a shared infrastructure as well as hybrid networks and server systems.

\subsection{Model and Problem Statement}

In the \emph{Node-Capacitated Clique model} we consider a set $V$ of $n$ computation entities that we model as nodes of a graph.
Each node has a unique identifier consisting of $O(\log n)$ bits and every node knows the identifiers of all nodes such that, on a logical level, they form a complete graph.
Note that since every node knows the identifier of every other node, the nodes also know the total number of nodes $n$.
As node identifiers are common knowledge, without loss of generality we can assume that the identifiers are from the set $\{0, 1,  \dots, n-1 \}$.

The network operates in a synchronous manner with time measured in rounds.
In every round, each node can perform an arbitrary amount of local computation and send distinct messages consisting of $O(\log n)$ bits to up to $O(\log n)$ other nodes.
The messages are received at the beginning of the next round.
A node can receive up to $O(\log n)$ messages.
If more messages are sent to a node, it receives an \emph{arbitrary} subset of $O(\log n)$ messages.
Additional messages are simply dropped by the network.

Let $G$ be an undirected graph $G=(V, E)$ with an arbitrary edge set, but the same node set as the Node-Capacitated Clique.
We aim to solve graph problems on $G$ in the Node-Capacitated Clique model.
At the beginning, each node locally knows which identifiers correspond to its neighbors in $G$, but has no further knowledge about the graph.

\subsection{Related Work}

The Congested Clique model has already been studied extensively in the past years.
Problems studied in prior work include routing and sorting \cite{Len13}, minimum spanning trees \iffull{\cite{GP16,HPPS15,Kor16,LSPP05,JN18}}\else{\cite{JN18,GP16}}\fi, subgraph detection \iffull{\cite{CKKL15,DLP12,BMRT18}}\else{\cite{CKKL15,BMRT18}}\fi, shortest paths \cite{BKKL17,CKKL15}, local problems \cite{CPS17,HP14,HPS14}, minimum cuts \cite{JN18b, GN18}, and problems related to matrix multiplication \cite{CKKL15, Leg16}.
Some of the upper bounds are astonishingly small, such as the constant-time upper bound for routing and sorting and for the computation of a minimum spanning tree, demonstrating the power of the Congested Clique model.

While almost no non-trivial lower bounds exist for the Congested Clique model (due to their connection to circuit complexity \cite{DKO14}), various lower bounds have already been shown for the more general CONGEST model \iffull{\cite{DHKK11,FHW12,KP98,LP13,Nan14,PR00,Elk04}}\else{(see, e.g., \cite{DHKK11, Nan14, KP98} and the references therein)}\fi.
As pointed out in \cite{KS17}, the reductions used in these lower bounds usually boil down to constructing graphs with bottlenecks, that is, graphs where large amounts of information have to be transmitted over a small cut.
As this is not the case for the Node-Capacitated Clique, the lower bounds are of limited use here.
Therefore, it remains interesting to determine upper and lower bounds for the Node-Capacitated Clique.

Hybrid networks have only recently been studied in theory.
An example is the hybrid network model proposed in \cite{GHSS17}, which allows the design of much faster distributed algorithms for graph problems than with a classical communication network.
Also, the problem of finding short routing paths with the help of a hybrid network approach has been considered \cite{JKSS18}.
A priori, these papers do not assume that the nodes are completely interconnected, so extra measures have to be taken to build up appropriate overlays.
Abstracting from that problem, the Node-Capacitated Clique allows one to focus on how to efficiently exchange information in order to solve the given problems.

The graph problems considered in this paper have already been extensively studied in many different models.
In the CONGEST model, for example, a breadth-first search can trivially be performed in time $O(D)$.
There exists an abundance of algorithms to solve the maximal independent set, the maximal matching, and the coloring problem in the CONGEST model (see, e.g., \cite{BEP16} for a comprehensive overview).
Computing a minimum spanning tree has also been well studied in that model (see, e.g., \iffull{\cite{Elk04,Elk06,PR00,DHKK11}}\else{\cite{Elk06,DHKK11}}\fi).
Whereas the running times of the above-mentioned algorithms depend on $D$ and additional polylogarithmic factors, there have also been proposed algorithms to solve such problems more efficiently in graphs with small \emph{arboricity} \cite{BE09, BE10, BE11, BEP16, KP11, KP12}.
Notably, Barenboim and Khazanov \cite{BK18} show how to solve a variety of graph problems in the Congested Clique efficiently given such graphs, e.g., compute an $O(a)$-orientation in time $O(\log a)$, an MIS in time $O(\sqrt{a})$, and an $O(a)$-coloring in time $O(a^\varepsilon)$, where $a$ is the arboricity of the given graph.
The algorithms make use of the \emph{Nash-Williams forest-decomposition} technique \cite{Nas64}, which is one of the key techniques used in our work.

\subsection{Our Contribution}

\begin{table}[t]
  \centering
  \renewcommand{\arraystretch}{1}
  \begin{tabular*}{\columnwidth}{@{}p{0.45\columnwidth} @{}p{0.4\columnwidth} @{}p{0.13\columnwidth} }
    \toprule
    Problem & Runtime & Section\\ \midrule
    Minimum Spanning Tree & $O(\log^4 n)$ & \ref{sec:mst}\\
    BFS Tree & $O((a + D + \log n)\log n)$ & \ref{sec:bfs} \\
    Maximal Independent Set & $O((a + \log n)\log n)$ & \ref{sec:mis} \\
    Maximal Matching & $O((a + \log n)\log n)$ & \ref{sec:matching} \\
    $O(a)$-Coloring &  $O((a + \log n)\log^{3/2}n)$ & \ref{sec:coloring} \\
    \bottomrule
  \end{tabular*}
  \caption{
    An overview of our results.
    We use $a$ for arboricity and $D$ to denote the diameter of the given graph.
  }
  \label{tab:ourContribution}
  \vspace{-20px}
\end{table}
We present a set of basic communication primitives and then show how they can be applied to solve certain graph problems (see Table~\ref{tab:ourContribution} for an overview).
Note that for many important graph families such as planar graphs, our algorithms have polylogarithmic runtime (except when depending on the diameter $D$).

Although many of our algorithms rely on existing algorithms from literature, we point out that most of these algorithms cannot be executed in the Node-Capacitated Clique in a straight-forward fashion.
The main reason for that is that high-degree nodes cannot efficiently communicate with all of their neighbors \emph{directly} in our model, which imposes significant difficulties to the application of the algorithms.
To overcome these difficulties, we present a set of basic tools that still allow for efficient communication, and combine it with variations of well-known algorithms and novel techniques.
Notably, we present an algorithm to compute an \emph{orientation} of the input graph $G$ with arboricity $a$, in which each edge gets assigned a direction, ensuring that the outdegree of any node is at most $O(a)$.
The algorithm is later used to efficiently construct \emph{multicast trees} to be used for communication between nodes.
Achieving this is a highly nontrivial task in our model and requires a combination of techniques, ranging from aggregation and multicasting to shared randomness and coding techniques.
We believe that many of the presented ideas might also be helpful for other applications in the Node-Capacitated Clique.

Although proving lower bounds for the presented problems seems to be a highly nontrivial task, we believe that many problems require a running time linear in the arboricity.
For the MIS problem, for example, it seems that we need to communicate at least $1$ bit of information about every edge (typically in order for a node of the edge to learn when the edge is removed from the graph because the other endpoint has joined the MIS).
However, explicitly proving such a lower bound in this model seems to require more than our current techniques in proving multi-party communication complexity lower bounds.

\section{Preliminaries}

In this section, we first give some basic definitions and describe a set of communication primitives needed throughout the paper.

\subsection{Basic Definitions and Notation}

Let $G = (V,E)$ be an undirected graph.
The \emph{neighborhood} of a node $u$ is defined as $N(u) = \{v \in V \mid \{u,v\} \in E$\}, and $d(u) = |N(u)|$ denotes its degree.
With $\Delta = \max_{u \in V}(d(u))$ we denote the maximum degree of all nodes in $G$, and $\overline{d} = \sum_{u \in V}d(u)/n$ is the average degree of all nodes.
The \emph{diameter} $D$ of $G$ is the maximum length of all shortest paths in $G$.

The \emph{arboricity} $a$ of $G$ is the minimum number of forests into which its edges can be partitioned.
Since the edges of any graph with maximum degree $\Delta$ can be greedily assigned to $\Delta$ forests, $a \le \Delta$.
Furthermore, since the average degree of a forest is at most $2$, and the edges of $G$ can be partitioned into $a$ forests, $\overline{d} \le 2a$.
Graphs of many important graph families have small arboricity although their maximum degree might be unbounded.
For example, a tree obviously has arboricity $1$.
Nash-Williams \cite{Nas64} showed that the arboricity of a graph $G$ is given by $\max_{H \subseteq G}(m_H/(n_H - 1))$, where $H \subseteq G$ is a subgraph of $G$ with at least two nodes and $n_H$ and $m_H$ denote the number of nodes and edges of $H$, respectively.
Therefore, any planar graph, which has at most $6n - 3$ edges, has arboricity at most $3$.
In fact, any graph with \emph{genus} $g$, which is the minimum number of handles that must be added to the plane to embed the graph without any crossings, has arboricity $O(\sqrt{g})$ \cite{BE10}.
Furthermore, it is known that the family of graphs that \emph{exclude a fixed minor} \cite{DL98} and the family of graphs with bounded \emph{treewidth} \cite{DW07} have bounded arboricity.

An \emph{orientation} of $G$ is an assignment of \emph{directions} to each edge, i.e., for every $\{u,v\} \in E$ either $u \rightarrow v$ ($u$ is directed to $v$) or $v \rightarrow u$ ($v$ is directed to $u$).
If $u \rightarrow v$, then $u$ is an \emph{in-neighbor} of $v$ and $v$ is an \emph{out-neighbor} of $u$.
For $u \in V$ define $N_{in}(u) = \{ v \in V \mid v \rightarrow u\}$ and $N_{out}(u) = \{ v \in V \mid u \rightarrow v\}$.
The \emph{indegree} of a node $u$ is defined as $d_{in}(u) = |N_{in}(u)|$ and its \emph{outdegree} is $d_{out}(u) = |N_{out}(u)|$.
A \emph{$k$-orientation} is an orientation with maximum outdegree $k$. For a graph with arboricity $a$, there always exists an $a$-orientation: we root each tree of every forest arbitrarily and direct every edge from child to parent node.

To allow each node to efficiently gather information sent to it by other nodes, our communication primitives make heavy use of \emph{aggregate functions}.
An aggregate function $f$ maps a multiset $S=\{x_1,\ldots,x_N\}$ of input values to some value $f(S)$.
For some functions $f$ it might be hard to compute $f(S)$ in a distributed fashion, so we will focus on so-called \emph{distributive aggregate functions}:
An aggregate function $f$ is called distributive if there is an aggregate function $g$ such that for any multiset $S$ and any partition $S_1,\ldots,S_\ell$ of $S$, $f(S)=g(f(S_1),\ldots,f(S_\ell))$.
Classical examples of distributive aggregate functions are MAX, MIN, and SUM.

\iffull{
Our algorithms make heavy use of randomized strategies.
To show that the correctness and runtime of the algorithms hold with high probability (w.h.p.)\footnote{We say an event holds \emph{with high probability}, if it holds with probability at least $1 - 1/n^c$ for any fixed constant $c > 0$.}, we use a generalization of the \emph{Chernoff bound} in \cite{SSS95} (Theorem 2):

\begin{lemma}\label{lem:generalChernoffBound}
  Let $X_1,\ldots,X_n$ be $k$-wise independent random variables with $X_i \in [0,b]$ and let $X = \sum_{i=1}^n X_i$.
  Then it holds for all $\delta \ge 1$, $\mu \ge E[X]$, and $k \ge \lceil \delta\mu \rceil$
  \[
    \Pr[X \ge (1+\delta)\mu] \le e^{-\min[\delta^2, \delta]\cdot \mu /(3b)}.
  \]
\end{lemma}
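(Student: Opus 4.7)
The statement is a standard Chernoff bound adapted to the $k$-wise independent setting, and my plan is to follow the polynomial-moment approach of Schmidt, Siegel, and Srinivasan. First I would rescale via $X_i \mapsto X_i/b$, which sends $\mu$ to $\mu/b$ and preserves the hypothesis $k \ge \lceil \delta\mu \rceil$; this reduces matters to the normalized case $X_i \in [0,1]$ with target bound $\Pr[X \ge (1+\delta)\mu] \le e^{-\min[\delta^2,\delta]\mu/3}$, and the factor $b$ in the denominator of the final exponent appears exactly through this rescaling.

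The central observation is that every polynomial $P(X_1,\ldots,X_n)$ of total degree at most $k$ expands as a sum of monomials, each involving at most $k$ distinct variables, so $E[P]$ depends only on the joint distributions of $k$-tuples of the $X_i$. Under the $k$-wise independence hypothesis, $E[P]$ therefore takes exactly the same value it would under full mutual independence. My plan is to exploit this by choosing a degree-$k$ polynomial that (a) upper bounds the indicator $\mathbf{1}[X \ge (1+\delta)\mu]$ pointwise on $[0,1]^n$, and (b) has small expectation in the fully independent setting. The natural candidate, mirroring the classical Chernoff derivation, is the degree-$k$ Taylor truncation of $e^{t(X-(1+\delta)\mu)}$ for an optimally chosen $t>0$; an alternative is the elementary symmetric polynomial $e_k(X_1,\ldots,X_n)$, which in the Bernoulli case coincides with $\binom{X}{k}$.

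Once such a $P$ is in place, I would bound $E[P]$ by the standard fully independent argument: the exponential moment satisfies $E[e^{tX}] \le \prod_i E[e^{tX_i}] \le e^{\mu(e^t-1)}$, and optimizing over $t$ (taking $t=\ln(1+\delta)$) yields the classical Chernoff exponent $\bigl(e^\delta/(1+\delta)^{1+\delta}\bigr)^\mu$, which is at most $e^{-\delta^2\mu/3}$ for $\delta \le 1$ and at most $e^{-\delta\mu/3}$ for $\delta \ge 1$. The hypothesis $k \ge \lceil \delta\mu \rceil$ enters precisely at this step: it ensures that the Taylor tail $\sum_{j>k} t^j X^j/j!$ is negligible at the optimal $t$, so the truncated polynomial inherits the full Chernoff estimate and Markov's inequality delivers the claimed tail bound.

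I expect the main obstacle to lie in the second step, specifically verifying the pointwise inequality $\mathbf{1}[X \ge (1+\delta)\mu] \le P(X_1,\ldots,X_n)$ for arbitrary $X_i \in [0,1]$ rather than just for Bernoulli values. For $\{0,1\}$-valued $X_i$ the inequality is classical, reducing essentially to Markov applied to $\binom{X}{k}$ together with Maclaurin's inequality on elementary symmetric polynomials; extending it to genuinely continuous $X_i \in [0,1]$ requires a more delicate convexity argument, and it is this extension that is ultimately responsible for the specific constant $3$ in the exponent and the precise form of the ceiling $\lceil \delta\mu \rceil$ appearing in the hypothesis.
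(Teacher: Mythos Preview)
The paper does not prove this lemma at all: it is quoted verbatim as a generalization of the Chernoff bound and attributed to Theorem~2 of Schmidt, Siegel, and Srinivasan~\cite{SSS95}, with no argument given. So there is no ``paper's own proof'' to compare against; the lemma is imported as a black box.

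Your sketch is broadly in the spirit of the cited reference, which does proceed via polynomial moments and the observation that degree-$k$ moments are determined by $k$-wise marginals. A couple of points are worth tightening if you actually carry it out. First, the Taylor truncation of $e^{t(X-(1+\delta)\mu)}$ is \emph{not} automatically a pointwise upper bound for the indicator $\mathbf{1}[X\ge(1+\delta)\mu]$ (truncating an exponential can undershoot), so that candidate for $P$ does not work as stated; the argument in \cite{SSS95} instead applies Markov's inequality to the $k$-th moment (or, equivalently, to symmetric polynomials / falling factorials), which is the ``alternative'' you mention. Second, your rescaling step is not quite clean: sending $X_i\mapsto X_i/b$ turns the hypothesis $k\ge\lceil\delta\mu\rceil$ into a statement about $\mu$, not $\mu/b$, so for $b<1$ the reduced hypothesis is weaker than what the $[0,1]$ case needs; this is harmless in the paper's applications (where effectively $b\ge 1$), but the reduction as written does not preserve the hypothesis in general.
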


} \fi

\subsection{Communication Primitives} \label{sec:primitives}

Our algorithms make heavy use of a set of communication primitives, which are presented in this section.
Whereas the \emph{Aggregate-and-Broadcast algorithm} will be used as a general tool for aggregation and synchronization purposes, the other primitives are used to allow nodes to send and receive messages to and from specific sets of nodes associated with them.
Note that a node is not able to send or receive a large set of messages in few rounds; the center of a star, for example, would need linear time to deliver messages to all of its neighbors.
If, however, the number of \emph{distinct} messages a node has to send is small, or if messages destined at a node can be \emph{combined} using an aggregate function, then messages can be efficiently delivered using a randomized routing strategy.
Due to space limitations, we only present the high-level ideas of our algorithms and state their results.
\iffull{
The full description and all proofs can be found in Appendix~\ref{app:primitives}.
}
\else{
The full description and all proofs can be found in the full version of this paper.
} \fi

\paragraph*{Butterfly Simulation.}
To distribute local communication load over all nodes of the network, our algorithms rely on an emulation of a \emph{butterfly network}.
Formally, for $d\in \mathbb{N}$, the $d$-dimensional butterfly is a graph with node set $[d+1] \times [2^d]$, where we denote $[k] = \{0, \ldots, k-1\}$, and an edge set $E_1 \cup E_2$ with
\begin{align*}
  E_1 = &\{\{(i,\alpha),(i+1,\alpha)\} \mid i\in [d], \; \alpha\in [2^d] \}, \\
  E_2 = &\{\{(i,\alpha),(i+1,\beta)\} \mid i\in [d], \alpha,\beta \in [2^d], \\
        &\text{$\alpha$ and $\beta$ differ only at the $i$-th bit} \}.
\end{align*}
The node set $\{(i,j) \mid j \in [2^d]\}$ represents \emph{level $i$} of the butterfly, and node set $\{(i,j) \mid i \in [d+1]\}$ represents \emph{column $j$} of the butterfly.
In our algorithms, every node $u \in V$ with identifier $i \le 2^d - 1$ emulates the complete column $i$ of the $d$-dimensional butterfly with $d = \lfloor \log n \rfloor$.
Since $u$ knows the identifiers of all other nodes, it knows exactly which nodes emulate its neighbors in the butterfly.
As every node in the Node-Capacitated Clique can send and receive $O(\log n)$ messages in each round, and the butterfly is of constant degree, a communication round in the butterfly can be simulated in a single round in our model.

\paragraph*{Aggregate-and-Broadcast Problem.}
We are given a distributive aggregate function $f$ and a set $A \subseteq V$, where each member of $A$ stores exactly one input value.
The goal is to let every node learn $f(\text{inputs of } A)$.

\begin{theorem}\label{thm:aggregateAndBroadcast}
  There is an \emph{Aggregate-and-Broadcast Algorithm} that solves any Aggregation Problem in time $O(\log n)$.
\end{theorem}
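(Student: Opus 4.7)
The plan is to use the butterfly emulation already set up in the preliminaries as the workhorse for both the reduction and the dissemination phase. Let $d=\lfloor \log n\rfloor$, and recall that every node $u$ with identifier less than $2^d$ locally emulates the entire column $u$ of the $d$-dimensional butterfly, so a single communication round in the butterfly costs only $O(1)$ rounds in the Node-Capacitated Clique, and the total degree each real node has to serve across its $d{+}1$ emulated copies is $O(\log n)$, well within the capacity bound.

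First I would pre-process the inputs. Each node in $A$ that already emulates a column simply places its input at its level-$0$ copy $(0,\alpha)$; any node in $A$ whose identifier exceeds $2^d-1$ forwards its value in one preliminary round to the column-node with identifier $\alpha \bmod 2^d$, which then combines the two values using $g$ before the aggregation begins. Nodes not in $A$ initialize their level-$0$ slot with a special identity marker $\bot$, and we extend $g$ so that $g(\bot,x)=g(x,\bot)=x$; distributivity of $f$ ensures this augmented combiner computes $f$ correctly on the nonempty inputs.

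The core of the algorithm is the standard hypercubic reduction up the butterfly. For $i=0,1,\dots,d-1$ in sequence, each emulated node $(i,\alpha)$ sends its current aggregate along both outgoing edges, and each node $(i+1,\alpha)$ applies $g$ to the two incoming values from $(i,\alpha)$ and $(i,\alpha\oplus 2^i)$. A direct induction on $i$ shows that the value stored at $(i,\alpha)$ after step $i$ is the $g$-aggregate of the level-$0$ inputs of the $2^i$ columns whose identifiers share bits $i,\dots,d-1$ with $\alpha$; thus after $d=O(\log n)$ rounds, every level-$d$ copy $(d,\alpha)$ holds $f(\text{inputs of }A)$. Since each real node $u$ with $u<2^d$ emulates column $u$, it already knows the answer locally. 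One final round takes care of the at most $n-2^d < 2^d$ nodes without a column: map each such node $j\geq 2^d$ to the column-node $j-2^d$, from which it fetches the aggregate.

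The only delicate point is confirming that the simulation stays within the per-round capacity of $O(\log n)$ messages per node: in each butterfly step only two of a column's emulated nodes are active (the ones at the current level), and the matching pre- and post-processing rounds form a permutation of $[n]$, so no real node is asked to send or receive more than $O(1)$ messages in any single round. Summing the $d$ reduction rounds with the two book-keeping rounds yields the claimed $O(\log n)$ bound, and correctness is immediate from the inductive invariant together with the distributivity of $f$.
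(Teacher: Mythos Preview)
Your proof is correct and rests on the same butterfly emulation as the paper, but the routing pattern differs in a small yet genuine way. The paper first funnels all inputs along the unique butterfly paths to a \emph{single} root $(d,0)$, combining with $g$ whenever two values meet, and then runs a separate broadcast phase that pushes the aggregate back from the root through the levels to every node. You instead perform an allreduce: at each step every level-$i$ node fans out to both level-$(i{+}1)$ neighbors, so after $d$ steps every level-$d$ copy already holds $f(\text{inputs of }A)$ and the real node emulating that column learns the answer without a second sweep. Both are standard hypercubic primitives and give the same $O(\log n)$ bound; your version trades the explicit broadcast phase for sending two copies per step instead of one.

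One minor technical remark: introducing an identity $\bot$ works for the concrete functions the paper cares about (MAX, MIN, SUM), but distributivity alone does not guarantee an identity exists. The paper avoids this issue by having only nodes in $A$ inject values and letting intermediate butterfly nodes simply forward whatever (possibly single) value they have received; you could adopt the same convention instead of relying on $\bot$.
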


In principle, the algorithm first aggregates all values from the topmost (i.e., level $0$) to the bottommost level (i.e., level $d$) of the butterfly, and then broadcasts the result upwards to all nodes in the butterfly.

\paragraph*{Aggregation Problem.}
We are given a distributive aggregate function $f$ and a set of \emph{aggregation groups} $\mathcal{A} = \{A_1, \ldots , A_N\}$, $A_i \subseteq V$, $i \in \{1, \ldots, N\}$ with targets $t_1,\ldots,t_N \in V$, where each node holds exactly one \emph{input value} $s_{u,i}$ for each aggregation group $A_i$ of which it is a \emph{member}, i.e., $u \in A_i$.\footnote{We only enumerate the aggregation groups from $1, \ldots, N$ to simplify the presentation of the algorithm.
Actually, we only require each aggregation group to be uniquely identified, which can easily be achieved for all algorithms in this paper.}
Note that a node may be member or target of multiple aggregation groups.
The goal is to aggregate these input values so that eventually $t_i$ knows $f(s_{u,i} \mid u \in A_i)$ for all $i$.
We define $L=\sum_{i=1}^N |A_i|$ to be the \emph{global load} of the Aggregation Problem, and the \emph{local load} $\ell = \ell_1 + \ell_2$, where $\ell_1 = \max_{u \in V} |\{ i \in \{1,\ldots,N\} \mid u \in A_i\}|$ and $\ell_2 = \max_{u \in V} |\{ i \in \{1,\ldots,N\} \mid u = t_i\}|$.
Whereas the global load captures the total number of messages that need to be processed, $\ell_1$ and $\ell_2$ indicate the work required for inserting messages into the butterfly, or sending aggregates from butterfly nodes to their targets, respectively.
We require that every node knows the identifier and target of all aggregation groups it is a member of, and an upper bound $\hat{\ell_2}$ on $\ell_2$.

\begin{theorem}\label{thm:aggregationProblem}
  There is an \emph{Aggregation Algorithm} that solves any Aggregation Problem in time $O(L/n + (\ell_1 + \hat{\ell_2})/\log n + \log n)$, w.h.p.\iffull{}\else{\footnote{We say an event holds \emph{with high probability} (w.h.p.), if it holds with probability at least $1 - 1/n^c$ for any fixed constant $c > 0$.}} \fi
\end{theorem}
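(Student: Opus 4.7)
The plan is to drive the computation through the emulated $d$-dimensional butterfly with $d = \lfloor \log n \rfloor$, organizing the work into three pipelined phases: injection, routing-with-combining, and delivery. All nodes first agree on a $\Theta(\log n)$-wise independent hash $h$ from group identifiers to bottom-level columns $[2^d]$; its $O(\log n)$-bit seed is disseminated by one invocation of Theorem~\ref{thm:aggregateAndBroadcast} in $O(\log n)$ rounds. Each physical node $u$ tags every owned input $s_{u,i}$ with the triple $(i, t_i, h(i))$. To avoid serializing the (up to $\ell_1$) inputs owned by $u$ at a single level-$0$ cell of its column, $u$ inserts them at uniformly random level-$0$ columns (a Valiant-style source randomization), dispatching at the full clique rate $O(\log n)$ per round. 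Injection therefore finishes in $O(\ell_1 / \log n)$ rounds, and by Chernoff every level-$0$ cell receives only $O(L/n + \log n)$ messages w.h.p.

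In the routing phase, each tagged message follows the canonical bit-fixing butterfly path from its random injection cell to $(d, h(i))$. Whenever a butterfly cell holds two buffered messages with the same group identifier, the cell merges them using the combiner $g$ supplied by distributivity of $f$, so combining never grows the message count. To bound per-edge congestion, observe that each of the $L$ messages traverses exactly $d$ butterfly edges and the butterfly has $\Theta(n d)$ edges in total, giving an average edge load of $L/n$. Since both the injection column and $h(i)$ are chosen via $\Theta(\log n)$-wise independent hashes, the indicators ``message $(s, i)$ uses edge $e$'' are sufficiently independent; applying Lemma~\ref{lem:generalChernoffBound} and a union bound over all $O(n \log n)$ edges gives a per-edge load of $O(L/n + \log n)$ w.h.p. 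Each physical node emulates $O(\log n)$ cells within its $O(\log n)$ per-round bandwidth, so each cell can clear $\Theta(1)$ messages per round on each outgoing edge; pipelining through depth $d$ thus costs $O(L/n + \log n)$ rounds.

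After routing, the aggregate for $A_i$ sits at cell $(d, h(i))$, emulated by the physical node with id $h(i)$, which forwards it toward $t_i$; since $h$ is $\Theta(\log n)$-wise independent, no emulator holds more than $O(L/n + \log n)$ aggregates w.h.p., and each target receives at most $\ell_2 \leq \hat{\ell_2}$ results at rate $O(\log n)$, so delivery completes in $O(L/n + \hat{\ell_2}/\log n + \log n)$ rounds. Summing the phases yields the advertised $O(L/n + (\ell_1 + \hat{\ell_2})/\log n + \log n)$ bound. The main obstacle is the congestion analysis in the middle phase: messages from the same group share the random destination $h(i)$, so naively applying a Chernoff bound to per-message edge indicators fails because they are positively correlated. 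The plan is to pair the Valiant source randomization with within-cell combining so that each group contributes at most a single $\{0, 1\}$ indicator to any given edge, and then invoke Lemma~\ref{lem:generalChernoffBound} on the independent group-level hash choices, using $\Theta(\log n)$-wise independence to supply enough moments for the high-probability tail.
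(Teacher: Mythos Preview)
Your three-phase architecture (random injection at level $0$, bit-fixing routing with combining toward $h(i)$ at level $d$, delivery to $t_i$) matches the paper's Preprocessing/Combining/Postprocessing structure, and your congestion argument for the middle phase is essentially the paper's: bound $\Pr[X_A=1]$ per group, sum to get $\E[X]\le L/n$, and apply Chernoff over independent group-level choices. That part is fine.

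The gap is in turning the edge-congestion bound into a routing-time bound. You write that ``each cell can clear $\Theta(1)$ messages per round on each outgoing edge; pipelining through depth $d$ thus costs $O(L/n+\log n)$ rounds.'' But an edge-congestion bound $C$ does not by itself yield an $O(C+D)$ schedule in a leveled network: a cell at level $j$ cannot forward a packet before that packet has arrived from level $j-1$, and without a contention-resolution rule one can construct instances where greedy forwarding stalls far beyond $C+D$. The paper does not rely on a bare pipelining claim; it fixes a specific protocol---each packet from group $A_i$ receives a random rank $\rho(i)$, and on contention the smallest rank advances---and proves Theorem~\ref{thm:combining1} via a delay-sequence argument, obtaining $O(C + D\log d + \log n)$ rounds. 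The random ranks are what make the delay-sequence counting go through (each group appears at most once in a sequence, and the probability a sequence is active is $K^{-(s+1)}$). Your proposal specifies neither a tie-breaking rule nor a substitute for this analysis, so the middle phase is currently unsupported. Either invoke the random-rank protocol and its delay-sequence bound, or cite an $O(C+D)$ scheduling result for leveled networks and verify it survives the in-network combining.
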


From a very high level, the algorithm works as follows.
First, packets are sent to random nodes of the topmost level of the butterfly.
Then, packets belonging to the same aggregation group $A_i$ are routed to an intermediate target $h(i)$ in the bottommost level of the butterfly using a (pseudo-)random hash function $h$ and a variant of the random rank routing protocol \cite{Ale82, Upf82}.
Whenever two packets belonging to the same aggregation group collide on a butterfly node, they are combined using the function $f$.
Finally, the result of aggregation group $A_i$ is sent from its intermediate target to its actual target $t_i$.

The intermediate steps of the algorithm are synchronized using a variant of the Aggregate-and-Broadcast algorithm:
Every node delays its participation in an aggregation until having finished the current step.
Once the aggregation finishes, all nodes become informed about a common round to start the next step.
Termination of the routing protocol can easily be determined by passing down tokens in the butterfly.
We also use the same techniques to achieve synchronization for all other algorithms in this paper without explicitly mentioning it.

Note that common hash functions require \emph{shared randomness}.
Although in the remainder of this paper we assume that all hash functions behave like perfect random functions, it can be shown that it suffices to use $\Theta(\log n)$-wise independent hash functions (see, e.g., \cite{CRSW13} and the references therein):
Whenever we aim to show that the outcome of a random experiment deviates from the expected value by at most $O(\log n)$, w.h.p.,
\iffull{we can immediately use Lemma~\ref{lem:generalChernoffBound}; }
\else{we can use a generalization of the \emph{Chernoff bound} in \cite{SSS95} (Theorem 2); }\fi
if the deviation we aim to show is higher, we can partition events in a suitable way so that we only need $\Theta(\log n)$-wise independence for each subset of events, and the sum of the deviations does not exceed the overall desired deviation.
To agree on such hash functions, all nodes have to learn $\Theta(\log^2 n)$ random bits.
This can be done by letting the node with identifier $0$ broadcast $\Theta(\log n)$ messages, each consisting of $\log n$ bits, to all other nodes using the butterfly.

\paragraph*{Multicast Tree Setup Problem.}
We are given a set of \emph{multicast groups} $\mathcal{A} = \{A_1,\ldots,A_N\}$, $A_i \subseteq V$, with sources $s_1, \ldots, s_N \in V$ such that each node is source of at most one multicast group (but possibly member of multiple groups).
The goal is to set up a \emph{multicast tree} $T_i$ in the butterfly for each $i \in \{1, \ldots, N\}$ with root $h(i)$, which is a node uniformly and independently chosen among the nodes of the bottommost level of the butterfly, and a unique and randomly chosen leaf $l(i,u)$ in the topmost level for each $u \in A_i$.
Let $L = \sum_{i=1}^N |A_i|$, $\ell = \max_{u \in V} |\{ i \in \{1, \ldots, N\} \mid u \in A_i \}|$ and define the \emph{congestion} of the multicast trees to be the maximum number of trees that share the same butterfly node.
We require that each node $u \in V$ knows the identifier and source of all multicast groups it is a member of.

\begin{theorem} \label{thm:treesetup}
  There is a \emph{Multicast Tree Setup Algorithm} that solves any Multicast Tree Setup Problem in time $O(L/n + \ell/\log n + \log n)$, w.h.p.
  The resulting multicast trees have congestion $O(L/n + \log n)$, w.h.p.
\end{theorem}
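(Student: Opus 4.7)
The plan is to reduce the Multicast Tree Setup Problem to the Aggregation Problem of Theorem~\ref{thm:aggregationProblem} and then analyze the congestion of the resulting trees separately. Using the $\Theta(\log^2 n)$ shared random bits distributed once at the start of the execution, every node can locally evaluate two pseudorandom hash functions: $h(i)$ selects a uniformly random butterfly node on the bottommost level (the root of $T_i$), and $l(i,u)$ selects a uniformly random butterfly node on the topmost level (the leaf of $T_i$ associated with member $u$). Since every member $u \in A_i$ knows the identifier $i$ and its source $s_i$, the value $l(i,u)$ can be computed locally without any communication.

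To build the trees, each member $u \in A_i$ creates a single token labelled by $i$ whose initial position is $l(i,u)$ and whose destination is $h(i)$. We then invoke the Aggregation Algorithm with the ``set union of active group identifiers'' as aggregate function $f$. Whenever tokens for group $i$ arrive at a butterfly node $v$ from one or more in-neighbors during the down-pass, $v$ locally records each such in-neighbor as a child of $v$ in $T_i$ and forwards a single combined token for $i$ further toward $h(i)$. Because every node is the source of at most one group we have $\ell_2 \le 1$ and hence $\hat{\ell_2}=1$, while the global and local loads match those of the input. By construction, the parent pointers recorded along the union of all routing paths of group $i$ form a tree rooted at $h(i)$ with leaves exactly $\{l(i,u) : u \in A_i\}$, so Theorem~\ref{thm:aggregationProblem} immediately yields the claimed runtime $O(L/n + \ell/\log n + \log n)$.

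For the congestion bound, fix a butterfly node $v$ at level $j$. Since routing paths in the butterfly are uniquely determined by source and destination, tree $T_i$ passes through $v$ iff at least one of the $|A_i|$ pairs $(l(i,u), h(i))$ has its routing path through $v$. A specific such path passes through $v$ exactly when the first $j$ bits of $l(i,u)$ and the last $d-j$ bits of $h(i)$ take prescribed values, which occurs with probability $2^{-d} = 1/n$. A union bound over $u \in A_i$ gives probability at most $|A_i|/n$, so the expected number of trees through $v$ is at most $\sum_i |A_i|/n = L/n$. A Chernoff-type concentration inequality (applied exactly as in the analysis of the Aggregation Algorithm, using $\Theta(\log n)$-wise independence of the hash values and, if needed, splitting the groups into chunks whose expected contribution is $\Theta(\log n)$ and summing deviations) together with a union bound over the $O(n\log n)$ butterfly nodes then yields congestion $O(L/n + \log n)$ w.h.p.

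The hard part is the congestion analysis: beyond computing the expectation, concentration must be invoked using only $\Theta(\log n)$-wise independent hash values since full independence is unaffordable, and one must handle both the ``small-group'' regime, where $|A_i|/n \ll 1$ and many groups contribute tiny independent Bernoullis, and the ``large-group'' regime, where a single group can easily cover $v$ and the bound $\min(1,|A_i|/n)$ is what keeps the expectation tight. Once the congestion bound is in hand, the constant degree of the butterfly ensures that each butterfly node stores at most two child-pointers per tree passing through it, so both the persistent state and the per-round bandwidth of the emulating NCC-node remain within the $O(\log n)$ capacity of the model, and the remaining correctness and efficiency claims are inherited directly from Theorem~\ref{thm:aggregationProblem}.
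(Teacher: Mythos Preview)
Your proposal is correct and follows essentially the same approach as the paper: run the Aggregation Algorithm on one token per membership, record the in-edges traversed by each group's tokens as the edges of $T_i$, and read off both the runtime and the congestion from the routing analysis. The paper's own proof is in fact terser than yours---it simply states that the theorem ``follows from the analysis of the Aggregation Algorithm''---so your explicit node-congestion computation (expected $\le L/2^d$ trees per butterfly node, then Chernoff across groups) is just a spelled-out version of the edge-congestion lemma the paper already proved for the Combining Phase. Two cosmetic remarks: your justification of $\ell_2\le 1$ silently assumes you set the aggregation targets to be the multicast sources $s_i$ (which is fine, just worth stating), and your ``set-union of group identifiers'' is not literally a distributive aggregate of $O(\log n)$-bit values---the paper sidesteps this by using an arbitrary aggregate function and recording tree edges as a side effect of the routing rather than in the aggregate value itself.
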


The algorithm shares many similarities with the Aggregation Algorithm; in fact, the multicast trees stem from the paths taken by the packets during an aggregation.
Alongside the aggregation, every butterfly node $u$ records for every $i \in \{1,\ldots,N\}$ all edges along which packets from group $A_i$ arrived during the routing towards $h(i)$, and declares them as edges of $T_i$.

\paragraph*{Multicast Problem.}
Assume we have constructed multicast trees for a set of multicast groups $\mathcal{A} = \{A_1,\ldots,A_N\}$, $A_i \subseteq V$, with sources $s_1, \ldots, s_N \in V$ such that each node is source of at most one multicast group.
The goal is to let every source $s_i$ send a message $p_i$ to all nodes $u \in A_i$.
Let $C$ be the congestion of the multicast trees and $\ell = \max_{u \in V} |\{ i \in \{1,\ldots,N\} \mid u \in A_i\}|$.
We require that the nodes know an upper bound $\hat{\ell}$ on $\ell$.

\begin{theorem}\label{thm:multicastProblem}
  There is a \emph{Multicast Algorithm} that solves any Multicast Problem in time $O(C + \hat{\ell}/\log n + \log n)$, w.h.p.
\end{theorem}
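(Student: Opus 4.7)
The plan is to execute the multicast in three natural phases that mirror the structure of the trees returned by Theorem~\ref{thm:treesetup}, using the fact that the butterfly emulation effectively gives each butterfly node $O(1)$ amortized communication capacity per round (every NCC node emulates a column of $O(\log n)$ butterfly nodes and has $O(\log n)$ send/receive budget).

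First, every source $s_i$ sends $p_i$ directly over the Node-Capacitated Clique to the NCC node emulating the root $h(i)$, which $s_i$ can compute from the shared hash function. Because each node is the source of at most one multicast group and the roots $h(i)$ were chosen independently and uniformly at random among the bottom-level columns, a standard balls-into-bins / Chernoff argument shows that no NCC node receives more than $O(\log n)$ such messages w.h.p. Hence this phase finishes in $O(1)$ rounds.

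Second, each $p_i$ is pipelined upward along $T_i$ from its root to all of its leaves using a local store-and-forward schedule (for instance, FIFO per outgoing butterfly edge). Every butterfly edge carries at most $C$ distinct messages during the phase, since any butterfly node lies on at most $C$ trees and in each tree contributes only a constant number of outgoing edges (the butterfly has constant degree). Because each root-to-leaf path has length $O(\log n)$ and each butterfly node can forward $O(1)$ messages per round, the classical depth-plus-congestion argument yields completion in $O(C + \log n)$ rounds. Finally, each leaf $l(i,u)$ delivers $p_i$ to $u$ by a single NCC message; since $u$ is a member of at most $\hat{\ell}$ groups and has $O(\log n)$ receive capacity per round, this phase costs $O(\hat{\ell}/\log n)$ rounds. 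Synchronization between the three phases is handled via the aggregate-and-broadcast primitive (Theorem~\ref{thm:aggregateAndBroadcast}), adding only $O(\log n)$ to the total. Summing the three phases gives the claimed $O(C + \hat{\ell}/\log n + \log n)$ bound, w.h.p.

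The main obstacle I anticipate lies in rigorously justifying the depth-plus-congestion estimate for the middle phase, because the $O(\log n)$-per-round NCC budget is shared among the $O(\log n)$ butterfly nodes in a column and several trees may want to forward through the same column simultaneously. The plan is to argue that the total forwarding work at any single NCC node is $O(C \log n)$ over the entire phase (at most $C$ trees per butterfly node, constantly many outgoing tree-edges per tree, $O(\log n)$ butterfly nodes per column), so a simple round-robin among the column's butterfly nodes combined with per-edge FIFO realizes the required $O(C + \log n)$ schedule without starvation. This is a routine adaptation of the standard store-and-forward pipelining lemma for bounded-degree networks, with no node ever attempting to send more than its capacity allows.
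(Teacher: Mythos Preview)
Your three–phase decomposition is exactly the one the paper uses: inject $p_i$ from $s_i$ into the root $h(i)$, push it through $T_i$ to all leaves, then have each leaf $l(i,u)$ deliver to $u$ in a random round from $\{1,\dots,\lceil \hat{\ell}/\log n\rceil\}$. Phases~1 and~3 and the synchronization are essentially identical to the paper and are fine.

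The gap is in Phase~2. You invoke a ``classical depth-plus-congestion argument'' for per-edge FIFO store-and-forward, but there is no such off-the-shelf theorem for greedy/FIFO routing on a leveled network like the butterfly. A work-conserving schedule does \emph{not} in general finish in $O(C+D)$ steps: a packet can be delayed by up to $C-1$ competitors at one edge and then meet a fresh set of competitors at the next level, and FIFO provides no mechanism that prevents this from recurring across all $D$ levels. The $O(C+D)$ bound you are alluding to is either the offline Leighton--Maggs--Rao schedule (which you are not running) or the random-rank protocol analyzed via delay sequences. Your fallback ``total work'' argument---$O(C\log n)$ forwarding work per NCC node against $O(\log n)$ capacity per round---only shows that the load would fit into $O(C)$ rounds \emph{if} every round could be saturated; it ignores the precedence constraints (a copy of $p_i$ cannot leave a node before it arrives), which is precisely where the scheduling difficulty lives. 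So the ``routine adaptation'' you anticipate is not routine.

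The paper closes this gap differently: it assigns each multicast group a random rank $\rho(i)$, resolves edge contention by smallest rank (ties broken by group identifier), and then bounds the Spreading Phase by adapting the delay-sequence argument of Theorem~\ref{thm:combining1}. That is what yields the $O(C+\log n)$ term, w.h.p. To repair your proof, replace FIFO by this random-rank contention rule and import that analysis.
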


The algorithm multicasts messages by sending them upwards the multicast trees, performing our routing strategy in "reverse order".
We remark that similar to the Aggregation Algorithm, the Multicast Algorithm may easily be extended to allow a node to be source of multiple multicasts; however, we will only need the simplified variant in our paper.

\paragraph*{Multi-Aggregation Problem.}
We are given a set of multicast \\groups $\mathcal{A} = \{A_1, \ldots, A_N\}$, $A_i \subseteq V$, with sources $s_1,\ldots,s_N \in V$ such that every source $s_i$ stores a multicast packet $p_i$, and every node is source of at most one multicast group.
We assume that multicast trees for the multicast groups with congestion $C$ have already been set up.
The goal is to let every node $u \in V$ receive $f(\{p_i \mid u \in A_i\})$ for a given distributive aggregate function $f$.

\begin{theorem}\label{thm:multiaggregationProblem}
  There is a \emph{Multi-Aggregation Algorithm} that solves any Multi-Aggregation Problem in time $O(C + \log n)$, w.h.p.
\end{theorem}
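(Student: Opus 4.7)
The plan is a three-stage algorithm. Each source first delivers its packet to the root of its tree, then a pipelined upward push brings every packet to the leaves of its tree, and finally the Aggregation Algorithm of Theorem~\ref{thm:aggregationProblem} is invoked to combine, for each destination $u$, the packets that have been deposited at the various leaves addressed to $u$. The key structural observation that makes the last stage cheap is that each top-level butterfly node is the leaf of at most $C$ trees, so it will produce at most $C$ per-destination partial aggregates, matching the congestion bound.

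In the first stage, each source $s_i$ routes $p_i$ to $h(i)$. Since $h(i)$ is a uniformly random bottom-level butterfly node and each source has a single packet, this $1$-to-$1$ routing completes in $O(\log n)$ rounds w.h.p. In the second stage, I pipeline the multicast up the trees: in each round every butterfly node that has already received the packet of some tree $T_i$ that contains it forwards the packet one level upward along its tree-child edges in $T_i$. Because every Node-Capacitated Clique node emulates a column of $\Theta(\log n)$ butterfly nodes and has capacity $\Theta(\log n)$ messages per round, each butterfly node can dispatch $O(1)$ messages per round, and a standard pipelining argument across the $\Theta(\log n)$ levels of the butterfly finishes the push in $O(C+\log n)$ rounds. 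Once the push is complete, every top-level butterfly node $v$ holds, for each pair $(i,u)$ with $l(i,u)=v$, the packet $p_i$ together with the destination $u$ recorded during Multicast Tree Setup. Node $v$ then locally collapses, for every $u$, all packets addressed to $u$ into one partial aggregate via the distributive function $f$, leaving at most $C$ aggregates at $v$.

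In the third stage, I invoke Theorem~\ref{thm:aggregationProblem} with one aggregation group per destination $u$: its members are the top-level butterfly nodes holding a partial aggregate for $u$ and its target is $u$ itself. Since each such butterfly node contributes to at most $C$ groups, we have $L' \le nC/2$, $\ell_1 \le C$, and $\hat{\ell_2} = 1$, so the Aggregation Algorithm finishes in $O(L'/n + (\ell_1 + \hat{\ell_2})/\log n + \log n) = O(C + \log n)$ rounds and leaves every $u$ with $f(\{p_i \mid u \in A_i\})$, as required. The delicate part of the argument is the pipelined push of the second stage: the congestion $C$ bounds only the total load through each butterfly node and not its load in any specific round, so to show that the random choices made by the Multicast Tree Setup keep the instantaneous load of every Node-Capacitated Clique node within its $\Theta(\log n)$ per-round capacity w.h.p.\ one needs the same Chernoff-type concentration argument (under $\Theta(\log n)$-wise independence of the hash functions) that underpins Theorem~\ref{thm:multicastProblem}.
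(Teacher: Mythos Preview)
Your three-stage plan is essentially the paper's own algorithm: the paper also sends each $p_i$ to $h(i)$, runs the Spreading Phase of the Multicast Algorithm up the trees, maps each arrived packet at $l(i,u)$ to $(\id(u),p_i)$, randomly redistributes these among level-$0$ butterfly nodes, and then runs the Combining Phase of the Aggregation Algorithm toward $h(\id(u))$ before delivering to $u$. Your variant---locally collapsing by destination and then invoking Theorem~\ref{thm:aggregationProblem} as a black box instead of re-running the Combining Phase---is an equivalent packaging.

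Two points deserve sharpening. First, your account of stage~2 misidentifies the mechanism: ``standard pipelining'' with an unspecified priority does not in general yield $O(C+\log n)$ on a leveled network, and Theorem~\ref{thm:multicastProblem} is \emph{not} proved by a Chernoff bound on instantaneous per-round load. The paper's Spreading Phase assigns each tree a random rank $\rho(i)$ and resolves edge contention by rank; the $O(C+\log n)$ bound then follows from a delay-sequence argument (an adaptation of the analysis of Theorem~\ref{thm:combining1}), not from concentration of load. You invoke the right theorem, but the stated reason is wrong. Second, ``leaf of at most $C$ trees $\Rightarrow$ at most $C$ per-destination aggregates'' is a non sequitur: a single tree $T_i$ may place several members at the same leaf $v$ (i.e., $l(i,u)=l(i,u')=v$), so the number of destination labels at $v$ can exceed the number of trees through $v$. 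The correct bound on the number of $(i,u)$ pairs at a level-$0$ node is the number of packets injected there during tree setup, which is $O(L/n+\log n)$ w.h.p.\ by Chernoff; the paper's appendix glosses over the same distinction when it asserts ``the mapping takes time $O(C)$,'' and the asymptotic conclusion is unaffected since both quantities share the bound coming from Theorem~\ref{thm:treesetup}.
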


The Multi-Aggregation algorithm combines all of the previous algorithms to allow a node to first multicast a message to a set of nodes associated with it, and then aggregate all messages destined at it.
More precisely, each source $s_i$ first multicasts its packet $p_i$ to all leaves in its multicast tree.
Every node $l(i,u)$ then \emph{maps} $p_i$ to a packet $(\id(u), p_i)$ for all $i$ and $u \in A_i$.
The resulting packets are randomly distributed among the nodes of the topmost level of the butterfly.
Finally, all packets associated with identifier $\id(u)$ for some $u$ are aggregated towards an intermediate target $h(\id(u))$ on level $d$ using the aggregate function $f$ as in the Aggregation Algorithm.
From there, the result is finally delivered to $u$.
For applications beyond our paper, the algorithm may also be extended to allow nodes to be source of multiple multicast groups, and to receive aggregates corresponding to distinct aggregations.

\section{Minimum Spanning Tree} \label{sec:mst}
As a first example of graph algorithms for the Node-Capacitated Clique, we describe an algorithm that computes a \emph{minimum spanning tree (MST)} in time $O(\log^4 n)$.
More specifically, for every edge in the input graph $G$, one of its endpoints eventually knows whether the edge is in the MST or not.
We assume that each edge of $G$ has an integral weight in $\{1,2,\ldots,W\}$ for some positive integer $W = \poly(n)$.

\paragraph*{High-Level Description.}
From a high level, our algorithm mimics Boruvka's algorithm with Heads/Tails clustering, which works as follows.
Start with every node as its own component.
For $O(\log n)$ iterations, every component $C$ (1) finds its \emph{lightest}, i.e., minimum-weight, edge out of the component that connects to the other components, (2) flips a Heads/Tails coin, and (3) learns the coin flip of the component $C'$ on the other side of the lightest edge.
If $C$ flips Tails and $C'$ flips Heads, then the edge connecting $C$ to $C'$ is added to the MST, and thus effectively component $C$ merges with component $C'$ (and whatever other components that are merging with $C'$ simultaneously).
It is well known that, w.h.p., all nodes get merged into one component within $O(\log n)$ iterations and the added edges form an MST (see, e.g., \cite{GhaffariHaeupler16, ghaffariKS17}).

\paragraph*{Details of the Algorithm.}
Over the course of the algorithm, each component $C \subseteq V$ maintains a \emph{leader node} $l(C)\in C$ whose identifier is known to every node in the component.
Furthermore, we maintain a multicast tree for each component $C$ with source $l(C)$ and corresponding multicast group $C \setminus \{l(C)\}$.
We ensure that the set of multicast trees has congestion $O(\log n)$.
In each round of Boruvka's algorithm with the partition of $V$ into components $\{C_1, \ldots, C_N\}$, every leader $l(C_i)$ flips Heads/Tails and multicasts the result to all nodes in its component by using the Multicast Algorithm of Theorem~\ref{thm:multicastProblem}.
As the multicast trees have congestion $O(\log n)$, and $\hat{\ell} = 1$ as every node is in exactly one component, this takes time $O(\log n)$, w.h.p.

For each component $C$, the leader then learns the lightest edge to a neighbor in $V \setminus C$ in time $O(\log^2 n \log W)$.
This is a highly nontrivial task that we address later.
Afterwards, the leader multicasts the lightest edge $\{u, v\} \in (C \times (V\setminus C)) \cap E$ to every node in its component, which can again be done in time $O(\log n)$.
For each component $C$ that flips Tails, the node $u \in C$ incident to the lightest outgoing edge $\{u, v\}$ now has to learn whether $v$'s component $C'$ has flipped Heads, and, if so, the identifier of $l(C')$.
Therefor, $u$ \emph{joins} a multicast group $A_{\id(v)}$ with source $v$, i.e., declares itself a member of $A_{\id(v)}$ and constructs multicast trees with the help of Theorem~\ref{thm:treesetup}.
As every node is member of at most one multicast group, setting up the corresponding trees with congestion $O(\log n)$ takes time $O(\log n)$, w.h.p.
By using the Multicast Algorithm, the endpoints of all lightest edges learn the result of the coin flip and the identifier of their adjacent component's leader in time $O(\log n)$.

If for the edge $\{u, v\}$ the component $C'$ of $v$ has flipped Heads, then $u$ sends the identifier of the leader of $C'$ to its own leader, which in turn informs all nodes of $C$ using a multicast.
Note that thereby only $u$ learns that $\{u, v\}$ is an edge of the MST, but not $v$.
Finally, the multicast trees of the resulting components are rebuilt by letting each node join a multicast group corresponding to its new leader.
As the components are disjoint, the resulting trees with congestion $O(\log n)$ are built in time $O(\log n)$, w.h.p.

\paragraph*{Finding the Lightest Edge.}
To find the lightest edge of a component, we ``sketch'' its incident edges.
Our algorithm follows the procedure $\texttt{FindMin}$ of~\cite{KKT15}, with the ``broadcast-and-echo'' subroutine inside each component replaced by multicasts and aggregations (i.e., executions of the Multicast and Aggregation Algorithm) from/to the leader to/from the entire component.
As argued above, and due to Theorem~\ref{thm:aggregationProblem}, both steps can be performed in time $O(\log n)$, w.h.p.
We highlight the main steps of $\texttt{FindMin}$, and refer the reader to~\cite{KKT15} for the details and proof.

Initially, we bidirect each edge into two arcs in opposite directions, and define the identifier $\id(u,v) = \id(u) \circ \id(v)$, where $\circ$ denotes the concatenation of two binary strings.
We will apply binary search to the weights of edges so that we can find the lightest outgoing edge.
Every iteration has a current range $[L,R ]\subseteq [1,W]$ such that the lightest edge out has weight in that range.
To compute the next range, the algorithm determines whether there is an edge out of $[L,M]$, where $M:=\lf(L+R)/2\rf$.
If so, the new range becomes $[L,M]$; otherwise, the new range is $[M+1,R]$.\footnote{The algorithm \texttt{FindMin} of~\cite{KKT15} actually uses a ``$\Theta(\log n)$-ary'' search instead of binary search, but we replace it with binary search here for simplicity of explanation.}
The remaining task is to solve the following subproblem: given a range $[a,b]$, determine whether there exists an outgoing edge with weight in $[a,b]$.

To sketch their incident edges, the nodes use a (pseudo-)random hash function $h$ that maps each edge identifier to $\{0,1\}$.
For a node $u$, define
\[
  h^\uparrow(u):=\sum_{\substack{v\in N(u): \\ w(u,v)\in[a,b]}}h(\id(u,v))\mod2,
\]
and
\[
  h^\downarrow(u):=\sum_{\substack{v\in N(u): \\ w(u,v)\in[a,b]}}h(\id(v,u))\mod2,
\]
and for component $C \subseteq V$, define $h^\uparrow(C):=\sum_{u\in C}h^\uparrow(u)$ and $h^\downarrow(C)$ similarly.
Observe that the unordered sets $\{\id(u,v):u\in C,v\in N(u),w(u,v)\in[a,b]\}$ and $\{\id(v,u):u\in C,v\in N(u),w(u,v)\in[a,b]\}$ are the same if and only if component $C$ does not have an outgoing edge with weight in the range $[a,b]$.
Also, the hash function $h$ satisfies the property that, if two sets $S_1,S_2$ of integers are not equal, then the values of $\sum_{x\in S_1}h(x)\mod2$ and $\sum_{x\in S_2}h(x)\mod2$ are not equal with constant probability.
To compute the values of $h^\uparrow(C)$ and $h^\downarrow(C)$, each node $u\in C$ computes $h^\uparrow(u)$ and $h^\downarrow(u)$, and an aggregation towards the leader node is performed in each component $C$ with addition mod 2 as the aggregate function.
We can repeat this procedure $O(\log n)$ times so that w.h.p., there is no outgoing edge out of $C$ with weight in $[a,b]$ if and only if $h^\uparrow(C)$ and $h^\downarrow(C)$ are equal in every trial.
Note that this requires the nodes to know $O(\log n)$ different hash functions; by the discussion in Section~\ref{sec:primitives}, the necessary $O(\log^3 n)$ bits can be retrieved beforehand in $O(\log^2 n)$ rounds.

The running time analysis from~\cite{KKT15}, modified to count the number of ``broadcast-and-echo'' subroutines, can be rewritten as follows.

\begin{lemma}[\cite{KKT15}, Lemma 2]\label{lemma:Sketch}
  The leader node of each component learns the lightest edge out of its component within $O(\log W\log n)$ iterations of multicasts and aggregations, w.h.p.
\end{lemma}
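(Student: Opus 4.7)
The plan is to follow the analysis of $\texttt{FindMin}$ in \cite{KKT15}, but accounting for the fact that each ``broadcast-and-echo'' inside a component $C$ is now implemented by one call to the Multicast Algorithm followed by one call to the Aggregation Algorithm along the precomputed multicast tree rooted at $l(C)$. Since we are asked only to count the number of such multicast/aggregation rounds (the per-round cost is handled separately by Theorems~\ref{thm:multicastProblem} and~\ref{thm:aggregationProblem}, and by the congestion bound on the component multicast trees), it suffices to show that $\texttt{FindMin}$ makes $O(\log W\log n)$ multicast/aggregation calls and is correct for every component with high probability.

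First I would argue the outer loop. The binary search maintains a range $[L,R]\subseteq[1,W]$ guaranteed to contain the minimum outgoing weight; each step halves $R-L$, so after $O(\log W)$ steps the range collapses to a single weight $w^\ast$. Once $w^\ast$ is known, the identifier of the corresponding outgoing edge is reconstructed bit by bit using the same sketching primitive applied to the restricted edge set of weight $w^\ast$, which adds another $O(\log n)$ binary-search-like steps. Thus the outer loop contributes an $O(\log W+\log n)=O(\log W)$ factor (assuming $W=\poly(n)$ as stated).

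Next I would analyze a single binary-search step, i.e., the subroutine ``does $C$ have an outgoing edge with weight in $[a,b]$?'' By construction, if no such edge exists, the unordered multisets of up-identifiers and down-identifiers restricted to $[a,b]$ coincide, so $h^\uparrow(C)=h^\downarrow(C)$ deterministically. If such an edge exists, the two multisets differ in at least one element, and by the hash property used in \cite{KKT15} (a parity of a $\Theta(\log n)$-wise independent $\{0,1\}$-valued hash) the two sketches disagree with some constant probability $p>0$. Repeating with $\Theta(\log n)$ independent hash functions and declaring ``outgoing edge exists'' iff some trial disagrees drives the one-sided error below $1/n^{c+2}$ for any desired $c$. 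Each trial costs exactly one multicast (the leader sends the current $[a,b]$ and the hash index, which every node can compute locally from its incident edges) plus one aggregation with addition mod $2$ as the distributive aggregate function; the hash-function seeds themselves are common knowledge, having been broadcast in the $O(\log^2 n)$-round preprocessing phase described in Section~\ref{sec:primitives}.

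Multiplying the $O(\log W)$ outer iterations by the $\Theta(\log n)$ amplification trials gives $O(\log W\log n)$ multicast/aggregation rounds in total. A final union bound over all $O(\log W\log n)$ subroutine invocations and all at most $n$ components shows that every leader correctly identifies the lightest outgoing edge of its component with high probability. The only mildly subtle point is guaranteeing that the $O(\log n)$-wise independence of the hash family is enough to support the Chernoff-type concentration used implicitly here and in the error-amplification step; this is handled exactly as in the discussion of shared randomness following Theorem~\ref{thm:aggregationProblem}, so no additional randomness-complexity argument is needed.
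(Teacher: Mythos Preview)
Your proof is correct and follows essentially the same approach as the paper's description of \texttt{FindMin} (which in turn summarizes~\cite{KKT15}): binary search on the weight range with $\Theta(\log n)$-fold amplification of the parity sketch per step, followed by bit-by-bit recovery of the edge identifier, giving $O((\log W+\log n)\cdot\log n)=O(\log W\log n)$ multicast/aggregation calls, and a union bound over all steps and components. The paper itself does not prove this lemma but simply cites~\cite{KKT15}, so your expanded argument matches the intended reasoning.
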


Since each iteration can be performed in time $O(\log n)$, and there are $O(\log n)$ phases of Boruvka's algorithm, w.h.p., we conclude the following theorem.

\begin{theorem}
  The algorithm computes an MST in time $O(\log^4 n)$, w.h.p.
\end{theorem}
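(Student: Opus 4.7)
The plan is to obtain the $O(\log^4 n)$ bound by multiplying the number of Boruvka phases (which is $O(\log n)$, w.h.p.) with the cost of one phase, and then to identify where the dominant per-phase cost comes from. I would organize the per-phase cost into four pieces: (i) the leader's Heads/Tails multicast, (ii) the lightest-outgoing-edge computation, (iii) the propagation of the chosen edge and the learning of the neighboring component's leader identity by the appropriate endpoint, and (iv) the rebuilding of the multicast trees for the newly merged components. Pieces (i), (iii), and (iv) should each cost $O(\log n)$ w.h.p., while piece (ii) will be the bottleneck at $O(\log^3 n)$. Summing gives $O(\log^3 n)$ per phase, and the overall bound follows.

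For the detailed accounting, I would invoke the results from Section~\ref{sec:primitives} directly. Because each node lies in exactly one component at any time, we have $\hat{\ell} = 1$ and $\hat{\ell_2} = 1$, and the components are disjoint so the invariant ``multicast trees have congestion $O(\log n)$'' can be re-established each phase by one application of Theorem~\ref{thm:treesetup} with $L = O(n)$ and $\ell = 1$, costing $O(\log n)$ w.h.p. Under the same invariant, Theorem~\ref{thm:multicastProblem} delivers a leader's message to every member of its component in $O(\log n)$ rounds w.h.p. The selection of lightest outgoing edges within each component follows Lemma~\ref{lemma:Sketch}: each of the $O(\log W \log n) = O(\log^2 n)$ iterations of \texttt{FindMin} consists of a multicast from the leader to the component (to announce the current binary-search range and hash function index) followed by an aggregation back to the leader (summing $h^\uparrow(u)$ and $h^\downarrow(u)$ values mod~$2$). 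Each such multicast/aggregation pair costs $O(\log n)$, giving $O(\log^3 n)$ total for finding the lightest edge of every component in a single phase. For step (iii), each endpoint of a chosen outgoing edge joins a singleton-source multicast group; as no node is in more than one such group, Theorems~\ref{thm:treesetup} and~\ref{thm:multicastProblem} again give $O(\log n)$ w.h.p.

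Before the algorithm begins, the $\Theta(\log^3 n)$ bits of shared randomness required to instantiate the $O(\log n)$ hash functions used by \texttt{FindMin} (and by the routing primitives) are broadcast from node~$0$ via the butterfly in $O(\log^2 n)$ rounds, as described in Section~\ref{sec:primitives}; this one-time cost is absorbed into $O(\log^4 n)$. Combining all of the above, each of the $O(\log n)$ Boruvka phases costs $O(\log^3 n)$ w.h.p., for a total of $O(\log^4 n)$ w.h.p.

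The main obstacle I anticipate is keeping the w.h.p.\ claims honest under composition: each invocation of Theorems~\ref{thm:aggregationProblem}, \ref{thm:treesetup}, \ref{thm:multicastProblem}, and Lemma~\ref{lemma:Sketch} succeeds only with high probability, and over $O(\log n)$ phases we perform $O(\mathrm{polylog}\,n)$ such invocations. The fix is a union bound: by choosing the constants inside each ``w.h.p.''\ sufficiently large (which the primitives allow), the total failure probability stays at most $n^{-c}$ for any desired $c$. A secondary subtlety is that the multicast-tree congestion invariant must survive across phases; but because we rebuild the trees from scratch at the end of each phase using the fresh partition into disjoint components (with $L = O(n)$, $\ell = 1$), Theorem~\ref{thm:treesetup} guarantees congestion $O(\log n)$ w.h.p.\ in the new phase, closing the induction.
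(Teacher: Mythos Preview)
Your proposal is correct and follows essentially the same argument as the paper: $O(\log n)$ Boruvka phases, each dominated by the $O(\log W\log n)=O(\log^2 n)$ multicast/aggregation iterations of \texttt{FindMin} at $O(\log n)$ per iteration (Lemma~\ref{lemma:Sketch} plus Theorems~\ref{thm:aggregationProblem} and~\ref{thm:multicastProblem}), with the remaining per-phase steps and the one-time hash-function broadcast absorbed as lower-order terms. Your treatment of the congestion invariant, the tree rebuilding via Theorem~\ref{thm:treesetup}, and the union bound over the polylogarithmically many w.h.p.\ events is more explicit than the paper's one-line conclusion but matches it in substance.
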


\section{Computing an $O(a)$-Orientation}

One of the reasons the MST problem can be solved very efficiently is because we only require \emph{one} endpoint of each edge to learn whether the edge is in the MST or not; otherwise, the problem seems to become significantly harder, as every node would have to learn some information about each incident edge.
We observe this difficulty for the other graph problems considered in this paper as well.
To approach this issue, we aim to set up multicast trees connecting each node with \emph{all} of its neighbors in $G$, allowing us to essentially simulate variants of classical algorithms.
As we will see, such trees can be set up efficiently if $G$ has small arboricity by first computing an $O(a)$-orientation of $G$, which is described in this section.

We present the \emph{Orientation Algorithm}, which computes an $O(a)$-orientation in time $O((a + \log n) \log n)$, w.h.p.
More specifically, the goal is to let every node learn a direction of all of its incident edges in $G$.
The algorithm essentially constructs a \emph{Nash-Williams forest-decomposition} \cite{Nas64} using the approach of \cite{BE10}.
From a high-level perspective, the algorithm repeatedly identifies low-degree nodes and removes them from the graph until the graph is empty.
Whenever a node leaves, all of its incident edges are directed away from it.
More precisely, the algorithm proceeds in phases $1, \ldots, T$.
Let $d_i(u)$ be the number of incident edges of a node $u$ that have not yet been assigned a direction at the beginning of phase $i$.
Define $\overline{d_i}$ to be the average degree of all nodes $u$ with $d_i(u) > 0$, i.e., $\overline{d_i} = \sum_{u \in V} d_i(u) / |\{ u \in V \mid d_i(u) > 0\}|$.
In phase $i$, a node $u$ is called \emph{inactive} if $d_i(u) = 0$, \emph{active} if $d_i(u) \le 2\overline{d_i}$, and \emph{waiting} if $d_i(u) > 2\overline{d_i}$.
In each phase, an edge $\{u,v\}$ gets directed from $u$ to $v$, if $u$ is active and $v$ is waiting, or if both nodes are active and $\id(u) < \id(v)$.
Thereby, each node is waiting until it becomes active in some phase, and remains inactive for all subsequent phases.
This results in a partition of the nodes into \emph{levels} $L_1, \ldots, L_T$, where level $i$ is the set $L_i$ of active nodes in phase $i$.
The lemma below follows from the fact that in every phase, at least half of all nodes that are not yet inactive become inactive, which can easily be shown, and that $\overline{d_i} \le 2a$, since any subgraph of $G$ can be partitioned into $a$ forests, whose average degree is at most $2$.

\begin{lemma}\label{lem:orientationHighLevel}
  The Orientation Algorithm takes $O(\log n)$ phases to compute an $O(a)$-orientation.
\end{lemma}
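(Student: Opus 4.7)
\smallskip
\noindent\textbf{Proof plan.} The lemma has two components: the phase count bound and the outdegree bound. I would prove them separately, via a halving argument for the number of phases and a Nash--Williams argument for the outdegree. The main obstacle is verifying that the outdegree bound is achieved \emph{globally}, i.e.\ summed over all phases, and not just per phase---but this is actually easy here because every node's edges are all oriented in a single phase.

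\smallskip
\noindent\textbf{Phase count.} Let $n_i$ denote the number of non-inactive nodes at the start of phase $i$, i.e.\ those with $d_i(u) > 0$. By definition, $\overline{d_i}$ is the average of $d_i(u)$ taken over precisely these $n_i$ nodes. A Markov-style counting argument shows that fewer than $n_i/2$ of these nodes can satisfy $d_i(u) > 2\overline{d_i}$; otherwise their degrees alone would contribute more than $n_i \cdot \overline{d_i}$ to the sum, contradicting the definition of the average. Hence at least $n_i/2$ non-inactive nodes are active in phase $i$. Because every edge incident to an active node $u$---whether the other endpoint is waiting or also active---receives a direction during phase $i$, the node $u$ satisfies $d_{i+1}(u) = 0$ and is inactive from phase $i+1$ onward. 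Thus $n_{i+1} \le n_i/2$, and starting from $n_1 \le n$ we conclude that $n_{T+1} = 0$ for some $T = O(\log n)$.

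\smallskip
\noindent\textbf{Outdegree bound.} For the $O(a)$ bound on the maximum outdegree, consider the subgraph $G_i$ induced by the still-undirected edges at the start of phase $i$. Every subgraph of $G$ has arboricity at most $a$, so by Nash--Williams the number of edges in $G_i$ is at most $a(n_i - 1) < a n_i$, which gives
\[
\overline{d_i} \;=\; \frac{2\,|E(G_i)|}{n_i} \;<\; 2a.
\]
Consequently every active node $u$ in phase $i$ satisfies $d_i(u) \le 2\overline{d_i} < 4a$. Since all edges incident to $u$ are oriented during the one phase in which $u$ is active---outgoing exactly to its waiting neighbors and to its active neighbors with larger identifier---the total outdegree of $u$ in the final orientation is bounded by $d_i(u) = O(a)$.

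\smallskip
\noindent\textbf{Combining.} Together the two bounds yield the claimed statement: the procedure terminates within $O(\log n)$ phases and produces an orientation of maximum outdegree $O(a)$. I expect the Markov-style halving step to be the most delicate to state cleanly, since one must restrict the average to non-inactive nodes and verify that an active node indeed has all of its remaining edges oriented within a single phase (so that activity in phase $i$ implies inactivity in phase $i+1$); both follow immediately from the stated orientation rule.
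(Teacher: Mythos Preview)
Your proof is correct and follows essentially the same approach as the paper's: a Markov-style averaging argument to show that at least half of the non-inactive nodes become active (hence inactive afterward) in every phase, together with the bound $\overline{d_i}\le 2a$ obtained from the fact that any subgraph of $G$ still has arboricity at most $a$. The paper states both facts in a single sentence without spelling them out; your write-up simply fills in the details, including the observation that a node accrues outgoing edges only in the unique phase in which it is active.
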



\subsection{Identification Problem}

It remains to show how a single phase can be performed efficiently in our model.
Here, the main difficulty lies in having active nodes determine which of their neighbors are already inactive in order to conclude the orientations of incident edges.
We approach this problem by solving the following \emph{Identification Problem}:
We are given a set $L \subseteq V$ of \emph{learning} nodes and a set $P \subseteq V$ of \emph{playing} nodes.
Every playing node knows a subset of its neighbors that are \emph{potentially} learning, i.e., it knows that none of the other neighbors are learning.
The goal is to let every learning node determine which of its neighbors are playing.

To solve such a problem, we present the \emph{Identification Algorithm}, which will later be used as a subroutine.
In this subsection, we represent each edge $\{u,v\}$ by two directed edges $(u,v)$ and $(v,u)$.
We assume that all nodes know $s$ (pseudo-)random hash functions $h_1, \ldots, h_s: E \rightarrow [q]$ for some parameters $s$ and $q$.
The hash functions are used to map every directed edge to $s$ \emph{trials}.
We say an edge $e$ \emph{participates} in trial $i$ if $h_j(e) = i$ for some $j$.

Let $u \in L$.
We refer to an edge $(u,v)$ as a \emph{red edge} of $u$, if $v$ is not playing, and a \emph{blue edge} of $u$, if $v$ is playing.
We identify each edge $(u,v)$ by the identifiers of its endpoints, i.e., $\id(u,v) = \id(u) \circ \id(v)$.
Let $X(i)$ be the XOR of the identifiers of all edges $(u,v)$ that participate in trial $i$, and $X'(i)$ be the XOR of the identifiers of all blue edges $(u,v)$ that participate in trial $i$.
Furthermore, let $x(i)$ be the total number of edges adjacent to $u$ that participate in trial $i$, and let $x'(i)$ be the number of blue edges that participate in trial $i$.

Our idea is to let $u$ use these values to identify all of its red edges; then it can conclude which of its neighbors must be playing.
Before describing this, we explain how the values are determined.
Clearly, the values $X(i)$ and $x(i)$ can be computed by $u$ by itself for all $i$.
The other values are more difficult to obtain as $u$ does not know which of its edges are blue.
To compute these values, we use the Aggregation Algorithm:
Each playing node $v$ is in aggregation group $A_{\id(w) \circ i}$ for every potentially learning neighbor $w$ and every trial $i$ such that
$(w,v)$ participates in trial $i$.
The input of $v$ for the group $A_{\id(w) \circ i}$ is $(\id(w,v), 1)$, where the first coordinate is used to let $w$ compute $X'(i)$, and the second coordinate is used to compute $x'(i)$.
Correspondingly, the aggregate function $f$ combines two inputs corresponding to the same aggregation group by taking the XOR of the first coordinate and the sum of the second coordinate.
Thereby, $u$ eventually receives both $X'(i)$ and $x'(i)$.

We now show how $u$ can identify its red edges using the aggregated information.
First, it determines a trial $i$ for which $x(i) = x'(i) + 1$.
Since neighbors that are not playing did not participate in the aggregation, in this case there is exactly one red edge $(u,v)$ such that $\id(u,v)$ is included in $X(i)$ but not in $X'(i)$.
Therefore, $\id(u,v)$ can be retrieved by taking the XOR of both values.
Having identified $\id(u,v)$, $u$ determines all trials in which $(u,v)$ participates using the common hash functions and "removes" $\id(u,v)$ from $X(i)$ by again computing the XOR of both.
It then decreases $x(i)$ by 1 and repeats the above procedure until no further edge can be identified.
If $u$ always finds a trial $i$ for which $x(i) = x'(i) + 1$, then it eventually has identified all red edges.
Clearly, all the remaining neighbors must be playing.

\begin{lemma}\label{lem:collision}
  Let $u \in L$ and assume that $u$ is incident to at most $p$ red edges.
  Let $s$ be the number of hash functions, and $q$ be the number of trials.\\
  \[
    \Pr[u \text{ fails to identify at least } k \text{ red edges}] \le 2\left(\frac{2sk}{q}\right)^{(s-2)k/2}
  \]
  for $q \ge 4esp$ and $s \ge 4$.
\end{lemma}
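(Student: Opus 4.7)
My plan is to interpret the identification procedure as a peeling process on a random hypergraph whose vertices are the red edges of $u$ and whose hyperedges are the trials, and then bound the probability that a ``2-core'' of size at least $k$ survives. The first step is a combinatorial characterization of failure. At any point during identification, a trial $i$ permits peeling a red edge iff it contains exactly one currently-unidentified red edge (the observable quantity $x(i)-x'(i)$ then equals $1$). Let $S$ be the set of red edges that are never peeled, and $r = |S|$. If some trial touched by $S$ contained only one edge of $S$, then once the non-$S$ red edges have all been identified, this trial would still allow us to peel an $S$-edge, contradicting the choice of $S$. So every trial touching $S$ contains at least two edges of $S$; counting (edge, trial)-incidences, the number of distinct trials touched is at most $\lfloor sr/2\rfloor$, i.e.\ all $sr$ hash values of $S$-edges fall inside some $T \subseteq [q]$ with $|T| \le sr/2$.

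Next I would take a union bound. Since the hash values of distinct edges are independent and uniform on $[q]$, for a fixed $S$ with $|S|=r$,
\[
\Pr[S \text{ is bad}] \;\le\; \sum_{t \le sr/2}\binom{q}{t}\left(\tfrac{t}{q}\right)^{sr}.
\]
Consecutive-ratio estimates show this sum is dominated by its $t=\lfloor sr/2\rfloor$ term, and applying $\binom{q}{m}\le(eq/m)^m$ collapses it (up to an absolute constant) to $(esr/(2q))^{sr/2}$. A further union bound over $r$-subsets of the at most $p$ red edges contributes $\binom{p}{r}\le(ep/r)^r$. Writing
\[
f(r) \;:=\; \left(\tfrac{ep}{r}\right)^{r}\!\left(\tfrac{esr}{2q}\right)^{sr/2},
\]
the failure probability is at most a constant times $\sum_{r\ge k} f(r)$.

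The final step is to verify that $f(r)$ decays geometrically and to simplify $f(k)$. Computing $\log f(r+1)-\log f(r)$ under $q \ge 4esp$ and $s \ge 4$ yields a value at most $-\log 2$, so $\sum_{r\ge k}f(r) \le 2f(k)$. For $f(k)$ itself, I would split $(esk/(2q))^{sk/2} = (e/4)^{sk/2}(2sk/q)^{sk/2}$ and then peel off one factor $(2sk/q)^k$ to expose the target exponent $(s-2)k/2$, obtaining
\[
f(k) \;=\; \left[\tfrac{2esp}{q}\right]^{k}\!\left(\tfrac{e}{4}\right)^{sk/2}\!\left(\tfrac{2sk}{q}\right)^{(s-2)k/2}.
\]
The hypothesis $q \ge 4esp$ makes $[2esp/q]^k \le 2^{-k}$, and $s\ge 4$ makes $(e/4)^{sk/2}\le(e/4)^{2k}<1$, so the two prefactors multiply to at most $1$, leaving the desired $(2sk/q)^{(s-2)k/2}$ bound after accounting for the factor $2$ coming from the geometric sum.

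The only genuinely delicate step, in my view, is the monotonicity calculation for $f(r)$: a clumsy bound here would force an extra $\poly(p,r)$ factor into the final estimate, and getting the term-ratio to be bounded by $1/2$ is what lets the tail $\sum_{r\ge k} f(r)$ collapse into $2f(k)$. The peeling-to-2-core reduction is standard for this flavor of sketch (reminiscent of Invertible Bloom Lookup Tables), the ``concentrate $sr$ i.i.d.\ hashes in a small set $T$'' probability estimate is a textbook union bound, and the two hypotheses $q \ge 4esp$ and $s \ge 4$ are clearly calibrated so that the error prefactors in the closed form of $f(k)$ are harmless.
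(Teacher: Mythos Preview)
Your proposal is essentially the same argument as the paper's proof. Both proofs reduce failure to the existence of a set of $r\ge k$ red edges whose $sr$ hash values land in at most $\lfloor sr/2\rfloor$ trials, then union-bound over choices of the edge set and the trial set, simplify the summand via $\binom{n}{m}\le (en/m)^m$, and finally show the resulting sum over $r$ is geometric with ratio at most $1/2$. The only organizational differences are that the paper first bounds each summand by $(2sj/q)^{(s-2)j/2}$ and then proves that sequence is geometric, whereas you prove $f(r)$ itself is geometric and only afterwards simplify $f(k)$; both orderings work. One small redundancy: your inner sum $\sum_{t\le sr/2}\binom{q}{t}(t/q)^{sr}$ is unnecessary, since any set of size $t<\lfloor sr/2\rfloor$ is contained in one of size $\lfloor sr/2\rfloor$, so the single term $t=\lfloor sr/2\rfloor$ already upper-bounds $\Pr[S\text{ bad}]$ and you can drop the ``up to an absolute constant'' caveat entirely.
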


\begin{proof}
  $u$ fails to identify at least $k$ red edges if at some iteration of the above procedure there are $j \ge k$ edges left such that all edges participate only in trials in which at least two of the $j$ edges participate.
  Here, the $j$ edges participate in at most $\lfloor s \cdot j / 2 \rfloor$ many different trials, since otherwise there must be a trial in which only one edge participates.
  Therefore, the probability for that event is
  \begin{align*}
    \Pr &\le \sum_{j = k}^{p} \binom{p}{j} \binom{q}{sj / 2} \left(\frac{sj/2}{q}\right)^{sj} \\
    &\le \sum_{j = k}^{p} \left(\frac{ep}{j}\right)^{j} \left(\frac{2eq}{sj}\right)^{sj/2} \left(\frac{sj}{2q}\right)^{sj} \\
    &= \sum_{j = k}^{p} \left[ \left( \frac{ep}{j} \cdot \frac{sj}{2q} \right) \left( \frac{2eq}{sj} \cdot \frac{sj}{2q} \right)^{s/2} \cdot \left( \frac{sj}{2q} \right)^{s/2 - 1} \right]^j \\
    &= \sum_{j = k}^{p} \left[ \frac{e^2ps}{2q} \cdot \left( \frac{esj}{2q}\right)^{s/2 - 1} \right]^j \\
    &\le \sum_{j = k}^{p} \left(\frac{2sj}{q}\right)^{(s-2)j/2}
    \le 2\left(\frac{2sk}{q}\right)^{(s-2)k/2},
  \end{align*}
  where the last inequality holds because
  \iffull{
    \begin{align*}
        &\quad \left(\frac{2s(j+1)}{q}\right)^{(s-2)(j+1)/2} \\
        &= \left(\frac{2s(j+1)}{q}\right)^{(s-2)/2} \left(\frac{2s(j+1)}{q}\right)^{(s-2)j/2} \\
        &= \left(\frac{2s(j+1)}{q}\right)^{(s-2)/2} \left(\left(\frac{j+1}{j}\right)^j\right)^{(s-2)/2} \left(\frac{2sj}{q}\right)^{(s-2)j/2} \\
        &\le \left(\frac{2es(j+1)}{q}\right)^{(s-2)/2} \left(\frac{2sj}{q}\right)^{(s-2)j/2} \\
        &\le 1/2 \left(\frac{2sj}{q}\right)^{(s-2)j/2}. \qedhere
    \end{align*}
  }
  \else{
      \[
        \left(\frac{2s(j+1)}{q}\right)^{(s-2)(j+1)/2} \le 1/2 \left(\frac{2sj}{q}\right)^{(s-2)j/2}. \qedhere
      \]
    } \fi
\end{proof}

\subsection{Details of the Algorithm}

Finally, we show how the Identification Algorithm can be used to efficiently realize a phase of the high-level algorithm in time $O(a + \log n)$, w.h.p.
In our algorithm every node learns the direction of all its incident edges in the phase in which it is active; however, its neighbors might learn their direction only in subsequent phases.
Each phase is divided into three \emph{stages}:
In Stage 1, every node determines whether it is active in this phase.
In Stage 2, every active node learns which of its neighbors are inactive.
Finally, in Stage 3 every active node learns which of its remaining neighbors, which must be either active or waiting, are active.
From this information, and since every node knows the identifiers of all of its neighbors, every active node concludes the direction of each of its incident edges.
In the following we describe the three stages of a phase $i$ in detail.

\paragraph*{Stage 1: Determine Active Nodes.}
We assume that all nodes start the stage in the same round.
First, every node $u$ that is not inactive needs to compute $d_i(u)$ (i.e., $d(u)$ minus the number of inactive neighbors) to determine whether it remains waiting or becomes active in this phase.
This value can easily be computed using the Aggregation Algorithm:
Every inactive node $v$, which already knows the orientation of each of its incident edges, is a member of every aggregation group $A_{\id(w)}$ such that $v \rightarrow w$.
As the input value of each node we choose $1$, the aggregate function $f$ is the sum, and $\ell_2 \le 1 =: \hat{\ell_2}$.
By performing the Aggregation Algorithm, $u$ determines the number of inactive neighbors, and, by subtracting the value from $d(u)$, computes $d_i(u)$.
Afterwards, the nodes use the Aggregate-and-Broadcast Algorithm to compute $\overline{d_i}$ and to achieve synchronization.

\paragraph*{Stage 2: Identify Inactive Neighbors.}
The goal of this stage is to let every active node learn which of its neighbors are inactive.
The stage is divided into two \emph{steps}:
In the first step, a large fraction of active nodes \emph{succeeds} in the identification of inactive neighbors.
The purpose of the second step is to take care of the nodes that were \emph{unsuccessful} in the first step, i.e., that only identified some, but not all, of their incident red edges.
In both steps we use the Identification Algorithm described in the previous section, and carefully choose the parameters to achieve that each step only takes time $O(a + \log n)$.

At the beginning of the first step, the nodes compute $d_i^* = \max_{u \in L_i}(d_i(u))$ by performing the Aggregate-and-Broadcast Algorithm.
Let $d^* = \max_{j \le i}d_i^*$, which is a value known to all nodes, and note that $d^* = O(a)$.
Then, the nodes perform the Identification Algorithm, where the active nodes are \emph{learning} and the inactive nodes are \emph{playing}.
Hence, the endpoints of the \emph{red edges} learned by the active nodes must either be active or waiting.
If we chose $s = c\log n$ and $q = 4ecd^*\log n$ for some constant $c > 6$ as parameters, then by Lemma~\ref{lem:collision} all nodes would learn all of their red edges, w.h.p., already in this step; however, this would take time $O(a\log n)$.
To reduce this to $O(a + \log n)$, we instead choose $s = c$ and $q = 4ecd^*\log n$ for some constant $c > 6$, and accept that nodes fail to identify some of their red edges in this step.
However, for this choice Lemma~\ref{lem:collision} implies that each node fails to identify at most $\log n$ red edges, w.h.p.

We now describe how these remaining edges are identified in the second step.
Let $U = \{u \in V \mid u \text{ is unsuccessful}\}$.
We divide $U$ into sets of \emph{high-degree} nodes $U_{high} = \{u \in U \mid (d(u) - d_i(u)) > n/\log n\}$ and of \emph{low-degree} nodes $U_{low} = \{u \in U \mid (d(u) - d_i(u)) \le n/\log n\}$ and consider the nodes of each set separately.
By dealing with high-degree nodes separately, we ensure that the global load required to let low-degree nodes identify their red edges reduces by a $\log n$ factor.
First, the nodes of $U_{high}$ (of which there are only $O(a + \log n)$, w.h.p.) broadcast their identifiers by using a variant of the Aggregate-and-Broadcast Algorithm:
Using the path system of the butterfly, every node $u \in U_{high}$ sends its identifier to the node $v$ with identifier $0$; however, messages are not combined.
Instead, whenever multiple identifiers contend to use the same edge in the same round, the smallest identifier is sent first.
After $v$ has received all identifiers, it broadcasts them in a pipelined fashion, i.e., one after the other, to all other nodes.
For every node $u \in A := \{u \in V \mid u \text { is active or waiting}\}$ define $R_u = U_{high} \cap N(u)$, i.e., $(v, u)$ is a red edge of $v$ for all $v \in R_u$.
Let $u \in A$.
For each $v \in R_u$, $u$ chooses a round from $\{1, \ldots, \max\{|R_u|, d_i^*\}\}$ uniformly and independently at random and sends its own identifier to $v$ in that round.
Afterwards, every high-degree node can identify all of its red edges.
As $\max_{u \in A}\{|R_u|, d_i^*\} = O(a + \log n)$, this takes time $O(a + \log n)$, w.h.p.

To let the low-degree nodes identify their red edges, we again use the Identification Algorithm.
First, in order to narrow down its set of potentially learning neighbors, every inactive node determines which of its neighbors are unsuccessful low-degree nodes.
Therefore, we let every inactive node $u$ join multicast group $A_{\id(v)}$ for all $u \rightarrow v$ such that $v$ is not inactive (recall that every inactive node knows the directions of all of its incident edges, and whether the other endpoint of each edge is inactive or not).
Every node $v \in U_{low}$ then informs its inactive neighbors by using the Multicast Algorithm.
Since every node is member of at most $d^*$ multicast groups, which is a value known to all nodes, the nodes know an upper bound on $\ell$ as required by the algorithm.
Having narrowed down the set of learning nodes and the sets of potentially learning neighbors to the unsuccessful ones only, the Identification Algorithm is performed once again.
As the parameters of the algorithm we choose $s = c\log n$ and $q = 4ec\log^2 n$ for some constant $c > 6$.

\paragraph*{Stage 3: Identify Active Neighbors.}
Finally, every active node has to learn which of the endpoints of its red edges are active.
In the following, let $\id(e) = \id(u) \circ \id(v)$ be the identifier of an edge given by its endpoints $u$ and $v$ such that $\id(u) < \id(v)$.
The nodes use two (pseudo-)random hash-function $h$, $r$, where $h$ maps the identifier of an edge $e$ to a node $h(\id(e)) \in V$ uniformly and independently at random, and $r$ maps its identifier to a round $r(\id(e)) \in \{1, \ldots, d_i^*\}$ uniformly and independently at random.
Every active node $u$ sends an \emph{edge-message} containing $\id(e)$ to $h(\id(e))$ in round $r(\id(e))$ for every incident edge $e$ leading to an active or waiting node.
Using this strategy, two adjacent active nodes $u$, $v$ send an edge-message containing $\id(\{u,v\})$ to the same node in the same round.
Whenever a node receives two edge-messages with the same edge identifier, it immediately responds to the corresponding nodes, which thereby learn that both endpoints are active.
\iffull{}\else{
The proof of the following lemma can be found in the full version of this paper.

\begin{lemma}\label{lem:orientationPhase}
  In phase $i$ of the algorithm, every node $v \in L_i$ learns the directions of its incident edges.
  Each phase takes time $O(a + \log n)$, w.h.p.
  In every round, each node sends and receives at most $O(\log n)$ messages, w.h.p.
\end{lemma}

} \fi

\iffull{

\subsection{Analysis}

We now turn to the analysis of the algorithm. 
We mainly show the following lemma:

\begin{lemma}\label{lem:orientationPhase}
  In phase $i$ of the algorithm, every node $v \in L_i$ learns the directions of its incident edges.
  Each phase takes time $O(a + \log n)$, w.h.p.
  In every round, each node sends and receives at most $O(\log n)$ messages, w.h.p.
\end{lemma}

We present the proof in three parts:
first, we show the correctness of the algorithm, then analyze its runtime, and finally show that every node receives at most $O(\log n)$ messages in each round.

\begin{lemma}\label{lem:step1fail}
  In the first step, every active node fails to identify at most $\log n$ red edges, w.h.p.
\end{lemma}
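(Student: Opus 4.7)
The plan is to apply Lemma~\ref{lem:collision} to each active node individually and then take a union bound over all active nodes. The key point is that with the first-step parameters $s = c$ (for constant $c > 6$) and $q = 4ecd^* \log n$, the failure-probability bound of Lemma~\ref{lem:collision} with $k = \Theta(\log n)$ decays as an arbitrarily large polynomial in $n$, which is more than enough to absorb a union bound over the at most $n$ active nodes.

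Before applying the lemma I would fix an arbitrary $u \in L_i$ and bound its number $p$ of red edges. A red edge $(u,v)$ of $u$ corresponds to a non-playing (i.e., non-inactive) neighbor $v$. All edges of $u$ incident to already-inactive neighbors were oriented in earlier phases and do not participate in the current invocation of the Identification Algorithm; hence the red edges of $u$ are exactly its currently undirected incident edges, giving $p \le d_i(u) \le d_i^* \le d^* = O(a)$. The precondition $s \ge 4$ is immediate from $c > 6$, and the precondition $q \ge 4esp$ follows from $q = 4ecd^*\log n \ge 4es\cdot d^* \ge 4esp$ using $\log n \ge 1$.

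Now I would instantiate Lemma~\ref{lem:collision} with $k = \lceil \log n\rceil$. Substituting the parameters, the base $2sk/q$ is at most $1/(ed^*) \le 1/e$, so the failure probability for $u$ is at most
\[
 2\left(\tfrac{1}{e}\right)^{(c-2)\lceil\log n\rceil/2} \;=\; 2\cdot e^{-(c-2)\lceil\log n\rceil/2} \;=\; n^{-\Omega(c)}.
\]
Choosing $c$ a sufficiently large constant makes this smaller than $n^{-c'-1}$ for any desired $c'$, and a union bound over the at most $n$ active nodes then yields the claim.

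The main obstacle here is really just a careful bookkeeping step: one must correctly identify that the relevant upper bound on $p$ is $d^* = O(a)$ rather than $u$'s full degree $d(u)$ in $G$ (which could be as large as $n-1$). This works precisely because the Identification Algorithm in the first step operates only on the currently undirected edges of $u$, whose count is $d_i(u)$, and $u$ being active in phase $i$ forces $d_i(u) \le 2\overline{d_i} = O(a)$.
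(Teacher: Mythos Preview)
Your proof is correct and follows essentially the same route as the paper: bound the number of red edges of an active node by $p \le d_i(u) \le d^*$, plug the first-step parameters $s=c$, $q=4ecd^*\log n$, and $k=\Theta(\log n)$ into Lemma~\ref{lem:collision}, and finish with a union bound over all nodes. One small wording issue: edges to inactive neighbors \emph{do} participate in the Identification Algorithm---they are precisely the blue edges contributed by the playing (inactive) nodes---so the correct justification for $p \le d^*$ is simply that red edges are by definition those to non-playing (here, active or waiting) neighbors, of which there are exactly $d_i(u)$.
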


\begin{proof}
  Note that every active node can only be adjacent to at most $p \le d^*$ active or waiting nodes, i.e., it is incident to at most $p$ red edges.
  Therefore, by Lemma~\ref{lem:collision}, the probability that an active node $u$ fails to identify at least $\log n$ red edges is
  \[
    2\left(\frac{2c\log n}{4ecd^*\log n}\right)^{(c-2)\log n/2}
    \le \frac{1}{2^{(c/2 - 1)\log n - 1}}
    \le \frac{1}{n^{c/2 - 2}}.
  \]
  Taking the union bound over all nodes implies the lemma.
\end{proof}

\begin{lemma}\label{lem:step2}
  After the second step, every active node has identified all of its red edges, w.h.p.
\end{lemma}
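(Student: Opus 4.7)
The plan is to split the active nodes into three groups and argue each identifies all red edges. Active nodes that already succeeded in the first step are trivially done, so it suffices to handle the unsuccessful set $U = U_{high} \cup U_{low}$. I would treat the high-degree and low-degree parts separately, since the second step uses two different mechanisms for them.

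For $v \in U_{high}$, the argument is purely deterministic once we grant that the butterfly-broadcast and the subsequent ``direct delivery'' round scheme succeed. Since every $u \in A$ computes $R_u = U_{high} \cap N(u)$ from the globally known set of high-degree identifiers, and schedules a message to each $v \in R_u$ in a uniformly random slot from $\{1,\dots,\max\{|R_u|,d_i^*\}\}$, standard Chernoff-style concentration (plus the cap on $\max_u|R_u|$ via $|U_{high}|=O(a+\log n)$ w.h.p.) ensures every scheduled message is actually delivered. Consequently each $v \in U_{high}$ receives the identifier of every $u \in N(v) \cap A$, i.e. identifies all of its red edges.

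For $v \in U_{low}$, I would invoke the second invocation of the Identification Algorithm with $s = c\log n$ and $q = 4ec\log^2 n$. The key observation is that after step~1, Lemma~\ref{lem:step1fail} guarantees that $v$ has at most $\log n$ still-unidentified red edges w.h.p.; since an already-identified red edge is known to $v$ and can be excluded from its tallies $X(i), x(i)$, the effective value of $p$ in Lemma~\ref{lem:collision} for this second run is at most $\log n$. The hypothesis $q \ge 4esp$ of Lemma~\ref{lem:collision} then reads $4ec\log^2 n \ge 4ec\log n \cdot \log n$, which holds. Applying Lemma~\ref{lem:collision} with $k=1$,
\[
  \Pr[v \text{ misses any red edge}] \le 2\left(\frac{2c\log n}{4ec\log^2 n}\right)^{(c\log n - 2)/2} = 2\left(\frac{1}{2e\log n}\right)^{(c\log n - 2)/2},
\]
which is $n^{-\Omega(\log\log n)}$ and thus dominates $1/n^{c'}$ for every constant $c'$. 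A union bound over at most $n$ nodes in $U_{low}$ preserves this w.h.p.\ guarantee.

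Combining the three cases, every active node has identified all of its red edges w.h.p. The main obstacle I anticipate is the bookkeeping for $U_{low}$: justifying that the re-run of the Identification Algorithm effectively has $p \le \log n$ rather than $p = O(a)$. This requires being explicit that a learning node excludes its already-certified red edges from the XOR sums and counters used in the second invocation, so that the failure event in Lemma~\ref{lem:collision} only needs to be controlled against the residual unidentified red edges from step~1 rather than against all $d_i(u) = O(a)$ red edges.
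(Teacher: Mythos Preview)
Your proposal is correct and follows essentially the same approach as the paper: split into $U_{high}$ (handled by the direct broadcast-and-report scheme) and $U_{low}$ (handled by re-running the Identification Algorithm with $s=c\log n$, $q=4ec\log^2 n$ and invoking Lemma~\ref{lem:collision} with $k=1$ and $p\le\log n$ from Lemma~\ref{lem:step1fail}), then take a union bound. The paper's proof is terser---it simply asserts that a high-degree node ``immediately knows its red edges'' once it receives the identifiers of its active/waiting neighbors, and bounds the $U_{low}$ failure probability by $n^{-(c/2-2)}$ rather than your sharper $n^{-\Omega(\log\log n)}$---but the substance is the same, and your explicit remark that $u$ must exclude already-identified red edges from its local $X(i),x(i)$ tallies in the second invocation is exactly the implicit assumption the paper relies on.
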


\begin{proof}
  If $u \in U_{high}$, then after having received the identifiers of all neighbors that are active or waiting, $u$ immediately knows its red edges.
  Now let $u \in U_{low}$.
  Since by Lemma~\ref{lem:step1fail} $u$ has at most $p \le \log n$ remaining red edges, by Lemma~\ref{lem:collision} we have that the probability that $u$ fails to identify at most one of its remaining red edges is at most
  \[
    2\left(\frac{2c \log n}{4ec\log^2 n}\right)^{(c\log n - 2) / 2} \le \frac{1}{2^{c\log n/2 - 2}} \le \frac{1}{n^{c/2 - 2}}.
  \]
  Taking the union bound over all nodes implies the lemma.
\end{proof}

To bound the runtime of the complete algorithm, we now prove that each stage takes time $O(a + \log n)$, w.h.p.

\begin{lemma}\label{lem:stage1}
  Stage 1 takes time $O(a + \log n)$, w.h.p.
\end{lemma}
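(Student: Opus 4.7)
The plan is to bound Stage 1's runtime by plugging appropriate parameters into Theorem~\ref{thm:aggregationProblem} for the Aggregation Algorithm, and then adding the $O(\log n)$ cost of the Aggregate-and-Broadcast computation of $\overline{d_i}$. Stage 1 consists of exactly two subroutines: (i) every non-inactive node $u$ learns $d_i(u)$ by aggregating a \emph{1} from each inactive neighbor $v$ with $v \to u$, and (ii) all nodes learn $\overline{d_i}$. The second step costs $O(\log n)$ by Theorem~\ref{thm:aggregateAndBroadcast}, so it suffices to bound the cost of step (i).

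First I would establish that every inactive neighbor of a non-inactive node must be an in-neighbor. If $v$ is inactive and $u$ is not, then all of $v$'s edges were oriented when $v$ became active, and at that time $u$ was either active or waiting; in either case the edge was directed away from $v$, so $v \to u$. Hence the aggregation faithfully counts inactive neighbors of $u$.

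Next I would bound the three parameters of Theorem~\ref{thm:aggregationProblem}. The global load $L$ equals the total number of directed edges from inactive to non-inactive nodes; since we have computed an $O(a)$-orientation (Lemma~\ref{lem:orientationHighLevel}), every node has outdegree $O(a)$, so $L \le \sum_{v \text{ inactive}} d_{\text{out}}(v) = O(na)$ and thus $L/n = O(a)$. Similarly, each inactive node $v$ belongs to exactly $d_{\text{out}}(v) = O(a)$ aggregation groups (one per out-neighbor), and no non-inactive node belongs to any group, so $\ell_1 = O(a)$. By construction $\hat{\ell_2} = 1$ since each group $A_{\id(w)}$ has a unique target. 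Plugging these into Theorem~\ref{thm:aggregationProblem} gives a runtime of
\[
O\!\left(\tfrac{L}{n} + \tfrac{\ell_1 + \hat{\ell_2}}{\log n} + \log n\right) = O\!\left(a + \tfrac{a + 1}{\log n} + \log n\right) = O(a + \log n),
\]
w.h.p. Adding the $O(\log n)$ cost of the subsequent Aggregate-and-Broadcast to compute $\overline{d_i}$ and synchronize the transition to Stage~2 yields the claimed $O(a + \log n)$ bound.

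The only subtlety is the first step: one must confirm that the algorithm inductively maintains an $O(a)$-orientation on the already-oriented edges, which is exactly what Lemma~\ref{lem:orientationHighLevel} provides (the maximum outdegree ever assigned to any node is $O(\overline{d_i}) = O(a)$). Given this, the rest of the argument is a routine application of the communication primitives from Section~\ref{sec:primitives}.
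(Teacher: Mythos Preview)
Your proof is correct and follows essentially the same approach as the paper: bound $L = O(na)$, $\ell_1 = O(a)$, and $\hat{\ell_2} = 1$, then apply Theorem~\ref{thm:aggregationProblem} and add the $O(\log n)$ cost of Aggregate-and-Broadcast. You supply more detail than the paper's very terse argument, in particular the correctness check that every inactive neighbor of a non-inactive node is necessarily an in-neighbor (so the aggregation really counts all inactive neighbors), and the observation that the partial orientation already has outdegree $O(a)$ at every intermediate phase; the paper leaves both of these implicit.
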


\begin{proof}
  In the execution of the Aggregation Algorithm, every inactive node is member of at most $O(a)$ aggregation groups and every active node is target of at most one aggregation, i.e., $L = O(na)$ and $\ell_1 + \hat{\ell_2} = O(a)$.
  The lemma follows from Theorem~\ref{thm:aggregationProblem}.
\end{proof}

For the runtime of Stage 2 we need the following two lemmas.

\begin{lemma}\label{lem:highdeg}
  $|U_{high}| = O(a + \log n)$, w.h.p.
\end{lemma}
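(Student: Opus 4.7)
}
The plan is to first constrain the \emph{candidate pool} of nodes that could possibly lie in $U_{high}$ using the arboricity, and then to show that only a tiny fraction of these candidates can actually be unsuccessful, using Lemma~\ref{lem:collision} with $k=1$ together with a concentration argument.

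The first step is a simple counting argument. Every node in $U_{high}$ is active and has more than $n/\log n$ inactive neighbors. The total number of (active, inactive) adjacency pairs is at most $2|E|$, and by the arboricity bound $|E|\le a(n-1)$, so this quantity is at most $2a(n-1)$. Since each candidate contributes more than $n/\log n$ to this sum, the number of active nodes with more than $n/\log n$ inactive neighbors is at most $2a(n-1)/(n/\log n) = O(a\log n)$. Call this the candidate set $C$; note $U_{high}\subseteq C$ deterministically.

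The second step is to bound, for each $u\in C$, the probability that $u$ fails to identify all of its red edges in the first step of Stage 2. In that step, the learning nodes are the active nodes and the playing nodes are the inactive nodes, so the red edges of $u$ are precisely the edges to its active or waiting neighbors, of which there are at most $p\le d_i(u)\le 2\overline{d_i}\le 4a\le d^*$. The Identification Algorithm is run with $s=c$ and $q=4ecd^*\log n$ for some constant $c>6$, so Lemma~\ref{lem:collision} with $k=1$ gives
\[
  \Pr[u\text{ unsuccessful}]\;\le\;2\left(\frac{2c}{4ecd^*\log n}\right)^{(c-2)/2}
  \;=\;2\left(\frac{1}{2ed^*\log n}\right)^{(c-2)/2}
  \;=\;O\!\left(\frac{1}{\log^{(c-2)/2}n}\right),
\]
which for $c>6$ is at most $O(1/\log^2 n)$. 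Combined with the first step, this gives $\E[|U_{high}|]\le |C|\cdot O(1/\log^2 n) = O(a/\log n)$.

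Finally, I would turn this expectation bound into a high-probability bound via the generalized Chernoff inequality in Lemma~\ref{lem:generalChernoffBound}. Index the candidates and let $X_u$ be the indicator that $u\in C$ is unsuccessful; write $X=\sum_{u\in C} X_u$. Choosing a deviation of order $\log n$ above the mean $\mu=O(a/\log n)$ requires $\delta\mu=\Theta(\log n)$, so the lemma needs $\Theta(\log n)$-wise independence of the $X_u$'s, which is inherited from the $\Theta(\log n)$-wise independence of the hash functions $h_1,\ldots,h_s$ used by the Identification Algorithm (as noted in Section~\ref{sec:primitives}, this level of independence suffices). Plugging in gives $X\le O(a/\log n)+O(\log n) = O(a+\log n)$ with probability at least $1-n^{-c'}$ for any desired constant $c'$, and hence $|U_{high}|\le X = O(a+\log n)$ w.h.p.

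The main obstacle is the independence issue in the last step: the events $\{X_u=1\}$ for different candidates $u,v\in C$ are coupled through the shared hash functions, because if $u$ and $v$ are adjacent their red-edge identification routines look at overlapping hash values on the edge between them. The honest way to handle this is to observe that each $X_u$ is determined by at most $2sd^* = O(a)$ hash evaluations, and then to verify that the relevant moment (up to order $\Theta(\log n)$) of $X$ is controlled by the $\Theta(\log n)$-wise independence of the hash family, exactly matching the hypothesis of Lemma~\ref{lem:generalChernoffBound}. Everything else is routine arithmetic.
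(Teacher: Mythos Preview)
Your approach is essentially the paper's: bound the candidate pool by $O(a\log n)$ via a degree-sum argument, bound the per-candidate failure probability via Lemma~\ref{lem:collision} with $k=1$, and finish with Chernoff. The paper uses the cruder bound $\Pr[X_u=1]\le 1/\log n$ (giving $\E[X]\le 2a$) rather than your $O(1/\log^2 n)$, but this is immaterial.

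The one point worth correcting is your ``main obstacle''. Your stated source of dependence---that adjacent candidates $u,v$ look at overlapping hash values on the edge between them---is not actually present. The Identification Algorithm hashes \emph{directed} edges, so the red edge $(u,v)$ of $u$ has identifier $\id(u)\circ\id(v)$ while the red edge $(v,u)$ of $v$ has identifier $\id(v)\circ\id(u)$; these are distinct inputs. More generally, the red-edge set of a learning node $u$ consists only of arcs of the form $(u,\cdot)$, so different learning nodes have disjoint sets of hash inputs. Under the paper's standing assumption that the hash functions behave as truly random functions (see Section~\ref{sec:primitives}), the indicators $X_u$ are therefore mutually independent, and the ordinary Chernoff bound applies directly---which is exactly what the paper does. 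Your attempt to salvage $\Theta(\log n)$-wise independence of the $X_u$'s from $\Theta(\log n)$-wise independence of the hash family would not go through as stated anyway, since each $X_u$ depends on up to $\Theta(a)$ hash evaluations; but fortunately none of that is needed.
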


\begin{proof}
  Let $A = \{ u \in L_i \mid (d(u) - d_i(u)) > n/\log n\}$.
  Note that since $\overline{d} \le 2a$, we have that $\sum_{u \in V} d(u) \le 2an$, and therefore $|A| \le 2a\log n$.
  For $u \in A$ let $X_u$ be the binary random variable that is $1$, if $u$ is unsuccessful in the first step, and $0$, otherwise.
  By Lemma~\ref{lem:collision} and since $c > 6$, we have
  \[
    \Pr[X_u = 1] \le \frac{1}{\log^{c/2 - 2} n} \le \frac{1}{\log n}.
  \]
  Let $X = \sum_{u \in A} X_u$.
  $X$ is the sum of independent binary random variables with expected value $\E[X] \le 2a\log n/ \log n = 2a =: \mu$.
  Let $\delta = \max\{\alpha \log n / \mu, 1\}$ for some constant $\alpha > 3$, then by using the Chernoff bound of Lemma~\ref{lem:generalChernoffBound} we get that
  \[
    \Pr[X \ge (1 + \delta)\mu] \le e^{-\alpha \log n/3} \le \frac{1}{n^{\alpha/3}},
  \]
  and thus $X = O(a + \log n)$, w.h.p.
\end{proof}

\begin{lemma}\label{lem:lowdeg}
  $\sum_{u \in U_{low}} (d(u) - d_i(u)) = O(an/\log n + n)$, w.h.p.
\end{lemma}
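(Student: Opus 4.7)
My plan is to bound $\sum_{u \in U_{low}}(d(u)-d_i(u))$ by first controlling the per-node failure probability via Lemma~\ref{lem:collision} and then upgrading the expectation bound to a high-probability statement via bucketing by $w_u := d(u) - d_i(u)$. Fix any $u \in L_i$ with $w_u \le n/\log n$. In step~1 of Stage~2, $u$ has at most $p \le d^* = O(a)$ red edges and the Identification Algorithm runs with $s = c$ hash functions and $q = 4ecd^*\log n$ trials. Applying Lemma~\ref{lem:collision} with $k = 1$ then gives
\[
  \Pr[u \text{ unsuccessful}] \;\le\; 2\bigl(1/(2ed^*\log n)\bigr)^{(c-2)/2} \;=\; O\bigl(1/\log^{(c-2)/2} n\bigr),
\]
which is $O(1/\log^2 n)$ since $c > 6$. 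Letting $X_u$ be the indicator of $u \in U_{low}$ and using $\overline{d} \le 2a$ to bound $\sum_{u \in V} w_u \le 2an$, we already obtain $\E[\sum_u w_u X_u] = O(an/\log^2 n)$, which is below target in expectation; the remaining work is concentration.

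\paragraph*{Bucketing.}
For concentration I partition $\{u \in L_i : w_u \le n/\log n\}$ into $J+1 = O(\log n)$ geometric buckets $S_j := \{u \in L_i : 2^{j-1} < w_u \le 2^j\}$ for $j = 0, \ldots, J$ with $J = \lceil\log(n/\log n)\rceil$. The global weight bound forces $|S_j| \le 4an/2^j$. Within a single bucket, the events ``$u \in U_{low}$'' for distinct $u$ depend on the hashes of pairwise disjoint sets of directed edges $(u,\cdot)$ (the identifier $\id(u,v)$ begins with $\id(u)$), so following the discussion in Section~\ref{sec:primitives} the $\Theta(\log n)$-wise independence of the shared hash family is enough to apply Lemma~\ref{lem:generalChernoffBound} to each bucket individually, yielding
\[
  |U_{low} \cap S_j| \;\le\; O\bigl(\max(|S_j|/\log^2 n,\, \log n)\bigr)
\]
w.h.p.

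\paragraph*{Finishing and main obstacle.}
A union bound over the $O(\log n)$ buckets then produces
\[
  \sum_{u \in U_{low}} w_u \;\le\; \sum_{j=0}^{J} 2^j\,|U_{low} \cap S_j| \;=\; O\bigl(J \cdot an/\log^2 n\bigr) + O\bigl(2^{J+1}\log n\bigr) \;=\; O(an/\log n + n),
\]
where in each bucket $2^j \cdot |S_j|/\log^2 n \le 4an/\log^2 n$ gives the first sum and $\sum_j 2^j \log n = O(n)$ gives the second. The main obstacle is the concentration step: each $X_u$ depends on $\Omega(a)$ hash values, so with only $\Theta(\log n)$-wise independence we cannot apply Chernoff to the unrestricted sum and must argue bucket by bucket. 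Moreover, the tight factor of $1/\log^2 n$ in the per-node failure probability (rather than merely $1/\log n$) is crucial: we inevitably lose one factor of $\log n$ when summing expectations across the $J = O(\log n)$ buckets, and a weaker per-node bound would only yield $O(an + n)$, far short of the required $O(an/\log n + n)$.
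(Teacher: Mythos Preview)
Your argument is correct, but it takes a longer route than the paper. The paper applies the generalized Chernoff bound of Lemma~\ref{lem:generalChernoffBound} \emph{once} to the weighted sum $X=\sum_{u} X_u$, where $X_u = w_u$ if $u$ is unsuccessful and $X_u=0$ otherwise: since $X_u \in [0,n/\log n]$ one may take $b=n/\log n$, the bound $\Pr[u\text{ unsuccessful}]\le 1/\log n$ gives $\E[X]\le an/\log n=:\mu$, and choosing $\delta=\max\{\alpha n/\mu,1\}$ yields exponent $-\delta\mu/(3b)\le -\alpha(\log n)/3$ together with $(1+\delta)\mu=O(an/\log n+n)$. No bucketing is needed.

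Two remarks on your ``main obstacle'' paragraph. First, the claim that the failure probability $O(1/\log^2 n)$ is \emph{crucial} is incorrect: the paper's direct argument uses only $O(1/\log n)$. The extra $\log n$ you lose when summing across $J=O(\log n)$ buckets is an artifact of the bucketing itself; the direct approach recovers it because the factor $b=n/\log n$ in Lemma~\ref{lem:generalChernoffBound} already contributes a $\log n$ to the exponent. Second, bucketing does not address the limited-independence concern you raise. The events for distinct $u$ depend on pairwise disjoint sets of edge identifiers \emph{regardless} of bucket membership, so whatever independence this buys is available globally; conversely, within each bucket every indicator still depends on $\Omega(a)$ hash values, so the obstacle you name persists unchanged. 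The paper simply assumes perfect randomness (as stated explicitly in Section~\ref{sec:primitives}) and defers derandomization to the generic remark there; your bucketing is harmless but also not helpful in that regard.
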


\begin{proof}
  Let $A = \{ u \in L_i \mid (d(u) - d_i(u)) > n/\log n\}$.
  For a node $u \in A$, let $X_u$ be the random variable that is $d_u$, if $u$ is unsuccessful in the first step, and $0$, otherwise.
  From the proof of Lemma~\ref{lem:highdeg}, we have that $Pr[X_u = d_u] \le 1/\log n$.
  Let $A$ be the set of active nodes.
  Then $X = \sum_{u \in A}X_u$ is a sum of independent random variables with expected value $E[X] \le \sum_{u \in A}d(u) / \log n \le an / \log n =: \mu$.
  Note that $d(u) \le n/\log n$ for all $u \in A$.
  Therefore, we can use the Chernoff bound of Lemma~\ref{lem:generalChernoffBound} with $\delta = \max\{\alpha n / \mu, 1\}$ for some constant $\alpha>3$, and get
  \[
    \Pr[X \ge (1 + \delta) \mu] \le e^{- \alpha n \log n/(n3)} \le \frac{1}{n^{\alpha/3}}.
  \]
  Therefore, we have that $X = O(an/\log n + n)$, w.h.p.
\end{proof}

We are now ready to bound the runtime of Stage 2.

\begin{lemma}\label{lem:stage2}
  Stage 2 takes time $O(a + \log n)$, w.h.p.
\end{lemma}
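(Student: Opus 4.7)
The plan is to decompose Stage 2 into its constituent procedures and bound each individually by $O(a+\log n)$ w.h.p., invoking the appropriate primitive theorem after establishing parameter bounds. The procedures are: (i) the first invocation of the Identification Algorithm; (ii) the pipelined butterfly broadcast of the identifiers of $U_{high}$; (iii) the random-round direct communication from nodes in $A$ to their neighbors in $U_{high}$; (iv) the multicast-tree setup and multicast by which every inactive node learns its $U_{low}$-neighbors; and (v) the second invocation of the Identification Algorithm restricted to $U_{low}$.

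For (i), I would apply Theorem~\ref{thm:aggregationProblem} with $s=c$ and $q=4ecd^*\log n = O(a\log n)$. The key observation is that by the construction behind Lemma~\ref{lem:orientationHighLevel}, any node that became active in some earlier phase $j\le i$ had at most $2\overline{d_j}\le 4a$ still-undirected edges at that moment, so every inactive node has out-degree $O(a)$; hence its set of potentially learning neighbors also has size $O(a)$. This yields $\ell_1=O(sa)=O(a)$ and $L\le s\sum_{v\text{ inactive}} d_{out}(v)=O(an)$, while $\hat{\ell_2}\le q=O(a\log n)$. Substituting into $O(L/n+(\ell_1+\hat{\ell_2})/\log n+\log n)$ gives $O(a+\log n)$.

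For (ii), Lemma~\ref{lem:highdeg} supplies $|U_{high}|=O(a+\log n)$ w.h.p., so pipelining identifiers up to node $0$ and then back out through the butterfly takes $O(|U_{high}|+\log n)=O(a+\log n)$ rounds. For (iii), I would use a Chernoff bound on the uniform random round choice: each sender $u$ distributes $|R_u|\le|U_{high}|$ messages over $\max\{|R_u|,d_i^*\}=\Theta(a+\log n)$ rounds, giving $O(\log n)$ per-round load w.h.p.; each $v\in U_{high}$ receives $d_i(v)\le d_i^*$ messages spread over at least $d_i^*$ rounds, again $O(\log n)$ per round w.h.p. For (iv), multicast-group memberships are bounded by $d_{out}\le O(a)$ and total load $L=O(an)$, so Theorems~\ref{thm:treesetup} and~\ref{thm:multicastProblem} each yield $O(a+\log n)$.

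The main obstacle is procedure (v), where running the Identification Algorithm with $s=c\log n$ and $q=4ec\log^2n$ threatens to inflate $L$ by a $\log n$ factor and break the bound. The resolution, and the very reason $U_{high}$ was peeled off first, is Lemma~\ref{lem:lowdeg}: the total number of edges from $U_{low}$ to inactive nodes is only $O(an/\log n+n)$, so multiplying by $s=\Theta(\log n)$ still yields $L=O(an+n\log n)$ and $L/n=O(a+\log n)$. Together with $\ell_1=O(sa)=O(a\log n)$ contributing $O(a)$ after division by $\log n$, and $\hat{\ell_2}=q=O(\log^2n)$ contributing $O(\log n)$, Theorem~\ref{thm:aggregationProblem} delivers $O(a+\log n)$. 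Adding the five $O(a+\log n)$ bounds completes the claim.
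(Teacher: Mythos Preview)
Your proposal is correct and follows essentially the same approach as the paper: you decompose Stage~2 into the same five pieces, invoke Lemmas~\ref{lem:highdeg} and~\ref{lem:lowdeg} at the same places, and bound each piece via Theorem~\ref{thm:aggregationProblem} (or the multicast primitives) with the same parameter estimates, including the crucial use of Lemma~\ref{lem:lowdeg} to absorb the extra $\log n$ factor in step~(v). The only minor omissions are the $O(\log n)$ cost of computing $d_i^*$ via Aggregate-and-Broadcast at the start of the stage and the $O(\log n)$ cost of distributing the hash functions before each invocation of the Identification Algorithm, both of which the paper mentions explicitly but which are trivially dominated.
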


\begin{proof}
  The computation of $d^*$ at the beginning of the first step takes time $O(\log n)$.
  To perform the first execution of the Identification Algorithm, every node has to learn $s = O(1)$ hash functions, which can be done in time $O(\log n)$ (see Section \ref{sec:primitives}).
  In the first execution of the Identification Algorithm, every active node $u$ is target of aggregation group $A_{\id(u) \circ i}$ for every trial $i$, and every inactive neighbor $v$ of $u$ is member of all aggregation groups $A_{\id(u) \circ i}$ such that $(u,v)$ participates in trial $i$.
  Therefore, every active node is target of at most $4ecd^* \log n$ and every inactive node is a member of at most $cd^*$ aggregation groups.
  Since both values are known to every node, the nodes know an upper bound $\hat{\ell_2} = 4ecd^* \log n$ on $\ell_2$.
  Since every inactive node is a member of at most $cd^*$ aggregation groups, the global load $L$ is bounded by $ncd^*$.
  By Theorem~\ref{thm:aggregationProblem}, the Aggregation Algorithm takes time
  \[
    O\left(\frac{ncd^*}{n} + \frac{4ecd^*\log n}{\log n} + \log n\right) = O(a + \log n),
  \]
  w.h.p., to solve the problem.

  Now consider the second step.
  By Lemma~\ref{lem:highdeg}, $|U_{high}| = O(a + \log n)$, w.h.p..
  A simple delay sequence argument can be used to show that all identifiers are broadcasted within time $O(a + \log n)$.
  Informing each node in $U_{high}$ about its red edges takes an additional $O(a + \log n)$ rounds, as $|R_u| = O(a + \log n)$ for every node $u$ and $d_i^* = O(a)$.

  The multicast trees to handle low-degree nodes are constructed in time $O(a + \log n)$, as every inactive node joins at most $d^*$ multicast groups, and the resulting trees have congestion $O(nd^*/n + \log n) = (a + \log n)$, w.h.p.
  Correspondingly, the multicast can be performed in time $O(a + \log n)$, w.h.p.

  We now bound the runtime of the final execution of the Identification Algorithm.
  First, note that the $s = \Theta(\log n)$ hash functions can be learned by broadcasting the $O(\log^2 n)$ bits required for each hash function (see Section~\ref{sec:primitives}) in a \emph{pipelined fashion} in a binary tree, which is implicitly given in the network.
  Clearly, this takes time $O(\log n)$ and requires each node to send and receive only $O(\log n)$ messages in each round.
  Every inactive node is a member of at most $O(a \log n)$ aggregation groups, and every node is a target of at most $4ec\log^2 n$ aggregation groups.
  By Lemma~\ref{lem:lowdeg} $\sum_{u \in U_{low}} (d(u) - d_i(u)) = O(an/\log n + n)$, w.h.p.
  As this is also a bound on the number of edges that participate in any trial, and each edge participates in $c \log n$ trials, the global load $L$ is bounded by $O(an + n\log n)$.
  Therefore, by Theorem~\ref{thm:aggregationProblem}, the Aggregation Algorithm takes time $O(a + \log n)$, w.h.p.
\end{proof}

The lemma below follows from the fact that $d_i^* = O(a)$.

\begin{lemma}\label{lem:stage3}
  Stage 3 takes time $O(a + \log n)$.
\end{lemma}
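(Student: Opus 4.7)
The plan is to show that Stage 3 finishes in $d_i^* + O(\log n) = O(a + \log n)$ rounds and that the per-node sending and receiving loads remain within the $O(\log n)$ model capacity w.h.p.\ The starting observation is that $d_i^* = O(a)$: by definition every active node $u \in L_i$ satisfies $d_i(u) \le 2\overline{d_i}$, and since any subgraph of $G$ has arboricity at most $a$, its average degree obeys $\overline{d_i} \le 2a$. Hence scheduling for $d_i^*$ rounds costs only $O(a)$ rounds, to which we add the $O(\log n)$ rounds needed to synchronize via the Aggregate-and-Broadcast Algorithm and to distribute the shared descriptions of the hash functions $h$ and $r$.

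For correctness, I would note that for a red edge $\{u,v\}$ whose other endpoint $v$ is also active, both $u$ and $v$ view the edge under the same canonical identifier $\id(u) \circ \id(v)$ (with $\id(u) < \id(v)$), and therefore independently compute the same target $h(\id(\{u,v\}))$ and the same round $r(\id(\{u,v\}))$. The target thus receives two copies of the edge-message in that round, detects the collision, and sends back an acknowledgment to each endpoint. For a red edge whose other endpoint is waiting, no collision occurs and $u$ correctly concludes the neighbor is not active.

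The main work is the load analysis. For the sending side, a fixed active node $u$ sends at most $d_i(u) = O(a)$ edge-messages in total, each placed into one of $d_i^*$ rounds independently (or under $\Theta(\log n)$-wise independence, per the discussion of Section~\ref{sec:primitives}). The expected number placed into any fixed round is $d_i(u)/d_i^* \le 1$, so Lemma~\ref{lem:generalChernoffBound} yields at most $O(\log n)$ sends from $u$ in that round, w.h.p., and a union bound over $n$ nodes and $d_i^* = O(a)$ rounds preserves this. For the receiving side, the total expected number of edge-messages sent in any fixed round is at most $\sum_{u \in L_i} d_i(u)/d_i^* \le 2|E|/d_i^*$, and since $|E| \le an$, this is $O(n)$. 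The hash function $h$ distributes these messages uniformly across the $n$ nodes, so each target receives $O(1)$ in expectation and $O(\log n)$ w.h.p. The acknowledgment messages are sent back only to the originators of colliding edge-messages, so any node receives in a round no more acknowledgments than edge-messages it dispatched that round, again $O(\log n)$.

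The only delicate point is that the hash functions $h$ and $r$ are only pseudorandom; however, Lemma~\ref{lem:generalChernoffBound} is already stated for $k$-wise independent variables with $k = \Theta(\log n)$, and by Section~\ref{sec:primitives} such hash functions have description length $O(\log^2 n)$ and can be disseminated in a one-time $O(\log n)$-round preprocessing step. Putting the three pieces together (running time $d_i^*$, send bound, receive bound) yields the claimed $O(a + \log n)$ bound.
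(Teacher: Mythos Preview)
Your proof is more thorough than the paper's: the paper's argument for Lemma~\ref{lem:stage3} is literally one line (``follows from the fact that $d_i^* = O(a)$''), because Stage~3 runs for exactly $d_i^*$ scheduling rounds plus an $O(\log n)$ synchronization. The load analysis you include is carried out in the paper in a separate lemma that bounds messages per node per round across all stages. So you are essentially folding two lemmas into one, which is fine.

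There is, however, a small gap in your receiving-side bound. You write
\[
\sum_{u \in L_i} \frac{d_i(u)}{d_i^*} \;\le\; \frac{2|E|}{d_i^*} \;\le\; \frac{2an}{d_i^*},
\]
and then claim this is $O(n)$. But $d_i^*$ is only the maximum of $d_i(u)$ over \emph{active} nodes in phase $i$; in late phases it may be much smaller than $a$ (e.g., $d_i^*=1$ while $a$ is large), so $2an/d_i^*$ need not be $O(n)$. The fix is immediate and is exactly what the paper uses: since every $u\in L_i$ satisfies $d_i(u)\le d_i^*$ by definition of $d_i^*$, you get
\[
\sum_{u \in L_i} \frac{d_i(u)}{d_i^*} \;\le\; |L_i| \;\le\; n
\]
directly, so the expected number of edge-messages hitting a fixed node in a fixed round is at most $1$, and Lemma~\ref{lem:generalChernoffBound} gives $O(\log n)$ w.h.p. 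With this correction your argument is complete and matches the paper's (split) reasoning.
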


Finally, it remains to show that no node receives too many messages.

\begin{lemma}
  In each round of the algorithm, every node sends and receives at most $O(\log n)$ messages, w.h.p.
\end{lemma}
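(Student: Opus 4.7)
The plan is to verify the $O(\log n)$ message bound stage by stage, leaning on the fact that all invocations of the communication primitives of Section~\ref{sec:primitives} already enforce it by construction. The only real work is to treat the handful of steps in Stages~2 and~3 that route messages outside those primitives.

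I would first dispatch the primitive-based parts. Stage~1 only performs Aggregation and Aggregate-and-Broadcast, so Theorems~\ref{thm:aggregateAndBroadcast} and~\ref{thm:aggregationProblem} give the bound immediately. Similarly, in Stage~2 the two calls to the Identification Algorithm are realized through the Aggregation Algorithm, and the multicast used to notify low-degree unsuccessful nodes is realized through Theorems~\ref{thm:treesetup} and~\ref{thm:multicastProblem}, so each of these respects the per-round capacity. For the pipelined broadcast of the $O(a+\log n)$ identifiers of $U_{high}$ (via the smallest-identifier-first rule through the butterfly), a standard delay-sequence argument shows that each node transmits only $O(1)$ identifiers per round.

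Next I would treat the step in Stage~2 where each $u \in A$ picks a round uniformly in $\{1,\ldots,\max\{|R_u|,d_i^*\}\}$ and sends its identifier to every $v \in R_u$ in that round. Every sender trivially sends at most one message per round by construction. For a fixed receiver $v \in U_{high}$, at most $d^* = O(a)$ neighbors in $A$ target $v$, each independently choosing $v$'s round with probability $1/\max\{|R_u|,d_i^*\} = O(1/(a+\log n))$, so the expected number of messages received by $v$ per round is $O(1)$; Lemma~\ref{lem:generalChernoffBound} applied to these independent round choices then amplifies this to $O(\log n)$ with high probability.

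The main obstacle will be Stage~3, since the hashing of each edge to the pair $(h(\id(e)), r(\id(e)))$ is the only place where load could concentrate on a single node-round. For the sending side, the $d_i(u)$ candidate edges at a node $u$ each independently hit any given round with probability $1/d_i^*$, giving expected load $d_i(u)/d_i^* = O(1)$ per round, hence $O(\log n)$ with high probability by Chernoff. For the receiving side, I would first observe that the total number of edge-messages sent in any fixed round is a sum of independent indicators with expectation $\Theta(\sum_u d_i(u)/d_i^*) = O(n)$, and is therefore $O(n)$ with high probability; conditioning on this schedule, the independence of $h$ from $r$ implies that each such message targets a given receiver with probability $1/n$, so the expected load per receiver per round is $O(1)$ and another application of Lemma~\ref{lem:generalChernoffBound} lifts it to $O(\log n)$. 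The acknowledgement messages sent back upon collisions obey the same distribution and need no separate treatment. A union bound over the $O(\log n)$ phases, the $O(a+\log n)$ rounds per phase, and the $n$ nodes then gives the claimed high-probability guarantee and completes the proof.
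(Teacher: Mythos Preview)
Your overall structure matches the paper's: dispose of the primitive-based steps by appeal to Section~\ref{sec:primitives}, then handle the two ad-hoc routing steps (informing $U_{high}$ in Stage~2, and the edge-hashing of Stage~3) with Chernoff bounds. Your Stage~3 argument is fine; the two-step conditioning you propose (bound the per-round total, then the per-receiver load using the independence of $h$ from $r$) is a valid alternative to the paper's one-shot computation, which simply notes that at most $nd_i^*$ edge-identifiers are hashed uniformly into $V\times\{1,\ldots,d_i^*\}$, so the expected load at any (node, round) pair is at most $1$ and Chernoff finishes.

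There is, however, a concrete error in your Stage~2 sender analysis. You write that ``$u$ picks a round $\ldots$ and sends its identifier to every $v\in R_u$ in that round'' and conclude that each sender trivially sends at most one message per round. This misreads the algorithm: $u$ actually draws an \emph{independent} random round for each $v\in R_u$. Under your reading $u$ would send $|R_u|$ messages in a single round, and since $|R_u|$ can be as large as $|U_{high}|=\Theta(a+\log n)$ this is not $O(\log n)$ when $a\gg\log n$; under the correct algorithm, several targets can still collide in the same round. Either way the ``by construction'' claim is false, and one needs the Chernoff bound the paper invokes (expected per-round send load $|R_u|/\max\{|R_u|,d_i^*\}\le 1$). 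Your receiver-side probability estimate $1/\max\{|R_u|,d_i^*\}=O(1/(a+\log n))$ is also not correct in general---the denominator can be as small as $d_i^*=\Theta(a)$---but the conclusion survives since there are at most $d_i(v)\le d_i^*$ senders, each hitting a fixed round with probability at most $1/d_i^*$.
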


\begin{proof}
  By the discussion of Section~\ref{sec:primitives}, the executions of the Aggregation, Multicast Tree Setup, and Multicast Algorithm ensure that every node receives only $O(\log n )$ messages in each round.
  It remains to show the claim for the second step of Stage 2, where high-degree nodes broadcast their identifiers and receive their red edges, and for Stage 3, where active nodes learn which of their red edges lead to other active nodes.

  For the first part, note that after all high-degree nodes have broadcasted their identifiers, every active or waiting node sends out $O(\log n)$ messages containing its identifier in every round, w.h.p., which can easily be shown using Chernoff bounds.
  Second, as every high-degree node receives at most $d_i^*$ identifiers, it also follows from the Chernoff bound that every such node receives at most $O(\log n)$ messages in each round, w.h.p.

  Now consider Stage 3 of the algorithm.
  Again, by using the Chernoff bound, it can easily be shown that no node sends out more than $O(\log n)$ edge-messages in any round.
  Therefore, every node only receives $O(\log n)$ response messages in every round.
  It remains to show that every node receives at most $O(\log n)$ edge-messages in every round, from which it follows that it only sends out $O(\log n)$ response messages in every round.
  Let $A = \{ \{u,v\} \mid u \text{ or } v \text{ is active}\}$ and note that $|A| \le nd_i^*$.
  Fix a node $u \in V$ and a round $i \in \{1, \ldots, d_i^*\}$ and let $X_e$ be the binary random variable that is $1$ if and only if $h(\id(e)) = u$ and $r(\id(e)) = i$ for $e \in A$.
  Then $\Pr[X_e = 1] = 1/(nd_i^*)$.
  $X = \sum_{e \in A} X_e$ has expected value $E[X] \le 1$.
  Using the Chernoff bound we get that $X = O(\log n)$, w.h.p., which implies that $u$ receives at most $O(\log n)$ edge-messages in round $i$.
  The claim follows by taking the union bound over all nodes and rounds.
\end{proof}
} \fi
Taking Lemma~\ref{lem:orientationPhase} together with Lemma~\ref{lem:orientationHighLevel} yields the final theorem of this section.

\begin{theorem}\label{thm:orientation}
  The Orientation Algorithm computes an \iffull{}\else{\\}\fi 
  $O(a)$-orientation in time $O((a + \log n)\log n)$, w.h.p.
\end{theorem}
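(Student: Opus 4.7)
The plan is straightforward once the two structural lemmas are in hand. By Lemma~\ref{lem:orientationHighLevel}, the algorithm halts after $O(\log n)$ phases and the directions it assigns form an $O(a)$-orientation, so correctness is inherited directly. For the runtime, I would apply Lemma~\ref{lem:orientationPhase} to each individual phase and union-bound the $O(\log n)$ high-probability events; the total time is $O(\log n) \cdot O(a + \log n) = O((a + \log n)\log n)$.

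The content hidden inside Lemma~\ref{lem:orientationPhase} is the real technical part, and its proof decomposes along the three stages of a single phase. Stage~1 is a direct application of the Aggregation and Aggregate-and-Broadcast Algorithms to compute each $d_i(u)$ and $\overline{d_i}$; with global load $L = O(na)$ (each inactive node joins $O(a)$ aggregation groups, one per already-oriented out-edge) and $\ell_1 + \hat{\ell_2} = O(a)$, Theorem~\ref{thm:aggregationProblem} delivers $O(a + \log n)$ rounds. Stage~3 uses the $(h, r)$ hashing: the two endpoints of any edge $\{u,v\}$ deterministically collide on the same $(h, r)$-pair, so correctness is clean; the time bound is $d_i^* = O(a)$ rounds provided a Chernoff bound shows no node receives more than $O(\log n)$ edge-messages in any single round, which holds because the expected per-node per-round load is $O(1)$.

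Stage~2 is where the main obstacle sits, and I would handle it in two substeps. The first substep runs the Identification Algorithm with $s = O(1)$ hash functions and $q = \Theta(d^* \log n)$ trials. Plugging $p = O(d^*)$ and $k = \log n$ into Lemma~\ref{lem:collision} gives failure probability $n^{-\Omega(1)}$ per active node, so after the substep at most $\log n$ red edges per node remain unresolved w.h.p. The supporting Aggregation Algorithm has $L = O(nd^*)$, $\ell_1 = O(d^*)$, $\hat{\ell_2} = O(d^* \log n)$, yielding $O(a + \log n)$ time. The second substep must resolve the unsuccessful set $U$, which I split into $U_{high}$ and $U_{low}$ according to whether $d(u) - d_i(u)$ exceeds $n/\log n$. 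A Chernoff bound using $\sum_u d(u) \le 2an$ (so at most $2a\log n$ nodes can lie in the high-degree tier to begin with) yields $|U_{high}| = O(a + \log n)$ w.h.p.; these identifiers are pipelined through the butterfly and then reported back via randomly scheduled rounds, all within $O(a + \log n)$ time. For $U_{low}$, another Chernoff bound limits the total edge-work to $O(an/\log n + n)$ w.h.p., so a second invocation of the Identification Algorithm with $s = \Theta(\log n)$ and $q = \Theta(\log^2 n)$ drives the per-node failure probability below $n^{-\Omega(1)}$ (now $p \le \log n$, $k = 1$) while still fitting into $O(a + \log n)$ rounds.

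The hard part will be tuning the parameters $(s, q)$ in the two Identification runs so that the three competing quantities in the Aggregation Algorithm's bound, namely $L/n$, $(\ell_1 + \hat{\ell_2})/\log n$, and $\log n$, all collapse to $O(a + \log n)$ simultaneously, and in parallel verifying through Chernoff bounds over the random hashing and random round choices that no node ever exceeds its per-round capacity of $O(\log n)$ messages. Once each stage is certified at $O(a + \log n)$ rounds with the capacity constraint respected, Lemma~\ref{lem:orientationPhase} follows, and the theorem is the product of the phase bound with the $O(\log n)$ phase count from Lemma~\ref{lem:orientationHighLevel}.
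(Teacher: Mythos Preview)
Your proposal is correct and follows essentially the same approach as the paper: the theorem itself is obtained by combining Lemma~\ref{lem:orientationHighLevel} (there are $O(\log n)$ phases producing an $O(a)$-orientation) with Lemma~\ref{lem:orientationPhase} (each phase runs in $O(a+\log n)$ rounds, w.h.p.), and your detailed breakdown of Lemma~\ref{lem:orientationPhase} into the three stages---including the parameter choices $s=O(1),\,q=\Theta(d^*\log n)$ in the first Identification run, the $U_{high}/U_{low}$ split with the Chernoff bounds yielding $|U_{high}|=O(a+\log n)$ and $\sum_{u\in U_{low}}(d(u)-d_i(u))=O(an/\log n+n)$, and the second run with $s=\Theta(\log n),\,q=\Theta(\log^2 n)$---matches the paper's analysis (Lemmas~\ref{lem:step1fail}--\ref{lem:stage3}) essentially line for line.
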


\section{Graph Problems Beyond MST}

We conclude our initiating study of the Node-Capacitated Clique by presenting a set of graph problems that can be solved efficiently in graphs with bounded arboricity.
The presented algorithms rely on a structure of precomputed multicast trees.
More specifically, for every node $u \in V$ we construct a multicast tree $T_{\id(u)}$ for the multicast group $A_{\id(u)} = N(u)$.
Since such trees enable the nodes to send messages to all of their neighbors, in the following we refer to them as \emph{broadcast trees}.

In a naive approach to construct these trees, one could simply use the Multicast Tree Setup Algorithm, where each node joins the multicast group of every neighbor.
However, as $\ell = \Delta$, the time to construct these trees would be $O(\overline{d} + \Delta/\log n + \log n)$, which can be $O(n/\log n)$ if $G$ is a star, for example.
Instead, we first construct an $O(a)$-orientation of the edges as shown in the previous section, and let $u$ only join multicast groups $A_{\id(v)}$ (which translates to injecting one packet per group into the butterfly) for every out-neighbor $v$.
Additionally, for every out-neighbor $v$ it takes care of $v$ joining $u$'s multicast group by injecting a packet for $v$.
In case of a star for example (whose arboricity is one), every node, including the center, injects at most two packets.
In general, we obtain the following result.

\begin{lemma}\label{lem:broadcasttrees}
  Setting up broadcast trees takes time $O(a + \log n)$, w.h.p.
  The congestion of the broadcast trees is $O(a + \log n)$, w.h.p.
\end{lemma}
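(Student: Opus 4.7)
The plan is to realize the construction of all $n$ broadcast trees as a single invocation of the Multicast Tree Setup Algorithm (Theorem~\ref{thm:treesetup}) with multicast groups $A_{\id(u)} = N(u)$ and sources $u$. Applied naively this gives per-node membership $\ell = \Delta$, so a node of degree $\Delta$ would contribute $\Omega(\Delta/\log n)$ to the runtime — useless when $\Delta \gg a$ (e.g., on a star). The main obstacle is therefore to keep the effective per-node injection work at $O(a)$ while still covering every (member, group) pair.

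The key step is to route the injection through the precomputed $O(a)$-orientation from Theorem~\ref{thm:orientation}. For every directed edge $u \to v$, node $u$ injects two packets into the butterfly: one declaring its own membership $u \in A_{\id(v)}$, and one declaring $v \in A_{\id(u)}$ on behalf of $v$. Node $u$ has all required information, since after the orientation it knows both endpoint identifiers and the direction of each incident edge. Because $d_{out}(u) = O(a)$ holds for every node, each node injects at most $2\,d_{out}(u) = O(a)$ packets, and the total global load is $L = 2\sum_u d_{out}(u) = 2|E| = O(an)$, so $L/n = O(a)$.

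These packets are then fed into the Multicast Tree Setup Algorithm. The point is that, once every (member, group) pair has been inserted at the topmost butterfly level, the algorithm's random-placement, random-rank routing, and tree-recording are agnostic to \emph{which} physical node performed the insertion. Hence the runtime bound of Theorem~\ref{thm:treesetup} holds with the per-node injection load $O(a)$ substituted for $\ell$, yielding $O(L/n + O(a)/\log n + \log n) = O(a + \log n)$ w.h.p., and the congestion bound $O(L/n + \log n) = O(a + \log n)$ w.h.p.\ follows from the same theorem.

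The remaining technical check — the place where I expect the most care — is verifying that the Chernoff-type argument controlling per-round butterfly load inside Theorem~\ref{thm:treesetup} still goes through when packets for a given group originate from out-neighbors rather than from the members themselves. Since each packet is still placed at an independent uniformly random topmost node and from there routed by the shared hash function to a bottommost root, the randomness driving that argument is unchanged, and the analysis transfers with only cosmetic rewriting, giving both claimed bounds.
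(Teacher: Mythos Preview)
Your proposal is correct and matches the paper's approach essentially line for line: use the precomputed $O(a)$-orientation so that each node $u$ injects, for every out-neighbor $v$, one packet for $u\in A_{\id(v)}$ and one packet for $v\in A_{\id(u)}$, then invoke Theorem~\ref{thm:treesetup} with per-node load $O(a)$ and global load $L=2|E|=O(an)$. Your observation that the Multicast Tree Setup analysis is agnostic to which physical node performs the injection (since each packet is placed at an independent uniformly random level-$0$ butterfly node regardless) is exactly the point the paper relies on implicitly.
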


The corollary below, which follows from the analysis of Theorem~
\ref{thm:multiaggregationProblem}, establishes one of the key techniques used by the algorithms in this section.

\begin{corollary}\label{cor:broadmultiaggregation}
  Let $S \subseteq V$.
  Using the broadcast trees, the Multi-Aggregation Algorithm solves any Multi-Aggregation Problem with multicast groups $A_{\id(u)} = N(u)$ and $s_{\id(u)} = u$ for all $u \in S$ in time $O(\sum_{u \in S}d(u)/n + \log n)$, w.h.p.
\end{corollary}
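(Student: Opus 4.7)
The plan is to invoke Theorem~\ref{thm:multiaggregationProblem} directly, but with a more careful bound on the congestion than the worst-case one given by Lemma~\ref{lem:broadcasttrees}. In this particular Multi-Aggregation instance, only the broadcast trees $\{T_{\id(u)} : u \in S\}$ carry traffic; the runtime of Theorem~\ref{thm:multiaggregationProblem} therefore depends only on $C_S$, defined as the maximum over butterfly nodes of the number of these $S$-indexed trees that pass through them, rather than on the congestion of the entire collection of broadcast trees.

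The second step is to refine the congestion analysis of Theorem~\ref{thm:treesetup} restricted to $S$. Recall that the tree $T_{\id(u)}$ was carved out during the Multicast Tree Setup by routing its $|A_{\id(u)}| = d(u)$ injection packets from independent random leaves at the top level of the butterfly towards the uniformly random root $h(\id(u))$ at the bottom level. Repeating the Chernoff argument used to prove the $O(L/n + \log n)$ bound of Theorem~\ref{thm:treesetup}, but summing only over the index set $\{\id(u) : u \in S\}$, shows that at most $O(\sum_{u \in S} d(u)/n + \log n)$ of these packets traverse any fixed butterfly node w.h.p.; a union bound over the $\Theta(n \log n)$ butterfly nodes preserves the bound. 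Hence $C_S = O(\sum_{u \in S} d(u)/n + \log n)$ w.h.p.

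Plugging $C_S$ into Theorem~\ref{thm:multiaggregationProblem} yields a total runtime of $O(C_S + \log n) = O(\sum_{u \in S} d(u)/n + \log n)$ w.h.p., establishing the corollary.

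The only non-routine piece is the restricted congestion bound in the second paragraph: one must verify that the $\Theta(\log n)$-wise independent hash functions used inside the Multicast Tree Setup Algorithm still support a Chernoff-style tail when the sum is taken over only the sub-collection indexed by $S$ rather than over all multicast groups used to build the broadcast trees. This is handled by exactly the partitioning technique sketched in Section~\ref{sec:primitives}, namely, splitting $S$ into blocks on which the load contribution already concentrates under $O(\log n)$-wise independence and summing the resulting deviations, so no genuinely new probabilistic argument is required.
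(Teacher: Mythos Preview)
Your proposal is correct and matches the paper's intended argument. The paper does not spell out a proof of this corollary beyond noting that it ``follows from the analysis of Theorem~\ref{thm:multiaggregationProblem}''; your plan of opening up that analysis, observing that only the trees $\{T_{\id(u)}:u\in S\}$ participate, and rerunning the Chernoff bound from the Multicast Tree Setup restricted to the index set $S$ to obtain $C_S=O(\sum_{u\in S}d(u)/n+\log n)$ is precisely what that phrase is meant to convey.
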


\subsection{Breadth-First Search Trees} \label{sec:bfs}

As a simple example, we show how to compute \emph{Breadth-First Search (BFS) Trees}:
Let $s$ be a node and let $\delta(u)$ be the length of a shortest (unweighted) path from $s$ to $u$ in $G$.
Furthermore, let $\pi(u)$ be the predecessor of $u$ on a shortest path from $s$ to $u$ (breaking ties by choosing the one with smallest identifier).
The goal is to let each node $u \in V$ eventually store $\delta(u)$ and $\pi(u)$.
Using the broadcast trees, the problem can easily be solved by the following algorithm, which proceeds in phases.
In Phase $1$, only $s$ is \emph{active}, and in Phase $i>1$, all nodes that have received an identifier in Phase $i-1$ for the first time are active.
In each phase, every active node sends its identifier to all of its neighbors using the broadcast trees and the Multi-Aggregation Algorithm.
By choosing $f$ as the minimum function, every node that has an active neighbor thereby receives the minimum identifier of all active neighbors.
Furthermore, in every Phase $i > 1$, every active node $u$ sets $\delta(u) = i-1$ and $\pi(u)$ to the node whose identifier it has received in the previous phase.
Clearly, after at most $D+1$ phases all nodes have been reached.

\begin{theorem}\label{thm:sssp}
	The algorithm computes a BFS Tree in time $O((a + D + \log n)\log n)$, w.h.p.
\end{theorem}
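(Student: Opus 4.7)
The plan is to verify correctness by induction on the phase index and then bound the runtime by summing the cost of the preprocessing with the per-phase cost of Multi-Aggregation over the $O(D)$ phases.

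For correctness, I would argue by induction on $i \geq 1$ that at the end of phase $i$ the active-in-phase-$j$ nodes for $j \leq i$ are exactly the nodes $u$ with $\delta(u) = j - 1$, and that each such $u$ has stored $\delta(u)$ and $\pi(u)$ correctly. The base case $i = 1$ is immediate since only $s$ is active and $\delta(s) = 0$. For the inductive step, consider a node $u$ with $\delta(u) = k \geq 1$. By definition of $\delta$, every shortest-path predecessor of $u$ has distance $k-1$ from $s$, and $u$ has no neighbor at distance less than $k-1$. By the inductive hypothesis, the set of neighbors of $u$ that are active in phase $k$ is exactly the set of neighbors of $u$ at distance $k-1$, i.e., the set of candidate predecessors. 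Since the Multi-Aggregation in phase $k$ uses $f = \min$ on the active senders, $u$ receives the smallest identifier among these candidates, which is $\pi(u)$ by definition; $u$ then becomes active in phase $k+1$ and stores $\delta(u) = k$ as required. No node $v$ with $\delta(v) > k$ can receive an identifier in phase $k$ because all its neighbors in $S_k$ would witness a shorter path, contradicting the definition of $\delta(v)$.

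For the runtime, I first invoke Theorem~\ref{thm:orientation} to compute an $O(a)$-orientation in time $O((a+\log n)\log n)$, w.h.p., and then Lemma~\ref{lem:broadcasttrees} to build the broadcast trees in additional time $O(a + \log n)$, w.h.p. Let $S_i$ denote the set of nodes active in phase $i$. Since each node becomes active in at most one phase, the $S_i$ are pairwise disjoint, hence
\[
  \sum_{i \geq 1} \sum_{u \in S_i} d(u) \;\leq\; \sum_{u \in V} d(u) \;=\; n\,\overline{d} \;\leq\; 2an,
\]
using $\overline{d} \leq 2a$ from the preliminaries. By Corollary~\ref{cor:broadmultiaggregation}, phase $i$ takes time $O(\sum_{u \in S_i} d(u)/n + \log n)$, w.h.p. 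The algorithm terminates within $D+1$ phases (after which the BFS has reached every node in the connected component of $s$; termination can be detected by one Aggregate-and-Broadcast per phase checking whether any new node was activated, which absorbs into the $O(\log n)$ additive term). Summing yields
\[
  \sum_{i=1}^{D+1} O\!\left(\frac{\sum_{u \in S_i} d(u)}{n} + \log n\right) \;=\; O(a + D\log n).
\]
Adding the preprocessing cost gives $O((a+\log n)\log n + a + D\log n) = O((a+D+\log n)\log n)$ as claimed.

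The only subtlety is that the per-phase guarantee of Corollary~\ref{cor:broadmultiaggregation} is high-probability, so I would take a union bound over the at-most $D+1 \leq n$ phases; by choosing the constant in the w.h.p.\ exponent sufficiently large, the aggregate failure probability remains polynomially small, which is the standard use of w.h.p.\ bounds and is the only place where one must be mildly careful.
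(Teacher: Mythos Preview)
Your proof is correct and follows essentially the same approach as the paper: invoke the orientation and broadcast-tree setup for preprocessing, then sum the per-phase Multi-Aggregation cost from Corollary~\ref{cor:broadmultiaggregation} over the $D+1$ phases, using disjointness of the $S_i$ and $\sum_u d(u)\le 2an$. You add a bit more detail than the paper does (the inductive correctness argument, the explicit union bound over phases, and the termination-detection remark), but the structure and key estimates are identical.
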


\begin{proof}
	By Lemma~\ref{lem:broadcasttrees}, the broadcast trees are constructed in time $O((a + \log n)\log n)$, w.h.p.
  Let $S_i$ be the set of nodes active in Phase $i$.
  By Corollary~\ref{cor:broadmultiaggregation}, the Multi-Aggregation Algorithm takes time $O(\sum_{u \in S_i}d(u)/n + \log n)$, w.h.p.
	We conclude a runtime of
	\begin{align*}
		&O\left((a + \log n)\log n + \sum_{i=1}^{D+1} \left(\sum_{u \in S_i}d(u) / n + \log n\right) \right) \\
    = \, &O\left((a + \log n)\log n + \sum_{u \in V}d(u)/n + (D+1)\log n\right)\\
		= \, &O((a + D + \log n)\log n), \text{ w.h.p.} \qedhere
	\end{align*}
\end{proof}

\subsection{Maximal Independent Set} \label{sec:mis}

In this section we show how to compute a \emph{maximal independent set (MIS)}:
A set $U \subseteq V$ is an MIS if (1) it is an independent set, i.e., no two nodes of $U$ are adjacent in $G$, and (2) there is no set $U' \subseteq U$ such that $U \subset U'$.
On a high level, we perform the algorithm of M\'{e}tivier et al~\cite{MRS11}, which works as follows.
First, all nodes are active and no node is in the MIS.
The algorithm proceeds in phases, where in each phase every active node $u$ first chooses a random number $r(u) \in [0, 1]$ and broadcasts the value to all of its neighbors.
$u$ then joins the MIS (and becomes inactive) if $r(u)$ is smaller than the minimum of all received values.
If so, it broadcasts a message to all of its neighbors, instructing them to become inactive.

We can easily perform a phase of the algorithm in our model by using two executions of the Multi-Aggregation Algorithm, the first to let every node aggregate the minimum of all values chosen by its neighbors, and the second to let every node that is not in the MIS determine whether it is adjacent to a node that is in the MIS.
This information is then used to determine whether the nodes have reached an MIS using the Aggregate-and-Broadcast Algorithm.
Since by \cite{MRS11} $O(\log n)$ phases suffice, and each phase can be performed in time $O(\overline{d} + \log n) = O(a + \log n)$ by Corollary~\ref{cor:broadmultiaggregation}, we conclude the following theorem.

\begin{theorem}
	The algorithm computes an MIS in time $O((a + \log n)\log n)$, w.h.p.
\end{theorem}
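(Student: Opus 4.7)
The plan is to decompose the running time into a one-time preprocessing cost plus $O(\log n)$ phases of constant-round MIS dynamics, each phase simulated by a small number of Multi-Aggregation invocations. For preprocessing, I would first invoke Theorem~\ref{thm:orientation} to compute an $O(a)$-orientation in time $O((a+\log n)\log n)$ w.h.p., and then invoke Lemma~\ref{lem:broadcasttrees} to construct the broadcast trees $T_{\id(u)}$ with multicast group $N(u)$ for every $u \in V$, in additional time $O(a+\log n)$ w.h.p. and with congestion $O(a+\log n)$ w.h.p. This gives a preprocessing bound of $O((a+\log n)\log n)$ w.h.p., which already dominates everything that follows per phase.

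For a single phase, I would show that the two communication subroutines the algorithm requires---namely (i) every active node learning the minimum of the random values $r(v)$ over its active neighbors, and (ii) every node learning whether it has a neighbor that just joined the MIS---each fit the template of Corollary~\ref{cor:broadmultiaggregation} with the set $S$ of active sources, multicast groups $A_{\id(u)}=N(u)$, and aggregate function $\min$ (resp.\ logical OR). By Corollary~\ref{cor:broadmultiaggregation}, each invocation runs in time
\[
O\!\left(\sum_{u\in S} d(u)/n + \log n\right) \;=\; O(\overline{d}+\log n) \;=\; O(a+\log n)
\]
w.h.p., since $\overline{d}\le 2a$ by Nash-Williams. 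A final Aggregate-and-Broadcast invocation (Theorem~\ref{thm:aggregateAndBroadcast}) in time $O(\log n)$ lets all nodes detect whether any active node remains and, if not, terminate. Hence each phase takes $O(a+\log n)$ time w.h.p.

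For the number of phases, I would cite the analysis of M\'etivier et al.~\cite{MRS11}, which shows that under the randomized rule (active node joins the MIS iff its value is strictly smaller than all active neighbors' values), in every phase a constant fraction of the remaining edges is expected to be removed, so after $O(\log n)$ phases every node has become inactive w.h.p. Taking a union bound over all $O(\log n)$ phases preserves the w.h.p.\ guarantee and correctness (i.e., the computed set is an independent set, and every remaining node that is not in the set has a neighbor in the set, so the set is maximal).

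Putting the pieces together, the total running time is
\[
O((a+\log n)\log n) \;+\; O(\log n)\cdot O(a+\log n) \;=\; O((a+\log n)\log n),
\]
w.h.p., which matches the claimed bound. The main obstacle I anticipate is not the round count but verifying that the communication pattern of each phase genuinely fits the Multi-Aggregation template without blowing up congestion: the broadcast trees are built once and reused across all $O(\log n)$ phases, so one must argue that restricting the source set to the currently active nodes does not alter the congestion bound of $O(a+\log n)$ used inside Corollary~\ref{cor:broadmultiaggregation}---this is immediate because the trees themselves are unchanged and the corollary's bound depends only on $\sum_{u\in S}d(u)/n$, which only decreases as $S$ shrinks.
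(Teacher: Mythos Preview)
Your proposal is correct and follows essentially the same approach as the paper: precompute the $O(a)$-orientation and broadcast trees, then simulate each of the $O(\log n)$ phases of M\'etivier et al.\ via two Multi-Aggregation calls (min of random values, then OR for ``has MIS neighbor'') plus an Aggregate-and-Broadcast for termination, bounding each phase by $O(\overline{d}+\log n)=O(a+\log n)$ via Corollary~\ref{cor:broadmultiaggregation}. Your explicit separation of the orientation cost from the broadcast-tree setup and your remark about reusing the trees with shrinking source sets are both sound and slightly more detailed than the paper's terse justification.
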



\subsection{Maximal Matching} \label{sec:matching}

Similar to an MIS, a \emph{maximal matching} $M \subseteq E$ is defined as a maximal set of independent (i.e., node-disjoint) edges.
To compute a maximal matching, we propose to use the algorithm of Israeli and Itai~\cite{II86}, which works as follows.
Initially, no node is matched.
The algorithm proceeds in phases, where in each phase every unmatched node $u$ performs the following procedure.
First, it chooses an edge to an unmatched neighbor uniformly at random.
If $u$ itself has been chosen by multiple neighbors, it accepts only one choice arbitrarily and informs the respective node.
The outcome is a collection of paths and cycles.
Each node of a path or cycle finally chooses one of its at most two neighbors.
If thereby two adjacent nodes choose the same edge, the edge joins the matching and the two nodes become matched.
Afterwards, all matched nodes and their incident edges are removed from the graph.

The algorithm lends itself to a realization using communication primitives.
First, we let every unmatched node randomly pick one of its unmatched neighbors by performing the Multi-Aggregation Algorithm with a slight modification.
Here, every node $u$ that is still unmatched multicasts a packet $p_{\id(u)}$ using its broadcast tree.
Recall that after $p_{\id(u)}$ has reached butterfly node $l(\id(u), v)$ for all $v \in N(u)$ in the execution of the Multi-Aggregation Algorithm, it is mapped to a new packet $(\id(v), p_{\id(u)})$.
Here, we additionally let $l(\id(u),v)$ choose a value $r \in [0,1]$ uniformly at random, and annotate $(\id(v), p_{\id(u)})$ by $r$.
Whenever thereafter two packets with the same target are combined, the packet annotated by the minimum value remains.
Thereby, every node that still has an unmatched neighbor receives the identifier of a node chosen uniformly and independently at random among its unmatched neighbors.

Afterwards, every node that has been chosen by multiple neighbors has to choose one of them arbitrarily.
This can be done by performing the Aggregation Algorithm, in which we let every node $u$ aggregate the minimum of the identifiers of all nodes by which it has been chosen in the previous step.
In the resulting collection of paths and cycles, neighbors can directly send messages to each other to determine which edges join the matching.
Finally, the nodes have to determine whether the matching is maximal, which can be done as described in the previous section.
Using Corollary 3.5 of \cite{II86} and Chernoff bounds, it can be shown that $O(\log n)$ phases suffice.
We conclude the following theorem.

\begin{theorem}
	The algorithm computes a maximal matching in time $O((a + \log n)\log n)$, w.h.p.
\end{theorem}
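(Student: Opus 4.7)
The plan is to combine three ingredients: the one-time setup of broadcast trees from Lemma~\ref{lem:broadcasttrees}, an efficient per-phase realization of Israeli--Itai built on Corollary~\ref{cor:broadmultiaggregation} and Theorem~\ref{thm:aggregationProblem}, and the standard fact that $O(\log n)$ phases of Israeli--Itai suffice w.h.p.\ (\cite{II86}). First I would invoke Lemma~\ref{lem:broadcasttrees} once at the beginning of the algorithm, at a cost of $O((a+\log n)\log n)$ rounds w.h.p. These broadcast trees are reused throughout the algorithm: a node that becomes matched stops acting as a source in later phases, and we treat matched endpoints as neutral elements in the aggregate functions, so that no tree needs to be rebuilt.

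Next I would analyze a single phase on the unmatched subgraph. The proposal step, in which each unmatched node $u$ picks a uniformly random unmatched neighbor, is realized by the annotated variant of the Multi-Aggregation Algorithm described just before the theorem: each butterfly leaf $l(\id(u),v)$ attaches an independent uniform label in $[0,1]$ to the packet $(\id(v),p_{\id(u)})$, and combinations by minimum label select one contributor uniformly at random among the packets that reach $v$. Since $S$ is the set of unmatched sources and groups are $N(u)$, Corollary~\ref{cor:broadmultiaggregation} bounds this step by $O(\sum_{u\in S} d(u)/n + \log n) = O(\overline{d} + \log n) = O(a+\log n)$, using $\overline{d}\le 2a$. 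The acceptance step, where a proposed node $v$ arbitrarily picks one of its proposers, is an Aggregation Problem with $\ell_1,\ell_2=O(1)$ and $L=O(n)$, so by Theorem~\ref{thm:aggregationProblem} it costs $O(\log n)$ rounds. The subsequent path/cycle resolution involves only constant-degree interactions between adjacent survivors and therefore takes $O(1)$ rounds. Checking whether the current matching is already maximal is one invocation of the Aggregate-and-Broadcast Algorithm (Theorem~\ref{thm:aggregateAndBroadcast}), costing $O(\log n)$ rounds. Summing these, each phase costs $O(a+\log n)$ rounds w.h.p.

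Finally I would bound the number of phases. By Corollary~3.5 of \cite{II86}, each phase removes a constant fraction of the remaining edges in expectation; a Chernoff-style concentration over $O(\log n)$ independent phases yields termination within $O(\log n)$ phases w.h.p. Multiplying this with the $O(a+\log n)$ per-phase cost and absorbing the one-time $O((a+\log n)\log n)$ tree-setup term gives the claimed $O((a+\log n)\log n)$ total runtime w.h.p.

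The step I expect to require the most care is verifying that the annotated Multi-Aggregation genuinely samples a uniformly random unmatched neighbor of $u$ in each phase: one must argue that, conditioned on which leaves $l(\id(u),v)$ host surviving packets after the multicast, the global minimum-label wins against packets from all other $v'\in N(u)$ still unmatched with equal probability, and that fresh labels per phase preserve independence across phases. This follows from a standard symmetry argument over the injection of uniform labels, but it is the one spot where the reduction to the already established primitives is nontrivial; the rest is essentially bookkeeping on top of Lemma~\ref{lem:broadcasttrees}, Corollary~\ref{cor:broadmultiaggregation}, and Theorems~\ref{thm:aggregateAndBroadcast} and \ref{thm:aggregationProblem}.
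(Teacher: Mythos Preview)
Your proposal is correct and follows essentially the same approach as the paper: a one-time setup of broadcast trees, then $O(\log n)$ Israeli--Itai phases each costing $O(a+\log n)$ via the annotated Multi-Aggregation for the random proposal, the Aggregation Algorithm for acceptance, direct messages on the resulting paths/cycles, and a global termination check, with the $O(\log n)$ phase bound obtained from Corollary~3.5 of \cite{II86} plus Chernoff bounds. One small correction: Lemma~\ref{lem:broadcasttrees} itself only gives $O(a+\log n)$ for the tree setup; the $O((a+\log n)\log n)$ preprocessing cost you quote actually comes from the prerequisite $O(a)$-orientation of Theorem~\ref{thm:orientation}, so you should cite both.
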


\subsection{$O(a)$-Coloring} \label{sec:coloring}

The goal of this section is to compute an \emph{$O(a)$-coloring}, in which every node has to choose one of $O(a)$ colors such that no color is chosen by two adjacent nodes.
Following the idea of Barenboim and Elkin~\cite{BE10}, we consider the partition of nodes into levels $L_1, \ldots, L_T$ and color the nodes of each level separately.
Recall that after the algorithm to compute the $O(a)$-orientation, every node knows the index of its own level.
Furthermore, for all $i$ every node $u \in L_i$ knows which of its neighbors are in \emph{lower levels} $L_1, \ldots, L_{i-1}$, the same level $L_i$, and \emph{higher levels} $L_{i+1}, \ldots, L_T$, since it knows which of its neighbors were inactive, active, or waiting in phase $i$.
First, the nodes use the Aggregate-and-Broadcast Algorithm to compute $\hat{a} = \max_{u \in V}\{\max(d_L(u), d_{out}(u))\} = O(a)$, where $d_L(u)$ is the number of neighbors of $u$ that are in the same level as $u$.
Furthermore, the nodes set up multicast trees for multicast groups $A_{\id(u)} = N_{in}(u)$ with source $s_{\id(u)} = u$ for all $u \in V$.
More precisely, every node joins the multicast group of each of its out-neighbors, which can be done in time $O(a + \log n)$, w.h.p., by Theorem~\ref{thm:treesetup}.

Afterwards, the algorithm proceeds in phases $1, \ldots, T$, where in each phase $i$ the nodes of level $L_{T - i + 1}$ get colored.
Throughout the algorithm's execution, every node $u$ maintains a color palette $C(u)$ initially set to $[2(1+\varepsilon)\hat{a}]$ for some constant $\varepsilon > 0$.
After each phase, the color palette of every remaining uncolored node has been narrowed down to all colors that have not yet been chosen by its neighbors.
Since every $u \in L_{T - i + 1}$ has at most $\hat{a}$ neighbors in higher levels, $C(u)$ still consists of at least $(1+\varepsilon)\hat{a}$ colors at the beginning of phase $i$.

In phase $i$ of the algorithm, the nodes of level $L_{T - i + 1}$ essentially perform the \emph{Color-Random Algorithm} of Kothapalli et al.~\cite{KOS+06}.
First, every node $u \in L_{T - i + 1}$ chooses a color $c_u$ from its color palette uniformly at random.
Then, it informs its in-neighbors about its choice by performing the Multicast Algorithm using the precomputed multicast trees and $\hat{a}$ as an upper bound on $\ell$.
Thereby, $u$ receives the colors chosen by its out-neighbors of the same level.
If $u$ does not receive its own color $c_u$, it permanently chooses $c_u$.
In that case, it first informs all of its in-neighbors about its permanent choice by again performing the Multicast Algorithm.
Afterwards, it informs all of its out-neighbors by performing the Aggregation Algorithm.
Here, $u$ is a member of aggregation groups $A_{\id(v) \circ c_u}$ for all $v \in N_{out}$ and target of aggregation groups $A_{\id(u) \circ i}$ for all $i \in [2(1+\varepsilon)\hat{a}]$.
Note that every node is a member of at most $\hat{a}$ and a target of at most $2(1+\varepsilon)\hat{a}$ aggregation groups.
Afterwards, all nodes (including nodes of lower levels) remove all colors permanently chosen by neighbors from their palettes.

The above procedure is repeated until all nodes of level $L_{T - i + 1}$ have permanently chosen a color, which is determined by performing the Aggregate-and-Broadcast Algorithm after each repetition.
Then, if $i > 1$, the nodes start the next phase, and terminate, otherwise.
The following theorem can be shown using the following facts:
(1) there are $O(\log n)$ phases, (2) $O(\sqrt{\log n})$ repetitions during a phase suffice until all nodes of the corresponding level are colored (see the discussion in Section 4 of \cite{KOS+06}), and (3) each repetition takes time $O(a + \log n)$.

\begin{theorem}\label{thm:coloringAlgorithm}
	The algorithm computes an $O(a)$-coloring in time $O((a + \log n)\log^{3/2} n)$, w.h.p.
\end{theorem}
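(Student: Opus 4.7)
The plan is to verify (i) correctness — the algorithm produces a proper coloring using $O(a)$ colors — and (ii) the claimed runtime bound — and then to combine these using the three facts highlighted in the paper. Correctness and runtime are largely orthogonal, and the runtime reduces to an accounting exercise once the per-repetition cost is established; the substantive probabilistic content lies in bounding the number of repetitions per phase.

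For correctness I would argue by induction on the phase index $i$. At the start of phase $i$, all nodes in levels $L_{T-i+2},\dots,L_T$ have already been permanently colored, and by the palette-update step every $u\in L_{T-i+1}$ has had all of its out-neighbors' permanent colors removed from $C(u)$. Because $|N_{out}(u)|\le \hat a$ and we initialized $|C(u)|=2(1+\varepsilon)\hat a$, the palette still contains at least $(1+\varepsilon)\hat a$ colors. Within a phase the only possible conflict is between two same-level neighbors $u,v\in L_{T-i+1}$; but $u$ commits to $c_u$ only after multicasting $c_u$ to $N_{in}(u)$ and receiving the tentative choices from its same-level out-neighbors (including $v$ when $u\to v$), and refuses to commit if $c_u$ appears. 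Since the orientation is acyclic, for any adjacent same-level pair one of them sees the other as an out-neighbor, so at most one of them commits to a shared tentative color. Hence the final coloring is proper, and every color lies in $[2(1+\varepsilon)\hat a]=[O(a)]$.

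For the runtime I would simply sum the contributions. The $O(a)$-orientation costs $O((a+\log n)\log n)$ w.h.p.\ by Theorem~\ref{thm:orientation}. The in-neighborhood multicast trees are set up by letting each node join the multicast group of each of its (at most $\hat a$) out-neighbors, so $\ell=O(a)$ and $L=O(an)$, giving $O(a+\log n)$ w.h.p.\ by Theorem~\ref{thm:treesetup}. The coloring proper consists of $T=O(\log n)$ phases by Lemma~\ref{lem:orientationHighLevel}. Within a phase, each repetition performs one Multicast with $\hat\ell=\hat a$ and congestion $O(a+\log n)$, and one Aggregation with $\ell_1\le\hat a$, $\hat{\ell_2}=O(a)$, $L=O(an)$, plus an Aggregate-and-Broadcast for termination detection; by Theorems~\ref{thm:multicastProblem}, \ref{thm:aggregationProblem}, and \ref{thm:aggregateAndBroadcast} a repetition takes $O(a+\log n)$ w.h.p. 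With $O(\sqrt{\log n})$ repetitions per phase, the coloring stage contributes $O(\log n)\cdot O(\sqrt{\log n})\cdot O(a+\log n)=O((a+\log n)\log^{3/2} n)$, which dominates the earlier steps.

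The main obstacle is fact (2): establishing that $O(\sqrt{\log n})$ repetitions per level suffice w.h.p. I would invoke the analysis of the Color-Random algorithm in Section~4 of \cite{KOS+06}, where the key invariant is that throughout a level's execution every still-uncolored node $u$ has a palette of size at least $(1+\varepsilon)$ times its current number of still-uncolored same-level neighbors. This invariant holds initially because $|C(u)|\ge(1+\varepsilon)\hat a$ and the same-level degree is at most $\hat a$; it is preserved because each time an out-neighbor commits to a color, either the color is removed from $C(u)$ \emph{and} that neighbor's contribution to the conflicting-degree drops by one. Under this invariant the Kothapalli et al.\ analysis yields that the fraction of uncolored nodes in the level shrinks at a doubly-exponential rate, so after $O(\sqrt{\log n})$ repetitions the probability that any node of the level is still uncolored is at most $1/n^{c}$. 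A union bound across the $O(\log n)$ levels then gives the high-probability guarantee, completing the proof.
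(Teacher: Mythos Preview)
Your proposal is correct and follows exactly the paper's approach: the paper's proof is just the three facts you identify (number of phases from Lemma~\ref{lem:orientationHighLevel}, $O(\sqrt{\log n})$ repetitions per phase via \cite{KOS+06}, and $O(a+\log n)$ per repetition via the communication primitives), and you have merely filled in the accounting and the correctness induction that the paper leaves implicit. Two minor wording issues that do not affect validity: a repetition actually uses \emph{two} multicasts (tentative and permanent) rather than one, and in your invariant-preservation step it is all same-level neighbors---not just out-neighbors---whose commitment simultaneously decrements both the palette (by at most one) and the uncolored same-level degree (by exactly one).
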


\section{Conclusion} \label{sec:conclusion}

Our work initiates the study on the effect of node-capacities on the complexity of distributed graph computations.
We provide some ideas to approach the difficulties such limitations impose, which might be of interest for other problems as well.
Clearly, there is an abundance of classical problems that may be newly investigated under our model and for which our algorithms may be helpful.
In general, it would be interesting to see a classification of graph algorithms that can or cannot be efficiently performed in the Node-Capacitated Clique.
We are also very interested in proving lower bounds, which seems to be highly non-trivial in our model.
Particularly, we do not know whether the arboricity or the average node degree are natural lower bounds for some of the problems considered in this paper, although we highly suspect it.

Interestingly, the algorithms presented in this paper do not fully exploit the power of the Node-Capacitated Clique.
In fact, all of our algorithms still achieve the presented runtimes if in addition to knowing their neighbors in the input graph, they initially only know $\Theta(\log n)$ random nodes\footnote{Most communication in our algorithms is carried out using a butterfly as an overlay, which can be constructed, e.g., using \cite{AS18}.}.
It is an interesting question whether there are algorithms that actually require knowing all node identifiers.

\begin{acks}
J.A. is supported by DST/SERB Extra Mural Research Grant \iffull{}\else{\\}\fi
EMR/2016/003016, DST-DAAD Joint Project \iffull{\\}\else{}\fi
INT/FRG/DAAD/P-25/2018, and DST MATRICS MTR/2018/001198. 
M.G. is supported by SNSF Project No. 200021\_184735.
K.H. and C.S. are supported by DFG Project No. 160364472-SFB901 ("On-The-Fly-Computing").
F.K. is supported by ERC Grant No. 336495 (ACDC).
\end{acks}

\iffull{}\else{
\clearpage
} \fi

%
\bibliographystyle{ACM-Reference-Format}
\bibliography{literature}


\begin{thebibliography}{57}


\ifx \showCODEN    \undefined \def \showCODEN     #1{\unskip}     \fi
\ifx \showDOI      \undefined \def \showDOI       #1{#1}\fi
\ifx \showISBNx    \undefined \def \showISBNx     #1{\unskip}     \fi
\ifx \showISBNxiii \undefined \def \showISBNxiii  #1{\unskip}     \fi
\ifx \showISSN     \undefined \def \showISSN      #1{\unskip}     \fi
\ifx \showLCCN     \undefined \def \showLCCN      #1{\unskip}     \fi
\ifx \shownote     \undefined \def \shownote      #1{#1}          \fi
\ifx \showarticletitle \undefined \def \showarticletitle #1{#1}   \fi
\ifx \showURL      \undefined \def \showURL       {\relax}        \fi
\providecommand\bibfield[2]{#2}
\providecommand\bibinfo[2]{#2}
\providecommand\natexlab[1]{#1}
\providecommand\showeprint[2][]{arXiv:#2}

\bibitem[\protect\citeauthoryear{Aleliunas}{Aleliunas}{1982}]%
        {Ale82}
\bibfield{author}{\bibinfo{person}{R. Aleliunas}.}
  \bibinfo{year}{1982}\natexlab{}.
\newblock \showarticletitle{Randomized parallel communication}. In
  \bibinfo{booktitle}{\emph{Proc.~of 1st ACM Symposium on Principles of
  Distributed Computing (PODC)}}. \bibinfo{pages}{60--72}.
\newblock


\bibitem[\protect\citeauthoryear{Augustine and Sivasubramaniam}{Augustine and
  Sivasubramaniam}{2018}]%
        {AS18}
\bibfield{author}{\bibinfo{person}{John Augustine} {and}
  \bibinfo{person}{Sumathi Sivasubramaniam}.} \bibinfo{year}{2018}\natexlab{}.
\newblock \showarticletitle{{Spartan: A Framework For Sparse Robust Addressable
  Networks}}. In \bibinfo{booktitle}{\emph{2018 IEEE International Parallel and
  Distributed Processing Symposium (IPDPS)}}. \bibinfo{publisher}{IEEE},
  \bibinfo{pages}{1060--1069}.
\newblock
\urldef\tempurl%
\url{https://ieeexplore.ieee.org/document/8425259/}
\showURL{%
\tempurl}


\bibitem[\protect\citeauthoryear{Barenboim and Elkin}{Barenboim and
  Elkin}{2009}]%
        {BE09}
\bibfield{author}{\bibinfo{person}{Leonid Barenboim} {and}
  \bibinfo{person}{Michael Elkin}.} \bibinfo{year}{2009}\natexlab{}.
\newblock \showarticletitle{{Distributed ($\delta$+1)-Coloring in Linear (in
  $\delta$) Time}}. In \bibinfo{booktitle}{\emph{Proc. of the 41st annual ACM
  symposium on Theory of computing (STOC)}}. \bibinfo{pages}{111--120}.
\newblock


\bibitem[\protect\citeauthoryear{Barenboim and Elkin}{Barenboim and
  Elkin}{2010}]%
        {BE10}
\bibfield{author}{\bibinfo{person}{Leonid Barenboim} {and}
  \bibinfo{person}{Michael Elkin}.} \bibinfo{year}{2010}\natexlab{}.
\newblock \showarticletitle{{Sublogarithmic distributed MIS algorithm for
  sparse graphs using Nash-Williams decomposition}}.
\newblock \bibinfo{journal}{\emph{Distributed Computing}} \bibinfo{volume}{22},
  \bibinfo{number}{5-6} (\bibinfo{year}{2010}), \bibinfo{pages}{363--379}.
\newblock


\bibitem[\protect\citeauthoryear{Barenboim and Elkin}{Barenboim and
  Elkin}{2011}]%
        {BE11}
\bibfield{author}{\bibinfo{person}{Leonid Barenboim} {and}
  \bibinfo{person}{Michael Elkin}.} \bibinfo{year}{2011}\natexlab{}.
\newblock \showarticletitle{{Deterministic Distributed Vertex Coloring in
  Polylogarithmic Time}}.
\newblock \bibinfo{journal}{\emph{J. ACM}} \bibinfo{volume}{58},
  \bibinfo{number}{5} (\bibinfo{year}{2011}), \bibinfo{pages}{1--25}.
\newblock


\bibitem[\protect\citeauthoryear{Barenboim, Elkin, Pettie, and
  Schneider}{Barenboim et~al\mbox{.}}{2016}]%
        {BEP16}
\bibfield{author}{\bibinfo{person}{Leonid Barenboim}, \bibinfo{person}{Michael
  Elkin}, \bibinfo{person}{Seth Pettie}, {and} \bibinfo{person}{Johannes
  Schneider}.} \bibinfo{year}{2016}\natexlab{}.
\newblock \showarticletitle{{The Locality of Distributed Symmetry Breaking}}.
\newblock \bibinfo{journal}{\emph{J. ACM}} \bibinfo{volume}{63},
  \bibinfo{number}{3} (\bibinfo{year}{2016}), \bibinfo{pages}{20:1--20:45}.
\newblock


\bibitem[\protect\citeauthoryear{Barenboim and Khazanov}{Barenboim and
  Khazanov}{2018}]%
        {BK18}
\bibfield{author}{\bibinfo{person}{Leonid Barenboim} {and}
  \bibinfo{person}{Victor Khazanov}.} \bibinfo{year}{2018}\natexlab{}.
\newblock \showarticletitle{Distributed Symmetry-Breaking Algorithms for
  Congested Cliques}.
\newblock \bibinfo{journal}{\emph{arXiv preprint arXiv:1802.07209}}
  (\bibinfo{year}{2018}).
\newblock


\bibitem[\protect\citeauthoryear{Becker, Montealegre, Rapaport, and
  Todinca}{Becker et~al\mbox{.}}{2018}]%
        {BMRT18}
\bibfield{author}{\bibinfo{person}{Florent Becker}, \bibinfo{person}{Pedro
  Montealegre}, \bibinfo{person}{Ivan Rapaport}, {and} \bibinfo{person}{Ioan
  Todinca}.} \bibinfo{year}{2018}\natexlab{}.
\newblock \showarticletitle{The Impact of Locality on the Detection of Cycles
  in the Broadcast Congested Clique Model}. In \bibinfo{booktitle}{\emph{LATIN
  2018: Theoretical Informatics}}. \bibinfo{pages}{134--145}.
\newblock


\bibitem[\protect\citeauthoryear{Becker, Karrenbauer, Krinninger, and
  Lenzen}{Becker et~al\mbox{.}}{2017}]%
        {BKKL17}
\bibfield{author}{\bibinfo{person}{Ruben Becker}, \bibinfo{person}{Andreas
  Karrenbauer}, \bibinfo{person}{Sebastian Krinninger}, {and}
  \bibinfo{person}{Christoph Lenzen}.} \bibinfo{year}{2017}\natexlab{}.
\newblock \showarticletitle{Near-optimal approximate shortest paths and
  transshipment in distributed and streaming models}. In
  \bibinfo{booktitle}{\emph{Proc.~of 31st International Symposium on
  Distributed Computing (DISC)}}. \bibinfo{pages}{7:1--7:16}.
\newblock


\bibitem[\protect\citeauthoryear{Celis, Reingold, Segev, and Wieder}{Celis
  et~al\mbox{.}}{2013}]%
        {CRSW13}
\bibfield{author}{\bibinfo{person}{L.~Elisa Celis}, \bibinfo{person}{Omer
  Reingold}, \bibinfo{person}{Gil Segev}, {and} \bibinfo{person}{Udi Wieder}.}
  \bibinfo{year}{2013}\natexlab{}.
\newblock \showarticletitle{Balls into Bins: Smaller Hash Families and Faster
  Evaluation}.
\newblock \bibinfo{journal}{\emph{SIAM J. Comput.}} \bibinfo{volume}{42},
  \bibinfo{number}{3} (\bibinfo{year}{2013}), \bibinfo{pages}{1030--1050}.
\newblock


\bibitem[\protect\citeauthoryear{Censor-Hillel, Kaski, Korhonen, Lenzen, Paz,
  and Suomela}{Censor-Hillel et~al\mbox{.}}{2015}]%
        {CKKL15}
\bibfield{author}{\bibinfo{person}{Keren Censor-Hillel},
  \bibinfo{person}{Petteri Kaski}, \bibinfo{person}{Janne~H. Korhonen},
  \bibinfo{person}{Christoph Lenzen}, \bibinfo{person}{Ami Paz}, {and}
  \bibinfo{person}{Jukka Suomela}.} \bibinfo{year}{2015}\natexlab{}.
\newblock \showarticletitle{Algebraic methods in the congested clique}. In
  \bibinfo{booktitle}{\emph{Proc.~of 34th ACM Symposium on Principles of
  Distributed Computing (PODC)}}. \bibinfo{pages}{143--152}.
\newblock


\bibitem[\protect\citeauthoryear{Censor-Hillel, Parter, and
  Schwartzman}{Censor-Hillel et~al\mbox{.}}{2017}]%
        {CPS17}
\bibfield{author}{\bibinfo{person}{Keren Censor-Hillel}, \bibinfo{person}{Merav
  Parter}, {and} \bibinfo{person}{Gregory Schwartzman}.}
  \bibinfo{year}{2017}\natexlab{}.
\newblock \showarticletitle{Derandomizing local distributed algorithms under
  bandwidth restrictions}. In \bibinfo{booktitle}{\emph{Proc.~of 31st
  International Symposium on Distributed Computing (DISC)}}.
  \bibinfo{pages}{11:1--11:16}.
\newblock


\bibitem[\protect\citeauthoryear{Deo and Litow}{Deo and Litow}{1998}]%
        {DL98}
\bibfield{author}{\bibinfo{person}{Narsingh Deo} {and} \bibinfo{person}{Bruce
  Litow}.} \bibinfo{year}{1998}\natexlab{}.
\newblock \showarticletitle{{A Structural Approach to Graph Compression}}. In
  \bibinfo{booktitle}{\emph{Proc. of the MFCS Workshop on Communications}}.
  \bibinfo{pages}{91--100}.
\newblock


\bibitem[\protect\citeauthoryear{Dolev, Lenzen, and Peled}{Dolev
  et~al\mbox{.}}{2012}]%
        {DLP12}
\bibfield{author}{\bibinfo{person}{Danny Dolev}, \bibinfo{person}{Christoph
  Lenzen}, {and} \bibinfo{person}{Shir Peled}.}
  \bibinfo{year}{2012}\natexlab{}.
\newblock \showarticletitle{''Tri, tri again'': Finding triangles and small
  subgraphs in a distributed setting}. In \bibinfo{booktitle}{\emph{Proc.~of
  26th International Symposium on Distributed Computing (DISC)}}.
  \bibinfo{pages}{195--209}.
\newblock


\bibitem[\protect\citeauthoryear{Drucker, Kuhn, and Oshman}{Drucker
  et~al\mbox{.}}{2014}]%
        {DKO14}
\bibfield{author}{\bibinfo{person}{Andrew Drucker}, \bibinfo{person}{Fabian
  Kuhn}, {and} \bibinfo{person}{Rotem Oshman}.}
  \bibinfo{year}{2014}\natexlab{}.
\newblock \showarticletitle{On the power of the congested clique model}. In
  \bibinfo{booktitle}{\emph{Proc.~of 33rd ACM Symposium on Principles of
  Distributed Computing (PODC)}}. \bibinfo{pages}{367--376}.
\newblock


\bibitem[\protect\citeauthoryear{Dujmovic and Wood}{Dujmovic and Wood}{2007}]%
        {DW07}
\bibfield{author}{\bibinfo{person}{Vida Dujmovic} {and}
  \bibinfo{person}{David~R. Wood}.} \bibinfo{year}{2007}\natexlab{}.
\newblock \showarticletitle{{Graph Treewidth and Geometric Thickness
  Parameters}}.
\newblock \bibinfo{journal}{\emph{Discrete {\&} Computational Geometry}}
  \bibinfo{volume}{37}, \bibinfo{number}{4} (\bibinfo{year}{2007}),
  \bibinfo{pages}{641--670}.
\newblock


\bibitem[\protect\citeauthoryear{Elkin}{Elkin}{2004}]%
        {Elk04}
\bibfield{author}{\bibinfo{person}{Michael Elkin}.}
  \bibinfo{year}{2004}\natexlab{}.
\newblock \showarticletitle{Unconditional lower bounds on the
  time-approximation tradeoffs for the distributed minimum spanning tree
  problem}. In \bibinfo{booktitle}{\emph{Proc. of the 36th {ACM} Symposium on
  Theory of Computing (STOC)}}. \bibinfo{pages}{331--340}.
\newblock


\bibitem[\protect\citeauthoryear{Elkin}{Elkin}{2006}]%
        {Elk06}
\bibfield{author}{\bibinfo{person}{Michael Elkin}.}
  \bibinfo{year}{2006}\natexlab{}.
\newblock \showarticletitle{A faster distributed protocol for constructing a
  minimum spanning tree}.
\newblock \bibinfo{journal}{\emph{J. Comput. System Sci.}}
  \bibinfo{volume}{72}, \bibinfo{number}{8} (\bibinfo{year}{2006}),
  \bibinfo{pages}{1282--1308}.
\newblock


\bibitem[\protect\citeauthoryear{Frischknecht, Holzer, and
  Wattenhofer}{Frischknecht et~al\mbox{.}}{2012}]%
        {FHW12}
\bibfield{author}{\bibinfo{person}{Silvio Frischknecht},
  \bibinfo{person}{Stephan Holzer}, {and} \bibinfo{person}{Roger Wattenhofer}.}
  \bibinfo{year}{2012}\natexlab{}.
\newblock \showarticletitle{Networks cannot compute their diameter in sublinear
  time}. In \bibinfo{booktitle}{\emph{Proc.~of 23rd Annual ACM-SIAM Symposium
  on Discrete Algorithms (SODA)}}. \bibinfo{pages}{1150--1162}.
\newblock


\bibitem[\protect\citeauthoryear{Gall}{Gall}{2016}]%
        {Leg16}
\bibfield{author}{\bibinfo{person}{Francois~Le Gall}.}
  \bibinfo{year}{2016}\natexlab{}.
\newblock \showarticletitle{Further algebraic algorithms in the congested
  clique model and applications to graph-theoretic problems}. In
  \bibinfo{booktitle}{\emph{Proc.~of 30th International Symposium on
  Distributed Computing (DISC)}}. \bibinfo{pages}{57--70}.
\newblock


\bibitem[\protect\citeauthoryear{Ghaffari}{Ghaffari}{2017}]%
        {ghaffari2017distributed}
\bibfield{author}{\bibinfo{person}{Mohsen Ghaffari}.}
  \bibinfo{year}{2017}\natexlab{}.
\newblock \showarticletitle{Distributed MIS via All-to-All Communication}. In
  \bibinfo{booktitle}{\emph{Proc. of the ACM Symposium on Principles of
  Distributed Computing (PODC)}}. ACM, \bibinfo{pages}{141--149}.
\newblock


\bibitem[\protect\citeauthoryear{Ghaffari, Gouleakis, Mitrovic, and
  Rubinfeld}{Ghaffari et~al\mbox{.}}{2018}]%
        {ghaffari2018improved}
\bibfield{author}{\bibinfo{person}{Mohsen Ghaffari}, \bibinfo{person}{Themis
  Gouleakis}, \bibinfo{person}{Slobodan Mitrovic}, {and}
  \bibinfo{person}{Ronitt Rubinfeld}.} \bibinfo{year}{2018}\natexlab{}.
\newblock \showarticletitle{Improved Massively Parallel Computation Algorithms
  for MIS, Matching, and Vertex Cover}.
\newblock \bibinfo{journal}{\emph{arXiv preprint arXiv:1802.08237}}
  (\bibinfo{year}{2018}).
\newblock


\bibitem[\protect\citeauthoryear{Ghaffari and Haeupler}{Ghaffari and
  Haeupler}{2016}]%
        {GhaffariHaeupler16}
\bibfield{author}{\bibinfo{person}{Mohsen Ghaffari} {and}
  \bibinfo{person}{Bernhard Haeupler}.} \bibinfo{year}{2016}\natexlab{}.
\newblock \showarticletitle{Distributed Algorithms for Planar Networks II:
  Low-congestion Shortcuts, MST, and Min-Cut}. In
  \bibinfo{booktitle}{\emph{Proc. of the 27th Annual ACM-SIAM Symposium on
  Discrete Algorithms (SODA)}}. \bibinfo{pages}{202--219}.
\newblock


\bibitem[\protect\citeauthoryear{Ghaffari, Kuhn, and Su}{Ghaffari
  et~al\mbox{.}}{2017}]%
        {ghaffariKS17}
\bibfield{author}{\bibinfo{person}{Mohsen Ghaffari}, \bibinfo{person}{Fabian
  Kuhn}, {and} \bibinfo{person}{Hsin-Hao Su}.} \bibinfo{year}{2017}\natexlab{}.
\newblock \showarticletitle{Distributed MST and Routing in Almost Mixing Time}.
  In \bibinfo{booktitle}{\emph{Proc. of the ACM Symposium on Principles of
  Distributed Computing (PODC)}}. \bibinfo{pages}{131--140}.
\newblock


\bibitem[\protect\citeauthoryear{Ghaffari and Nowicki}{Ghaffari and
  Nowicki}{2018}]%
        {GN18}
\bibfield{author}{\bibinfo{person}{Mohsen Ghaffari} {and}
  \bibinfo{person}{Krzysztof Nowicki}.} \bibinfo{year}{2018}\natexlab{}.
\newblock \showarticletitle{{Congested Clique Algorithms for the Minimum Cut
  Problem}}. In \bibinfo{booktitle}{\emph{Proc. of the 2018 ACM Symposium on
  Principles of Distributed Computing (PODC)}}. \bibinfo{pages}{357--366}.
\newblock


\bibitem[\protect\citeauthoryear{Ghaffari and Parter}{Ghaffari and
  Parter}{2016}]%
        {GP16}
\bibfield{author}{\bibinfo{person}{Mohsen Ghaffari} {and}
  \bibinfo{person}{Merav Parter}.} \bibinfo{year}{2016}\natexlab{}.
\newblock \showarticletitle{{MST} in log-star rounds of congested clique}. In
  \bibinfo{booktitle}{\emph{Proc.~of 35th ACM Symposium on Principles of
  Distributed Computing (PODC)}}. \bibinfo{pages}{19--28}.
\newblock


\bibitem[\protect\citeauthoryear{Gmyr, Hinnenthal, Scheideler, and Sohler}{Gmyr
  et~al\mbox{.}}{2017}]%
        {GHSS17}
\bibfield{author}{\bibinfo{person}{Robert Gmyr}, \bibinfo{person}{Kristian
  Hinnenthal}, \bibinfo{person}{Christian Scheideler}, {and}
  \bibinfo{person}{Christian Sohler}.} \bibinfo{year}{2017}\natexlab{}.
\newblock \showarticletitle{Distributed Monitoring of Network Properties: The
  Power of Hybrid Networks}. In \bibinfo{booktitle}{\emph{Proc.~of 44th
  International Colloqium on Algorithms, Languages, and Programming (ICALP)}}.
  \bibinfo{pages}{137:1--137:15}.
\newblock


\bibitem[\protect\citeauthoryear{Hegeman, Pandurangan, Pemmaraju, Sardeshmukh,
  and Scquizzato}{Hegeman et~al\mbox{.}}{2015}]%
        {HPPS15}
\bibfield{author}{\bibinfo{person}{James~W. Hegeman}, \bibinfo{person}{Gopal
  Pandurangan}, \bibinfo{person}{Sriram~V. Pemmaraju},
  \bibinfo{person}{Vivek~B. Sardeshmukh}, {and} \bibinfo{person}{Michele
  Scquizzato}.} \bibinfo{year}{2015}\natexlab{}.
\newblock \showarticletitle{Toward optimal bounds in the congested clique:
  Graph connectivity and {MST}}. In \bibinfo{booktitle}{\emph{Proc.~of 34th ACM
  Symposium on Principles of Distributed Computing (PODC)}}.
  \bibinfo{pages}{91--100}.
\newblock


\bibitem[\protect\citeauthoryear{Hegeman and Pemmaraju}{Hegeman and
  Pemmaraju}{2014}]%
        {HP14}
\bibfield{author}{\bibinfo{person}{James~W. Hegeman} {and}
  \bibinfo{person}{Sriram~V. Pemmaraju}.} \bibinfo{year}{2014}\natexlab{}.
\newblock \showarticletitle{Lessons from the congested clique applied to
  MapReduce}. In \bibinfo{booktitle}{\emph{Proc.~of 21st Colloquium on
  Structural Information and Communication Complexity (SIROCCO)}}.
  \bibinfo{pages}{149--164}.
\newblock


\bibitem[\protect\citeauthoryear{Hegeman, Pemmaraju, and Sardeshmukh}{Hegeman
  et~al\mbox{.}}{2014}]%
        {HPS14}
\bibfield{author}{\bibinfo{person}{James~W. Hegeman},
  \bibinfo{person}{Sriram~V. Pemmaraju}, {and} \bibinfo{person}{Vivek~B.
  Sardeshmukh}.} \bibinfo{year}{2014}\natexlab{}.
\newblock \showarticletitle{Near-constant-time distributed algorithms on a
  congested clique}. In \bibinfo{booktitle}{\emph{Proc.~of 28th International
  Symposium on Distributed Computing (DISC)}}. \bibinfo{pages}{514--530}.
\newblock


\bibitem[\protect\citeauthoryear{Israeli and Itai}{Israeli and Itai}{1986}]%
        {II86}
\bibfield{author}{\bibinfo{person}{Amos Israeli} {and} \bibinfo{person}{A.
  Itai}.} \bibinfo{year}{1986}\natexlab{}.
\newblock \showarticletitle{{A fast and simple randomized parallel algorithm
  for maximal matching}}.
\newblock \bibinfo{journal}{\emph{Inform. Process. Lett.}}
  \bibinfo{volume}{22}, \bibinfo{number}{2} (\bibinfo{year}{1986}),
  \bibinfo{pages}{77--80}.
\newblock


\bibitem[\protect\citeauthoryear{Jung, Kolb, Scheideler, and Sundermeier}{Jung
  et~al\mbox{.}}{2018}]%
        {JKSS18}
\bibfield{author}{\bibinfo{person}{Daniel Jung}, \bibinfo{person}{Christina
  Kolb}, \bibinfo{person}{Christian Scheideler}, {and} \bibinfo{person}{Jannik
  Sundermeier}.} \bibinfo{year}{2018}\natexlab{}.
\newblock \showarticletitle{Competitive Routing in Hybrid Communication
  Networks}. In \bibinfo{booktitle}{\emph{Proc. of the 14th International
  Symposium on Algorithms and Experiments for Wireless Networks
  (ALGOSENSORS)}}. \bibinfo{pages}{15--31}.
\newblock


\bibitem[\protect\citeauthoryear{Jurdzi{\'{n}}ski and Nowicki}{Jurdzi{\'{n}}ski
  and Nowicki}{2018a}]%
        {JN18b}
\bibfield{author}{\bibinfo{person}{Tomasz Jurdzi{\'{n}}ski} {and}
  \bibinfo{person}{Krzysztof Nowicki}.} \bibinfo{year}{2018}\natexlab{a}.
\newblock \showarticletitle{{Connectivity and Minimum Cut Approximation in the
  Broadcast Congested Clique}}. In \bibinfo{booktitle}{\emph{Structural
  Information and Communication Complexity (SIROCCO)}}.
  \bibinfo{pages}{331--344}.
\newblock


\bibitem[\protect\citeauthoryear{Jurdzi{\'{n}}ski and Nowicki}{Jurdzi{\'{n}}ski
  and Nowicki}{2018b}]%
        {JN18}
\bibfield{author}{\bibinfo{person}{Tomasz Jurdzi{\'{n}}ski} {and}
  \bibinfo{person}{Krzysztof Nowicki}.} \bibinfo{year}{2018}\natexlab{b}.
\newblock \showarticletitle{{MST} in $O(1)$ rounds of congested clique}. In
  \bibinfo{booktitle}{\emph{Proc.~of the 29th ACM-SIAM Symposium on Discrete
  Algorithms (SODA)}}. \bibinfo{pages}{2620--2632}.
\newblock


\bibitem[\protect\citeauthoryear{King, Kutten, and Thorup}{King
  et~al\mbox{.}}{2015}]%
        {KKT15}
\bibfield{author}{\bibinfo{person}{Valerie King}, \bibinfo{person}{Shay
  Kutten}, {and} \bibinfo{person}{Mikkel Thorup}.}
  \bibinfo{year}{2015}\natexlab{}.
\newblock \showarticletitle{Construction and impromptu repair of an MST in a
  distributed network with o (m) communication}. In
  \bibinfo{booktitle}{\emph{Proc. of the 2015 ACM Symposium on Principles of
  Distributed Computing (PODC)}}. \bibinfo{pages}{71--80}.
\newblock


\bibitem[\protect\citeauthoryear{Klauck, Nanongkai, Pandurangan, and
  Robinson}{Klauck et~al\mbox{.}}{2015}]%
        {KNPR15}
\bibfield{author}{\bibinfo{person}{Hartmut Klauck}, \bibinfo{person}{Danupon
  Nanongkai}, \bibinfo{person}{Gopal Pandurangan}, {and} \bibinfo{person}{Peter
  Robinson}.} \bibinfo{year}{2015}\natexlab{}.
\newblock \showarticletitle{Distributed Computation of Large-scale Graph
  Problems}. In \bibinfo{booktitle}{\emph{Proc.~of the 26th Annual ACM-SIAM
  Symposium on Discrete Algorithms (SODA)}}. \bibinfo{pages}{391--410}.
\newblock


\bibitem[\protect\citeauthoryear{Konrad}{Konrad}{2018}]%
        {konrad2018mis}
\bibfield{author}{\bibinfo{person}{Christian Konrad}.}
  \bibinfo{year}{2018}\natexlab{}.
\newblock \showarticletitle{MIS in the Congested Clique Model in $\log\log
  {\Delta}$ Rounds}.
\newblock \bibinfo{journal}{\emph{arXiv preprint arXiv:1802.07647}}
  (\bibinfo{year}{2018}).
\newblock


\bibitem[\protect\citeauthoryear{Korhonen}{Korhonen}{2016}]%
        {Kor16}
\bibfield{author}{\bibinfo{person}{Janne~H. Korhonen}.}
  \bibinfo{year}{2016}\natexlab{}.
\newblock \showarticletitle{Brief announcement: Deterministic MST
  sparsification in the congested clique}. In
  \bibinfo{booktitle}{\emph{Proc.~of 30th International Symposium on
  Distributed Computing (DISC 2016)}}.
\newblock


\bibitem[\protect\citeauthoryear{Korhonen and Suomela}{Korhonen and
  Suomela}{2017}]%
        {KS17}
\bibfield{author}{\bibinfo{person}{Janne~H. Korhonen} {and}
  \bibinfo{person}{Jukka Suomela}.} \bibinfo{year}{2017}\natexlab{}.
\newblock \showarticletitle{{Brief Announcement: Towards a Complexity Theory
  for the Congested Clique}}. In \bibinfo{booktitle}{\emph{Proc.~of 31st
  International Symposium on Distributed Computing (DISC)}}.
  \bibinfo{pages}{55:1--55:3}.
\newblock


\bibitem[\protect\citeauthoryear{Kothapalli and Pemmaraju}{Kothapalli and
  Pemmaraju}{2011}]%
        {KP11}
\bibfield{author}{\bibinfo{person}{Kishore Kothapalli} {and}
  \bibinfo{person}{Sriram Pemmaraju}.} \bibinfo{year}{2011}\natexlab{}.
\newblock \showarticletitle{{Distributed graph coloring in a few rounds}}. In
  \bibinfo{booktitle}{\emph{Proc. of the 30th annual ACM Symposium on
  Principles of Distributed Computing (PODC)}}. \bibinfo{pages}{31--40}.
\newblock


\bibitem[\protect\citeauthoryear{Kothapalli and Pemmaraju}{Kothapalli and
  Pemmaraju}{2012}]%
        {KP12}
\bibfield{author}{\bibinfo{person}{Kishore Kothapalli} {and}
  \bibinfo{person}{Sriram Pemmaraju}.} \bibinfo{year}{2012}\natexlab{}.
\newblock \showarticletitle{{Super-Fast 3-Ruling Sets}}. In
  \bibinfo{booktitle}{\emph{IARCS Annual Conference on Foundations of Software
  Technology and Theoretical Computer Science (FSTTCS)}},
  Vol.~\bibinfo{volume}{18}. \bibinfo{pages}{136--147}.
\newblock


\bibitem[\protect\citeauthoryear{Kothapalli, Scheideler, Onus, and
  Schindelhauer}{Kothapalli et~al\mbox{.}}{2006}]%
        {KOS+06}
\bibfield{author}{\bibinfo{person}{Kishore Kothapalli},
  \bibinfo{person}{Christian Scheideler}, \bibinfo{person}{Melih Onus}, {and}
  \bibinfo{person}{Christian Schindelhauer}.} \bibinfo{year}{2006}\natexlab{}.
\newblock \showarticletitle{{Distributed coloring in $\tilde{O}(\sqrt{\log n})$
  bit rounds}}. In \bibinfo{booktitle}{\emph{Proc. 20th IEEE International
  Parallel {\&} Distributed Processing Symposium (IPDPS)}}.
\newblock


\bibitem[\protect\citeauthoryear{Kutten and Peleg}{Kutten and Peleg}{1998}]%
        {KP98}
\bibfield{author}{\bibinfo{person}{Shay Kutten} {and} \bibinfo{person}{David
  Peleg}.} \bibinfo{year}{1998}\natexlab{}.
\newblock \showarticletitle{Fast distributed construction of small
  $k$-dominating sets and applications}.
\newblock \bibinfo{journal}{\emph{Journal of Algorithms}} \bibinfo{volume}{28},
  \bibinfo{number}{1} (\bibinfo{year}{1998}), \bibinfo{pages}{40--66}.
\newblock


\bibitem[\protect\citeauthoryear{Leighton, Maggs, Ranade, and Rao}{Leighton
  et~al\mbox{.}}{1994}]%
        {LMRR94}
\bibfield{author}{\bibinfo{person}{F.~T. Leighton}, \bibinfo{person}{B.~M.
  Maggs}, \bibinfo{person}{A.~G. Ranade}, {and} \bibinfo{person}{S.~B. Rao}.}
  \bibinfo{year}{1994}\natexlab{}.
\newblock \showarticletitle{Randomized routing and sorting in fixed-connection
  networks}.
\newblock \bibinfo{journal}{\emph{Journal of Algorithms}}  \bibinfo{volume}{17}
  (\bibinfo{year}{1994}), \bibinfo{pages}{157--205}.
\newblock


\bibitem[\protect\citeauthoryear{Lenzen}{Lenzen}{2013}]%
        {Len13}
\bibfield{author}{\bibinfo{person}{Christoph Lenzen}.}
  \bibinfo{year}{2013}\natexlab{}.
\newblock \showarticletitle{Optimal deterministic routing and sorting on the
  congested clique}. In \bibinfo{booktitle}{\emph{Proc.~of 32nd ACM Symposium
  on Principles of Distributed Computing (PODC)}}. \bibinfo{pages}{42--50}.
\newblock


\bibitem[\protect\citeauthoryear{Lenzen and Peleg}{Lenzen and Peleg}{2013}]%
        {LP13}
\bibfield{author}{\bibinfo{person}{Christoph Lenzen} {and}
  \bibinfo{person}{David Peleg}.} \bibinfo{year}{2013}\natexlab{}.
\newblock \showarticletitle{Efficient distributed source detection with limited
  bandwidth}. In \bibinfo{booktitle}{\emph{Proc. 32nd ACM Symposium on
  Principles of Distributed Computing (PODC)}}. \bibinfo{pages}{375--382}.
\newblock


\bibitem[\protect\citeauthoryear{Lotker, Patt-Shamir, Pavlov, and Peleg}{Lotker
  et~al\mbox{.}}{2005}]%
        {LSPP05}
\bibfield{author}{\bibinfo{person}{Zvi Lotker}, \bibinfo{person}{Boaz
  Patt-Shamir}, \bibinfo{person}{Elan Pavlov}, {and} \bibinfo{person}{David
  Peleg}.} \bibinfo{year}{2005}\natexlab{}.
\newblock \showarticletitle{Minimum-weight spanning tree construction in
  $O(\log \log n)$ communication rounds}.
\newblock \bibinfo{journal}{\emph{SIAM J. Comput.}} \bibinfo{volume}{35},
  \bibinfo{number}{1} (\bibinfo{year}{2005}), \bibinfo{pages}{120--131}.
\newblock


\bibitem[\protect\citeauthoryear{M{\'{e}}tivier, Robson, Saheb-Djahromi, and
  Zemmari}{M{\'{e}}tivier et~al\mbox{.}}{2011}]%
        {MRS11}
\bibfield{author}{\bibinfo{person}{Y. M{\'{e}}tivier}, \bibinfo{person}{J.~M.
  Robson}, \bibinfo{person}{N. Saheb-Djahromi}, {and} \bibinfo{person}{A.
  Zemmari}.} \bibinfo{year}{2011}\natexlab{}.
\newblock \showarticletitle{{An optimal bit complexity randomized distributed
  MIS algorithm}}.
\newblock \bibinfo{journal}{\emph{Distributed Computing}} \bibinfo{volume}{23},
  \bibinfo{number}{5-6} (\bibinfo{year}{2011}), \bibinfo{pages}{331--340}.
\newblock


\bibitem[\protect\citeauthoryear{Nanongkai}{Nanongkai}{2014}]%
        {Nan14}
\bibfield{author}{\bibinfo{person}{Danupon Nanongkai}.}
  \bibinfo{year}{2014}\natexlab{}.
\newblock \showarticletitle{Distributed approximation algorithms for weighted
  shortest paths}. In \bibinfo{booktitle}{\emph{Proc.~of 46th ACM Symposium on
  Theory of Computing (STOC)}}. \bibinfo{pages}{565--573}.
\newblock


\bibitem[\protect\citeauthoryear{Nash-Williams}{Nash-Williams}{1964}]%
        {Nas64}
\bibfield{author}{\bibinfo{person}{C.~St. J.~A. Nash-Williams}.}
  \bibinfo{year}{1964}\natexlab{}.
\newblock \showarticletitle{{Decomposition of Finite Graphs Into Forests}}.
\newblock \bibinfo{journal}{\emph{Journal of the London Mathematical Society}}
  \bibinfo{volume}{39}, \bibinfo{number}{1} (\bibinfo{year}{1964}),
  \bibinfo{pages}{12--12}.
\newblock


\bibitem[\protect\citeauthoryear{Pandurangan, Robinson, and
  Scquizzato}{Pandurangan et~al\mbox{.}}{2016}]%
        {PRS16}
\bibfield{author}{\bibinfo{person}{Gopal Pandurangan}, \bibinfo{person}{Peter
  Robinson}, {and} \bibinfo{person}{Michele Scquizzato}.}
  \bibinfo{year}{2016}\natexlab{}.
\newblock \showarticletitle{Fast Distributed Algorithms for Connectivity and
  MST in Large Graphs}. In \bibinfo{booktitle}{\emph{Proc. of the 28th ACM
  Symposium on Parallelism in Algorithms and Architectures (SPAA)}}.
  \bibinfo{pages}{429--438}.
\newblock


\bibitem[\protect\citeauthoryear{Peleg and Rubinovich}{Peleg and
  Rubinovich}{2000}]%
        {PR00}
\bibfield{author}{\bibinfo{person}{David Peleg} {and} \bibinfo{person}{Vitaly
  Rubinovich}.} \bibinfo{year}{2000}\natexlab{}.
\newblock \showarticletitle{Near-tight lower bound on the time complexity of
  distributed {MST} construction}.
\newblock \bibinfo{journal}{\emph{SIAM J. Comput.}} \bibinfo{volume}{30},
  \bibinfo{number}{5} (\bibinfo{year}{2000}), \bibinfo{pages}{1427--1442}.
\newblock


\bibitem[\protect\citeauthoryear{Ranade}{Ranade}{1991}]%
        {Ran91}
\bibfield{author}{\bibinfo{person}{Abhiram~G. Ranade}.}
  \bibinfo{year}{1991}\natexlab{}.
\newblock \showarticletitle{{How to Emulate Shared Memory}}.
\newblock \bibinfo{journal}{\emph{J. Comput. System Sci.}}
  \bibinfo{volume}{42}, \bibinfo{number}{3} (\bibinfo{year}{1991}),
  \bibinfo{pages}{307--326}.
\newblock


\bibitem[\protect\citeauthoryear{Sarma, Holzer, Kor, Korman, Nanongkai,
  Pandurangan, Peleg, and Wattenhofer}{Sarma et~al\mbox{.}}{2011}]%
        {DHKK11}
\bibfield{author}{\bibinfo{person}{Atish~Das Sarma}, \bibinfo{person}{Stephan
  Holzer}, \bibinfo{person}{Liah Kor}, \bibinfo{person}{Amos Korman},
  \bibinfo{person}{Danupon Nanongkai}, \bibinfo{person}{Gopal Pandurangan},
  \bibinfo{person}{David Peleg}, {and} \bibinfo{person}{Roger Wattenhofer}.}
  \bibinfo{year}{2011}\natexlab{}.
\newblock \showarticletitle{Distributed verification and hardness of
  distributed approximation}. In \bibinfo{booktitle}{\emph{Proc.~of 43th ACM
  Symposium on Theory of Computing (STOC)}}. \bibinfo{pages}{363--372}.
\newblock


\bibitem[\protect\citeauthoryear{Scheideler}{Scheideler}{1998}]%
        {Sch98}
\bibfield{author}{\bibinfo{person}{Christian Scheideler}.}
  \bibinfo{year}{1998}\natexlab{}.
\newblock \bibinfo{booktitle}{\emph{Universal Routing Strategies for
  Interconnection Networks}}.
\newblock \bibinfo{publisher}{Springer Verlag, Heidelberg}.
\newblock


\bibitem[\protect\citeauthoryear{Schmidt, Siegel, and Srinivasan}{Schmidt
  et~al\mbox{.}}{1995}]%
        {SSS95}
\bibfield{author}{\bibinfo{person}{Jeanette~P. Schmidt}, \bibinfo{person}{Alan
  Siegel}, {and} \bibinfo{person}{Aravind Srinivasan}.}
  \bibinfo{year}{1995}\natexlab{}.
\newblock \showarticletitle{{Chernoff-Hoeffding Bounds for Applications with
  Limited Independence}}.
\newblock \bibinfo{journal}{\emph{SIAM Journal on Discrete Mathematics}}
  \bibinfo{volume}{8}, \bibinfo{number}{2} (\bibinfo{year}{1995}),
  \bibinfo{pages}{223--250}.
\newblock


\bibitem[\protect\citeauthoryear{Upfal}{Upfal}{1982}]%
        {Upf82}
\bibfield{author}{\bibinfo{person}{Eli Upfal}.}
  \bibinfo{year}{1982}\natexlab{}.
\newblock \showarticletitle{Efficient schemes for parallel communication}. In
  \bibinfo{booktitle}{\emph{Proc.~of 1st ACM Symposium on Principles of
  Distributed Computing (PODC)}}. \bibinfo{pages}{241--250}.
\newblock


\end{thebibliography}

\iffull{

\clearpage

\appendix

\section{Simulations in the $k$-Machine Model} \label{app:simulation}

In this section we consider the simulation of an algorithm for the Node-Capacitated Clique in the $k$-machine model.
For the Congested Clique model, Klauck et al.~\cite{KNPR15} provide a conversion theorem that states the following.

\begin{theorem}[Theorem 4.1 in \cite{KNPR15}] \label{thm:conversion1}
  Any algorithm $A^C$ in the Congested Clique model that executes in $T^C$ rounds and passes at most $M^C$ messages over the course of the algorithm's execution can be simulated in the $k$-machine model so that it requires at most $\widetilde{O}(M^C/k^2 + T^C\Delta'/k)$ rounds.
  Here, $\Delta'$ is the communication degree complexity and refers to the maximum number of messages sent by any node at any round.
\end{theorem}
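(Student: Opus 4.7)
The plan is to have the $k$ machines jointly simulate $A^C$ by partitioning the $n$ Congested Clique virtual nodes evenly among them, assigning $n/k$ virtual nodes per machine via a random hash function on node identifiers. Each simulated round of $A^C$ is then carried out by (i) each machine locally computing the new state of each virtual node it hosts, given the messages received in the previous simulated round, and then (ii) relaying each outgoing message from the machine hosting its sender to the machine hosting its recipient over the $k$-machine fabric. Since the simulation is faithful round by round, the overhead reduces entirely to bounding the time required for stage (ii), and this overhead decomposes into two additive bottlenecks corresponding to the two terms in the stated bound.

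For the first term $\widetilde{O}(M^C/k^2)$, I would amortize the relay cost across the entire execution. Summed over all $T^C$ rounds, $A^C$ generates $M^C$ messages in total. Because senders and receivers are placed on machines by the random hash function, each message is routed over a uniformly chosen ordered machine pair, of which there are $\Theta(k^2)$. A standard Chernoff-style concentration then shows that no pairwise link carries more than $\widetilde{O}(M^C/k^2)$ messages throughout the simulation with high probability, giving the claimed aggregate overhead after matching against the per-link bandwidth of the $k$-machine model.

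For the second term $\widetilde{O}(T^C \Delta'/k)$, I would isolate the per-round bottleneck. Within a single simulated round each virtual node sends at most $\Delta'$ messages, so a machine hosting $n/k$ virtual nodes must dispatch at most $n\Delta'/k$ outgoing messages, and by symmetry receives at most $n\Delta'/k$ incoming messages. Distributing these uniformly at random over the $k-1$ inter-machine links and comparing against the per-link bandwidth yields an $\widetilde{O}(\Delta'/k)$ slowdown per simulated round, and summing over $T^C$ rounds gives the second term.

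The main obstacle I expect is the concentration analysis for load balancing: one has to argue that, simultaneously across all $\Theta(k^2)$ links and all $T^C$ simulated rounds, no link is overloaded by more than a polylogarithmic factor beyond its expectation. I would handle this by splitting into two regimes --- a direct Chernoff bound when the expected load per link-round is large enough that logarithmic deviations are negligible, and a separate combinatorial or pigeonhole argument when the expectation is too small for multiplicative concentration --- and then union-bounding over links, machine pairs, and simulated rounds. A secondary subtlety is that the adversary controlling $A^C$ can correlate message destinations across rounds, so the argument must rely crucially on the randomness coming from the node-to-machine placement rather than from the messages themselves; if that randomness is insufficient, one may need to insert an extra Valiant-style random-intermediate-relay step to decouple the adversarial destination pattern from the physical routing load.
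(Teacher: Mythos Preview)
The paper does not actually prove this theorem; it is quoted verbatim from \cite{KNPR15}, and the paper only offers a one-paragraph sketch of the simulation idea (place each Congested Clique node on a uniformly random machine, so each machine hosts $\widetilde{O}(n/k)$ virtual nodes, and let the host of $u$ simulate $u$'s messages). Your proposal follows precisely this random-vertex-partitioning simulation and then supplies the concentration arguments that the paper omits, so it is consistent with---and more detailed than---what appears here.

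One remark on your second-term calculation: from ``a machine hosts $n/k$ virtual nodes, each emitting at most $\Delta'$ messages, spread over $k-1$ links'' you get an expected $\Theta(n\Delta'/k^2)$ messages per link per simulated round, not $\Theta(\Delta'/k)$. To land on the $\widetilde{O}(T^C\Delta'/k)$ term you need to invoke the per-link bandwidth of the $k$-machine model (which in \cite{KNPR15} scales with $n/k$ up to polylog factors), or equivalently argue via the per-machine send/receive capacity rather than raw per-link message counts. The underlying approach is right, but as written the arithmetic step from $n\Delta'/k$ outgoing messages to a $\Delta'/k$ slowdown skips this bandwidth normalization.
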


The simulation alluded to in Theorem~\ref{thm:conversion1} is quite straightforward.
Each node from the Congested Clique model is placed randomly on one of the $k$ machines in the $k$-machine model.
Under this random vertex partitioning scheme, each machine will get at most $\widetilde{O}(n/k)$ nodes from the Congested Clique model.
So it is natural for the messages sent by each node $u$ in the Congested Clique model to be simulated by the machine that holds $u$.

The following conversion result suited for the Node-Capacitated Clique model follows as a corollary when we notice that the number of messages per round is at most $\widetilde{O}(n)$ and, furthermore, $\Delta'$ under the Node-Capacitated Clique model is at most $O(\log n)$.

\begin{corollary} \label{cor:conversion2}
  Any algorithm $A^{NCC}$ in the Node-Capacitated Clique model that executes in $T^{NCC}$ rounds can be simulated in the $k$-machine model so that it requires at most $\widetilde{O}(nT^{NCC}/k^2 )$ rounds.
\end{corollary}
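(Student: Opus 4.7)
The plan is to obtain Corollary~\ref{cor:conversion2} as a direct consequence of Theorem~\ref{thm:conversion1} by translating the two NCC capacity bounds into quantitative bounds on the parameters $\Delta'$ and $M^C$ that appear in that theorem. Since every NCC algorithm is, a priori, also a Congested Clique algorithm (it uses the same underlying clique topology and synchronous message-passing semantics), any NCC algorithm $A^{NCC}$ can be viewed as a Congested Clique algorithm $A^C$ on which Theorem~\ref{thm:conversion1} applies; the NCC-specific structure will only help us shrink the parameters fed into that theorem.

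First I would bound $\Delta'$. By the definition of the Node-Capacitated Clique, every node sends at most $O(\log n)$ messages in any single round. Thus, when we view $A^{NCC}$ as a Congested Clique algorithm, the communication degree complexity satisfies $\Delta' = O(\log n)$. Next I would bound $M^C$: again by the NCC capacity bound, the total number of messages sent across the network in a single round is at most $n \cdot O(\log n) = O(n \log n)$, so over $T^{NCC}$ rounds, $M^C \le O(n T^{NCC} \log n) = \widetilde{O}(n T^{NCC})$.

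Plugging these two bounds into Theorem~\ref{thm:conversion1} yields a simulation overhead of
\[
\widetilde{O}\!\left(\frac{M^C}{k^2} + \frac{T^{NCC}\Delta'}{k}\right) \;=\; \widetilde{O}\!\left(\frac{n T^{NCC}}{k^2} + \frac{T^{NCC}\log n}{k}\right).
\]
Since we are in the regime $k \le n$, we have $n/k^2 \ge 1/k$, so the first term dominates up to a polylogarithmic factor absorbed by $\widetilde{O}$, giving $\widetilde{O}(n T^{NCC}/k^2)$ as claimed.

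I do not anticipate any substantive obstacle: because the NCC is strictly a restriction of the Congested Clique (same topology, smaller per-node bandwidth), the random vertex partitioning simulation underlying Theorem~\ref{thm:conversion1} applies verbatim, and the only content is the two parameter substitutions above. The one small point worth verifying is that the receive-side capacity of the NCC (each node receives at most $O(\log n)$ messages per round) is consistent with the sender-side bound we used for $M^C$; since the total number of messages counted at receivers equals the total counted at senders, and the senders already impose $M^C = \widetilde{O}(n T^{NCC})$, no additional argument is needed.
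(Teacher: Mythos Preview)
Your proposal is correct and matches the paper's argument essentially verbatim: the paper derives the corollary from Theorem~\ref{thm:conversion1} by observing that in the NCC model $\Delta' = O(\log n)$ and the total message count per round is $\widetilde{O}(n)$, hence $M^C = \widetilde{O}(nT^{NCC})$, and then absorbs the resulting $T^{NCC}\Delta'/k$ term into $\widetilde{O}(nT^{NCC}/k^2)$ using $k \le n$. Your write-up just makes these steps explicit.
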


\section{Communication Primitives} \label{app:primitives}

In this section, we provide full descriptions of our communication primitives, and provide the missing proofs.
For simplicity, we refer to butterfly nodes as \emph{BF-nodes}.

\subsection{Aggregate-and-Broadcast Algorithm} \label{sec:aggregateAndBroadcast}

We first describe the Aggregate-and-Broadcast Algorithm of Theorem~\ref{thm:aggregateAndBroadcast} in detail.
First, every node that stores an input value, but does not emulate a node of the butterfly (in which case the most significant bit of its identifier must be $1$), sends it to the BF-node $j$ of level $0$ such that $j$ equals the remaining bits of its identifier.
Afterwards, every BF-node of level $0$ stores at most two input values, i.e., its own value and at most one value of a node that does not emulate a node of the butterfly.
Note, that for every BF-node of level $0$ there is a unique path of length $d$ from that node to any BF-node of level $d$ in the butterfly.
In the \emph{aggregation phase}, we send all input values to BF-node $0$ of level $d$, which in the following we refer to as the \emph{root} of the butterfly, along that path system.
Whenever two values $x, y$ reach the same BF-node $u$, $u$ only forwards $g(\{x,y\})$.
Thereby, the root eventually computes the aggregate of all values.
This value is finally broadcast to all BF-nodes of level $0$ in the \emph{broadcast phase}:
Every BF-node of level $i$ that receives the value forwards it to all of its neighbors in level $i-1$.
Finally, every node that does not emulate a BF-node gets informed by the BF-node of level $0$ whose identifier differs only in the most significant bit.
The correctness of Theorem~\ref{thm:aggregateAndBroadcast} can easily be seen.

In pointed out in the paper, we also use the above algorithm to achieve \emph{synchronization}:
Assume that the nodes execute some distributed algorithm that finishes in different rounds at the nodes.
In order to start a follow-up algorithm at the same round, the nodes can make use of the following slight modification of the Aggregate-and-Broadcast algorithm:
Every node delays its participation in the aggregation phase until it has finished the current algorithm.
Once it has finished, it sends a token to its corresponding BF-node at level $0$.
Once a BF-node at level $0$ has received a token from each of the at most two nodes of the Node-Capacitated Clique associated with it, it sends a token in the direction of the butterfly's root.
Similarly, once a BF-node at level $i>0$ has received tokens from both incoming edges, it sends a token in the direction of the root.
Thus, once the root has received tokens from both incoming edges, it knows that all nodes have finished the current algorithm.
The broadcast phase will then allow all nodes to start the follow-up algorithm at the same round.
It is easy to see that the synchronization just produces an overhead of $O(\log n)$ rounds.

\subsection{Aggregation Algorithm} \label{sec:aggregation}


Next, we describe the Aggregation Algorithm of Theorem~\ref{thm:aggregationProblem}.
We divide the execution of the algorithm into three phases, the \emph{Preprocessing Phase}, the \emph{Combining Phase}, and the \emph{Postprocessing Phase}.
First, in the \emph{Preprocessing Phase}, all input values are sent in batches of size $\lceil \log n \rceil$ to BF-nodes of level $0$ chosen uniformly at random.
More specifically, every node $u \in V$ transforms each input value $s_{u,i}$ for all $A_i$ of which $u$ is a member of into a packet of the form $(i, s_{u,i})$, and enumerates all of its packets arbitrarily from $1$ to $k \le \ell_1$ as $p_1, \ldots, p_k$.
Then, for each $j \in \{1,\ldots,\lceil k/\log n \rceil \}$, $u$ sends out packets $p_{(j-1)\lceil \log n \rceil + 1}, \ldots, p_{\min\{j \lceil \log n \rceil, k\}}$ in communication round $j$ to BF-nodes chosen uniformly and independently at random among all BF-nodes of level $0$.
To achieve synchronization after this phase, the nodes perform the Aggregate-and-Broadcast algorithm.

In the \emph{Combining Phase}, the input values of each aggregation group $A_i$ are aggregated to a node $h(i)$ (the \emph{intermediate target}) chosen uniformly and independently at random from the BF-nodes of level $d$ using a (pseudo-)random hash-function $h$.
This is achieved by using a variant of the \emph{random rank protocol} \cite{Ale82, Upf82}:
Each packet $p = (i, s_{u,i})$ stored at some BF-node of level $0$ gets assigned a $rank(p) = \rho(i)$ using some (pseudo-)random hash function $\rho: \{1, \ldots, N\} \rightarrow [K]$ that is known to all nodes.
Then, all packets belonging to aggregation group $A_i$ are routed towards their target $h(i)$ along the unique paths on the butterfly, and using the following rules:
\begin{enumerate}
  \item Whenever a BF-node stores multiple packets belonging to the same aggregation group $A_i$, it combines them into a single packet of rank $\rho(i)$, combining their values using the given aggregate function.
  \item Whenever multiple packets from different aggregation groups contend to use the same edge in the same round, the one with smallest rank wins (preferring the one with smallest aggregation group identifier in case of a tie), and all others get \emph{delayed}.
\end{enumerate}
Note that a packet can never get delayed by a packet belonging to the same aggregation group.
Clearly, in each round at most one packet is sent along each edge of the butterfly, and eventually all (combined) packets have reached their targets.

In order to determine whether the combining phase has finished, every BF-node of level $0$ sends out a token to all neighbors at level $1$ once it has sent out all packets.
Correspondingly, every BF-node at level $i > 0$ that has sent out all packets and has received tokens from both neighbors at level $i-1$ sends a token to both its neighbors at level $i+1$.
By performing the Aggregate-and-Broadcast Algorithm to determine whether all BF-nodes of level $d$ have received two tokens, the nodes eventually detect that the combining phase has finished.

Finally, in the \emph{Postprocessing Phase} the BF-nodes of level $d$ send their packets to the corresponding targets in rounds that are randomly chosen from $\{1, \ldots, s\}$, where $s = \lceil \hat{\ell_2}/\log n \rceil$.
More specifically, for each packet $p$ stored at some node $u$, which contains the result $f(\{s_{u,i} \mid u \in A_i\})$ for some aggregation group $A_i$, $u$ selects a round $r \in \{1, \ldots, s\}$ uniformly and independently at random and sends $p$ to $t_i$ in round $r$.
Again, the end of the phase is determined by using the Aggregate-and-Broadcast Algorithm.

We now turn to the analysis of the algorithm.

\begin{lemma}\label{lem:preprocessing}
  The Preprocessing Phase takes time $O(\ell_1/\log n)$.
  Moreover, in each round every node sends and receives at most $O(\log n)$ packets, w.h.p.
\end{lemma}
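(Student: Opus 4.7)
The plan is to treat the sending bound and the receiving bound separately; the former is immediate from the construction, and the latter is a standard balls-into-bins/Chernoff argument applied to an arbitrary fixed round.

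For the sending bound, recall that each node $u$ holds at most $k \le \ell_1$ packets and, by construction, emits them in batches of $\lceil \log n \rceil$, one batch per round. Hence $u$ is done after $\lceil k/\lceil \log n\rceil\rceil = O(\ell_1/\log n)$ rounds and during every round of the phase sends at most $\lceil \log n\rceil$ packets. This disposes of the runtime claim and of the sending half of the load claim.

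For the receiving bound, fix an arbitrary round $r$ and an arbitrary level-$0$ BF-node $v$. In round $r$ the $n$ nodes of the Node-Capacitated Clique inject at most $n \lceil \log n\rceil$ packets into the butterfly in total, each addressed to a uniformly and independently chosen level-$0$ BF-node out of $\Theta(n)$ such nodes. Letting $Y_v^{(r)}$ be the number of packets that land at $v$, we obtain $E[Y_v^{(r)}] = O(\log n)$, and $Y_v^{(r)}$ is a sum of independent $\{0,1\}$-valued indicators, one per injected packet. A standard Chernoff bound therefore yields $Y_v^{(r)} = O(\log n)$ with probability at least $1 - n^{-c}$ for any desired constant $c > 0$. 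A union bound over the $n$ choices of $v$ and the $\poly(n)$ rounds of the phase completes the receiving half of the argument.

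There is no conceptually hard step; the only subtlety worth flagging is bookkeeping. Each node of the Node-Capacitated Clique plays a double role during this phase: it originates its own $\le \lceil \log n \rceil$ outgoing packets, and it simultaneously emulates at most one level-$0$ BF-node that may receive packets from others. Since both contributions are individually $O(\log n)$ per round, their sum remains $O(\log n)$ and no extra concentration inequality is needed to combine them. Finally, if one prefers to realise the uniformly random targets via the $\Theta(\log n)$-wise independent hash functions of Section~\ref{sec:primitives} rather than fresh local coins, the argument goes through unchanged by appealing to Lemma~\ref{lem:generalChernoffBound} in place of the classical Chernoff bound, since the deviation to be controlled is itself $O(\log n)$.
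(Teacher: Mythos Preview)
Your proof is correct and follows essentially the same approach as the paper: the sending bound is immediate from the batching construction, and the receiving bound is obtained by fixing a level-$0$ BF-node and a round, observing that at most $n\lceil\log n\rceil$ packets are injected in that round with each landing at the fixed node with probability $1/2^d$, and applying a Chernoff bound. Your additional remarks (the explicit union bound, the double role of a node as sender and BF-node emulator, and the observation about $\Theta(\log n)$-wise independence) are sound elaborations that the paper leaves implicit or omits.
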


\begin{proof}
  The runtime and the bound on the number of packets sent out in each round are obvious.
  Hence, it remains to bound the number of packets that are received in each round.

  Fix any BF-node $u$ of level $0$ and round $t \in \{1, \ldots, \lceil \ell/\log n \rceil \}$.
  Altogether, at most $n \lceil \log n \rceil$ packets are sent out in round $t$, which we denote by $p_1, \ldots, p_{n \lceil \log n \rceil}$.
  For each $p_i$, let the binary random variable $X_i$ be $1$ if and only if $p_i$ is sent to BF-node $u$ in round $t$.
  Furthermore, let $X = \sum_{i=1}^k X_i$.
  Certainly, $\E[X_i] = \Pr[X_i=1] = 1/2^d$ and therefore, $\E[X] \le (n \lceil \log n \rceil)/2^d \le 2 \log n + 1$.
  Since the packets choose their destinations uniformly and independently at random, it follows from Lemma~\ref{lem:generalChernoffBound} that $X = O(\log n)$, w.h.p.
\end{proof}

In order to bound the runtime of the Combining Phase, we first analyze our variant of the random rank protocol in a general setting:
A path collection $P = \{p_1,\ldots,p_N\}$ in some graph $G$ is a \emph{leveled path collection} if every node $v$ can be given a level $l(v) \in \N$ so that for every edge $(v,w)$ of a path in that collection, $l(w)=l(v)+1$.
Given a leveled path collection $P$ of size $n$ in which packets belonging to the same aggregation group have the same destination, let the \emph{congestion} $C$ of $P$ be defined as the maximum number of aggregation groups that have packets that want to cross the same edge, and let the \emph{degree} $d$ of $P$ be defined as the maximum number of edges in $E(P)$ leading to the same node, where $E(P)$ is the set of all edges used by the paths in $P$.

\begin{theorem} \label{thm:combining1}
  For any leveled path collection $P$ of size $n$ with congestion $C$, depth $D$, and degree $d$, the routing strategy used in the Combining Phase with parameter $K \ge 8C$ needs at most $O(C + D \log d + \log n)$ steps, w.h.p., to finish routing in $P$.
\end{theorem}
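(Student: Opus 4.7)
The plan is to prove Theorem~\ref{thm:combining1} via a delay sequence argument in the style of Ranade's analysis of the random rank protocol, adapted carefully to the fact that packets of the same aggregation group combine on contact (rather than delay one another). Suppose for contradiction that some packet $p$ has not yet arrived after $T = c(C + D\log d + \log n)$ rounds for a sufficiently large constant $c$. I would first show how to extract a \emph{delay chain} $(p_0,e_0,t_0),(p_1,e_1,t_1),\dots,(p_\ell,e_\ell,t_\ell)$ with $p_0 = p$, where each $p_{j+1}$ is the (smaller-priority) packet that won edge $e_j$ in round $t_j$ and thereby delayed $p_j$; a standard accounting (every round in which the current packet fails to advance is charged to the next link in the chain) gives $\ell \geq T - D$. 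The key structural observation is that if $p_j$ and $p_{j+1}$ belonged to the same aggregation group they would have been combined into a single packet at their common node by rule~(1); hence consecutive packets of the chain must come from \emph{distinct} groups, and by rule~(2) their priorities $(\rho(\mathrm{group}(p_j)),\mathrm{id})$ form a strictly decreasing sequence.

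Next I would bound the probability that such a chain exists. The space-projection of the chain is a backward walk through the leveled graph, where each ``across-level'' step chooses one of at most $d$ in-neighbors of the current node and each ``same-node'' step (which is the mechanism that contributes $C$ to the runtime) chooses among the at most $C$ groups whose packets cross the incident edge. Separating the $D$ across-level steps from the remaining $\ell - D$ queueing steps, the number of distinct chains is bounded by roughly $n\cdot\binom{\ell}{D}d^{D}C^{\ell - D}$. For a fixed sequence of $\ell + 1$ distinct groups, since $\rho$ is uniform on $[K]$ and independent across groups, the probability that their $\rho$-values fall in any prescribed strict order is at most $1/(\ell+1)!$. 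Multiplying these bounds and using $K \geq 8C$ (which ensures $\ell \leq K$, so strict decrease is feasible, and turns $C^\ell$ into a geometric factor after combining with $1/\ell!$), a standard manipulation shows the probability of a length-$\ell$ chain existing is at most
\[
  n\cdot\binom{\ell}{D}\,d^{D}\left(\frac{eC}{\ell}\right)^{\ell - D}\left(\frac{e}{\ell}\right)^{D}
  \;\le\; n\cdot 2^{-\Omega(\ell)}
\]
once $\ell = \Omega(C + D\log d + \log n)$, where the $D\log d$ term is needed precisely to absorb the $d^{D}$ branching factor. A union bound over all $n$ packets and the $O(T)$ possible chain lengths then gives the claimed $1 - n^{-\Omega(1)}$ success probability.

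The main obstacle I anticipate is the combinatorial counting of chains in the second step: unlike in butterfly-style networks where the degree is constant, here one must explicitly separate the backward walk into across-level and same-node (queueing) portions, and verify that the $C$-choice per queueing step is correct in the presence of combining (since combined packets inherit the rank of their group, the ``effective'' number of distinct groups crossing an edge is still at most $C$, which is what makes the argument go through). The second subtlety is ensuring that the priority tie-breaking by group identifier does not spoil the factorial bound on the probability of a strict ordering, but this holds because for any $\ell + 1$ fixed distinct groups, the total order on their priorities is still uniform over the $(\ell+1)!$ permutations induced by $\rho$ (with ties resolved deterministically), so only one permutation is favorable.
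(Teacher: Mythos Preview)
Your overall approach—a Ranade-style delay sequence argument adapted for combining—is exactly the paper's, including the observation that rule~(1) forces consecutive groups in the chain to differ and that tie-breaking by identifier makes the priority sequence strictly decreasing, hence all groups in the chain are distinct. The combinatorial enumeration is also essentially the paper's $n\cdot d^D\cdot C^s\cdot\binom{D+s}{s}$, modulo some index confusion in your $\binom{\ell}{D}C^{\ell-D}$ between the number $s$ of delays and the total walk length.

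The real gap is in the probability step. Your claim that ``the total order on their priorities is still uniform over the $(\ell+1)!$ permutations'' is false: ties in $\rho$ are broken by the \emph{fixed} group identifiers, so the induced order on a fixed set of groups is not a uniformly random permutation (take $K=1$ for the extreme case, where the order is deterministic; already for $K=2$ and two groups the split is $3/4$ versus $1/4$). Your argument as written never uses the hypothesis $K\ge 8C$, which is the red flag. The paper handles exactly this point by making the rank values $r_1\ge r_2\ge\cdots\ge r_{s+1}$ part of the delay sequence: there are at most $\binom{s+K}{s+1}$ such weakly decreasing sequences over $[K]$, and since the $s+1$ groups are distinct, the probability that $\rho(a_i)=r_i$ for all $i$ is exactly $1/K^{s+1}$. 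The resulting factor $\binom{s+K}{s+1}/K^{s+1}$ replaces your $1/(\ell+1)!$, and combined with the $C^s$ group choices it yields $(C/K)^s\le 8^{-s}$ via $K\ge 8C$, which is precisely what drives the final bound down to $n^{-\alpha}$ at $s=\Theta(C+D\log d+\log n)$.
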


\begin{proof}
  We closely follow the analysis of the random rank protocol in \cite{Sch98} and extend it with ideas from \cite{LMRR94} so that the analysis covers the case that packets can be combined.
  In order to bound the runtime, we will use the following delay sequence argument.

  Consider the runtime of the routing strategy to be at least $T\ge D+s$.
  We want to show that it is very improbable that $s$ is large.
  For this we need to find a structure that witnesses a large $s$.
  This structure should become more and more unlikely to exist the larger $s$ becomes.

  Let $p_1$ be a packet that arrived at its destination $v_1$ in step $T$, and let $A_1$ be the aggregation group of $p_1$.
  We follow the path of $p_1$ (or one of its predecessors, if $p_1$ is the result of the combination of two packets at some point) backwards until we reach a link $e_1$, where it was delayed the last time.
  Let us denote the length of the path from $v_1$ to $e_1$ (inclusive) by $l_1$, and the packet that delayed $p_1$ by $p_2$.
  Let $A_2$ be the aggregation group of $p_2$.
  From $e_1$ we follow the path of $p_2$ (or one of its predecessors) backwards until we reach a link $e_2$ where $p_2$ was delayed the last time, by a packet $p_3$ from some aggregation group $A_3$.
  Let us denote the length of the path from $e_1$ (exclusive) to $e_2$ (inclusive) by $l_2$.
  We repeat this construction until we arrive at a packet $p_{s+1}$ from some aggregation group $a_{s+1}$ that prevented the packet $p_s$ at edge $e_s$ from moving forward, and denote the number of links on the path of $p_i$ from $e_i$ (inclusive) to $e_{i-1}$ (exclusive) as $l_i$.
  Altogether it holds for all $i \in \{1, \ldots, s\}$:
  a packet from aggregation group $A_{i+1}$ is sent over $e_i$ at time step $T-\sum_{j=1}^i (l_j +1)+1$, and prevents at that time step a packet from aggregation group $A_i$ from moving forward.

  The path from $e_s$ to $v_1$ recorded by this process in reverse order is called {\em delay path}.
  It consists of $s$ contiguous parts of routing paths of length $l_1, \ldots, l_s \ge 0$ with $\sum_{i=1}^s l_i \le D$.
  Because of the contention resolution rule it holds $\rho(i) \ge \rho(i+1)$ for all $i \in \{1, \ldots, s\}$.
  A structure that contains all these features is defined as follows.

  \begin{definition}[{\boldmath $s$}-delay sequence]
    An {\em $s$-delay sequence} consists of \index{delay sequence}
    \begin{itemize}
      \item $s$ not necessarily different {\em delay links} $e_1,\ldots,e_s$;

      \item $s+1$ {\em delay groups} $a_1,\ldots,a_{s+1}$ such that the path of a packet from $a_i$ traverses $e_i$ and $e_{i-1}$ in that order for all $i\in\{2,\ldots,s\}$, the path of $p_1$ contains $e_1$, and the path of $p_{s+1}$ contains $e_s$;

      \item $s$ integers $l_1,\ldots,l_s\ge 0$ such that $l_1$ is the number of links on the path of $p_1$ from $e_1$ (inclusive) to its destination, and for all $i\in\{2,\ldots,s\}$, $l_i$ is the number of links on the path of $p_i$ from $e_i$ (inclusive) to $e_{i-1}$ (exclusive), and $\sum_{i=1}^s l_i \le D$; and

      \item $s+1$ integers $r_1,\ldots,r_{s+1}$ with $0\le r_{s+1} \le \ldots \le r_1<K$.
    \end{itemize}
    A delay sequence is called {\em active} if for all $i\in\{1,\ldots,s+1\}$ we have $\rho(a_i)=r_i$.
  \end{definition}

  \noindent
  Our observations above yield the following lemma.

  \begin{lemma}  \label{rr_le1}
    Any choice of the ranks that yields a routing time of $T\ge D+s$ steps implies an active $s$-delay sequence.
  \end{lemma}

  \begin{lemma} \label{rr_le2}
    The number of different $s$-delay sequences is at most
    \[
      n \cdot d^D \cdot C^s \cdot {D+s \choose s} \cdot {s+K \choose s+1}.
    \]
  \end{lemma}
  \begin{proof}
    There are at most ${D+s \choose s}$ possibilities to choose the $l_i$'s such that $\sum_{i=1}^s l_i \le D$.
    Furthermore, there are at most $n$ choices for $v_1$, which will also fix $a_1$.
    Once $v_1$ and $l_1$ is fixed, there are at most $d^{l_1}$ choices for $e_1$.
    Once $e_1$ is fixed, there are at most $d^{l_2}$ choices for $e_3$, and so on.
    So altogether, there are at most $d^D$ possibilities for $e_1,\ldots,e_s$.
    Since the congestion at every edge is at most $C$, there are at most $C$ possibilities for each $e_i$ to pick $a_{i+1}$, so altogether, there are at most $C^s$ possibilities to select $a_2,\ldots,a_{s+1}$.
    Finally, there are at most ${s+K \choose s+1}$ ways to select the $r_i$ such that $0\le r_{s+1} \le \ldots \le r_1<K$.
  \end{proof}

  Note that we assumed that there is a unique, total ordering on the ranks of the aggregation groups once $\rho$ is fixed.
  Hence, every aggregation group can only occur once in an $s$-delay sequence.
  Since $\rho$ is assumed to be a (pseudo-)random hash function, the probability that an $s$-delay sequence is active is $1/K^{s+1}$.
  Thus,
  \begin{align*}
    &\Pr[\text{The protocol needs at least $D+s$ steps}] \\
    \stackrel{\text{\scriptsize Lemma~\ref{rr_le1}}}{\le} &\Pr[\text{There exists an active $s$-delay sequence}] \\
    \stackrel{\text{\scriptsize Lemma~\ref{rr_le2}}}{\le} &n \cdot d^D \cdot C^s \cdot {D+s \choose s} \cdot {s+K \choose s+1} \cdot \frac{1}{K^{s+1}} \\
    \le \quad &n \cdot 2^{D \log d} \cdot C^s \cdot 2^{D+s} \cdot 2^{s+K} \cdot \frac{1}{K^{s+1}} \\
    \le \quad &n \cdot 2^{2s+D(\log d+1)+K} \cdot \left( \frac{C}{K} \right)^s.
  \end{align*}
  If we set $K\ge 8C$ and $s=K+D(\log d+1) + (\alpha+1)\log n$, where $\alpha >0$ is an arbitrary constant, then
  \begin{align*}
    &\Pr[\text{The algorithm needs at least $D+s$ steps}] \\
    \le &n \cdot 2^{2s+D(\log d +1)+K} \cdot 2^{-3s} \\
    = &n \cdot 2^{-s+D(\log d +1)+K} = \frac{1}{n^{\alpha}}
  \end{align*}
  which concludes the proof of Theorem~\ref{thm:combining1}.
\end{proof}

Using Theorem~\ref{thm:combining1}, we are now able to bound the runtime of the Combining Phase by determining the parameters of the underlying routing problem.

\begin{lemma}
  The Combining Phase takes time $O(L/n + \log n)$, w.h.p.
\end{lemma}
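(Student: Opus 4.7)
The plan is to apply Theorem~\ref{thm:combining1} to the butterfly on which the Combining Phase runs. Since the butterfly has constant degree ($d=O(1)$) and depth $D=\lfloor \log n\rfloor$, the bound of that theorem becomes $O(C+D\log d+\log n)=O(C+\log n)$, where $C$ is the path congestion, i.e., the maximum over edges $e$ of the number of distinct aggregation groups whose packets traverse $e$. Hence the entire argument reduces to showing that $C=O(L/n+\log n)$ w.h.p.

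To bound the congestion on a fixed edge $e$ at level $j$, I would exploit the butterfly's routing structure: for any source $s$ at level $0$ and destination $t$ at level $d$ there is a unique path; the set $T_e$ of level-$d$ destinations whose path can pass through $e$ has size $2^{d-j-1}$; and once $t\in T_e$ is fixed, there is a unique source $s^\ast(t)$ at level $0$ whose path to $t$ traverses $e$. Consequently, an aggregation group $A_i$ contributes to the congestion on $e$ if and only if (i) the intermediate target $h(i)$ falls in $T_e$, and (ii) at least one of the $|A_i|$ initial packet locations of $A_i$, each chosen uniformly and independently in the Preprocessing Phase, equals $s^\ast(h(i))$.

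Conditioning on $h(i)$ and using the mutual independence of the preprocessing choices, $\Pr[A_i\text{ uses }e]\le (|T_e|/n)\cdot(|A_i|/n)$, so by linearity of expectation the expected congestion on $e$ is at most $\frac{|T_e|}{n}\cdot\frac{L}{n}=O(L/n)$. To lift this to a high-probability bound I would apply Lemma~\ref{lem:generalChernoffBound} to the indicators $X_i=\mathbf{1}[A_i\text{ uses }e]$; these are $\Theta(\log n)$-wise independent because the hash function $h$ is $\Theta(\log n)$-wise independent and because the preprocessing destinations of different packets are mutually independent. This yields $C_e=O(L/n+\log n)$ w.h.p., and a union bound over the $O(n\log n)$ butterfly edges gives $C=O(L/n+\log n)$ w.h.p.

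Plugging this into Theorem~\ref{thm:combining1}, with the rank parameter $K$ chosen as a sufficiently large polynomial in $n$ so that the condition $K\ge 8C$ holds w.h.p., yields the claimed runtime $O(L/n+\log n)$. The main obstacle is the dependence analysis used to invoke Lemma~\ref{lem:generalChernoffBound}: one must argue carefully that the hash function $h$ (only $\Theta(\log n)$-wise independent) and the preprocessing packet destinations (mutually independent across packets) together give the $X_i$ enough independence for the generalized Chernoff bound to apply, and also check that the synchronization mechanism used to detect the end of the Combining Phase contributes only an additive $O(\log n)$ overhead.
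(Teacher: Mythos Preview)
Your overall strategy matches the paper's: apply Theorem~\ref{thm:combining1} and reduce everything to proving $C=O(L/n+\log n)$ w.h.p.\ via a Chernoff bound on the per-edge indicators. However, your butterfly combinatorics contain an error that invalidates the probability bound as written. For an edge $e$ between levels $j$ and $j+1$, the set $T_e$ of reachable level-$d$ destinations does have size $2^{d-j-1}$, but it is \emph{not} true that for each $t\in T_e$ there is a unique level-$0$ source whose path to $t$ crosses $e$: there are in fact $2^j$ such sources (so the total number of source--destination pairs through $e$ is $2^j\cdot 2^{d-j-1}=2^{d-1}$, independent of $j$, which is exactly the count the paper uses). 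Hence your claimed upper bound $\Pr[A_i\text{ uses }e]\le(|T_e|/n)(|A_i|/n)$ is too small by a factor of $2^j$ and is not a valid upper bound for $j>0$. The fix is immediate: accounting for the $2^j$ sources yields $\Pr[A_i\text{ uses }e]\le |A_i|/2^{d+1}$, and summing still gives $\E[X]\le L/n$. The paper obtains this directly by counting source--destination pairs rather than conditioning on $h(i)$, and then applies Lemma~\ref{lem:generalChernoffBound} just as you propose.

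A secondary remark: setting the rank range $K$ to ``a sufficiently large polynomial in $n$'' is not what you want. The proof of Theorem~\ref{thm:combining1} actually bounds the runtime by $D+s$ with $s=K+D(\log d+1)+(\alpha+1)\log n$, so the stated $O(C+D\log d+\log n)$ bound tacitly assumes $K=\Theta(C)$; a polynomial-in-$n$ choice of $K$ would blow up the runtime. Instead set $K$ to a constant multiple of the known congestion upper bound $L/n+\Theta(\log n)$.
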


\begin{proof}
  The depth of the butterfly is $O(\log n)$ and its degree is $4$.
  Furthermore, the size of the routing problem is $L$.
  Therefore, it only remains to show that the congestion of the routing problem is $O(L/n+\log n)$, w.h.p.

  Consider some fixed edge $e$ from level $i$ to $i+1$ in the butterfly.
  For any $A \in \mathcal{A}$ let the binary random variable $X_A$ be $1$ if and only if there is at least one packet from $A$ crossing $e$.
  Clearly, there are $2^i \cdot 2^{d-i-1} = 2^d/2$ source-destination pairs, where the source is in level $0$ while the destination is in level $d$, whose unique shortest path passes through $e$.
  If the source of every packet is chosen uniformly and independently at random among all BF-nodes of level $0$ and the destinations of the aggregation groups are chosen uniformly and independently at random from all BF-nodes of level $d$, then the probability for an individual packet to pass through $e$ is $(2^d/2)/(2^d)^2 = 1/(2^{d+1})$.
  Hence, $\E[X_A] = \Pr[X_A=1] \le |A|/2^{d+1}$.
  Let $X=\sum_{A \in \mathcal{A}} X_A$.
  Then
  \[
    \E[X] = \sum_{A \in \mathcal{A}} \E[X_A] \le \frac{\sum_{A \in \mathcal{A}} |A|}{2^{d+1}} = \frac{L}{2^{d+1}} \le \frac{L}{n}.
  \]
  Since the $X_A$'s are independent, it follows from the Chernoff bounds (Lemma~\ref{lem:generalChernoffBound}) that $X = O(L/n + \log n)$, w.h.p.
\end{proof}

Using Chernoff bounds and the fact that every node at level $d$ of the butterfly is target of at most $O(\hat{\ell_2} + \log n)$ aggregation groups, w.h.p., the following result can be shown similarly to Lemma~\ref{lem:preprocessing}.

\begin{lemma}\label{lem:postprocessing}
  The Postprocessing Phase takes time $O(\hat{\ell_2} / \log n)$, w.h.p.
  Moreover, in each round every node sends and receives at most $O(\log n)$ packets, w.h.p.
\end{lemma}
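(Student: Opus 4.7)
\textbf{Proof proposal for Lemma~\ref{lem:postprocessing}.}
The runtime bound is immediate from the construction: by definition the Postprocessing Phase consists of exactly $s = \lceil \hat{\ell_2}/\log n\rceil = O(\hat{\ell_2}/\log n)$ rounds of random-round sends (plus a $O(\log n)$ synchronization round via Aggregate-and-Broadcast, which is subsumed in the overall bound of Theorem~\ref{thm:aggregationProblem}). So the substance of the lemma is the per-round send/receive bound. My plan is to prove it in two Chernoff steps, first bounding how many packets are stored at each BF-node of level $d$ at the end of the Combining Phase, and then bounding how they spread across the $s$ random rounds.

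\emph{Step 1: Packets per BF-node at level $d$ after Combining.} After the Combining Phase, the node $h(i)$ holds exactly one packet per aggregation group $A_i$ hashed to it. Since each node in $V$ is a target of at most $\ell_2 \le \hat{\ell_2}$ aggregation groups, the total number of aggregation groups satisfies $N \le n \hat{\ell_2}$. For a fixed BF-node $u$ at level $d$, let $Y_u = \sum_{i=1}^N \mathbf{1}[h(i) = u]$; since $h$ is chosen (pseudo-)randomly over the $\Theta(n)$ BF-nodes of level $d$, we have $\E[Y_u] \le N/n \le \hat{\ell_2}$, and Lemma~\ref{lem:generalChernoffBound} combined with a union bound over the $\Theta(n)$ level-$d$ BF-nodes yields $Y_u = O(\hat{\ell_2} + \log n)$ w.h.p.

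\emph{Step 2: Sends per round.} Each of the $O(\hat{\ell_2} + \log n)$ packets stored at a fixed BF-node $u$ chooses a round from $\{1,\ldots,s\}$ uniformly and independently at random. For a fixed round $r$, the expected number of packets $u$ sends in round $r$ is $O((\hat{\ell_2} + \log n)/s) = O(\log n + \log^2 n/\hat{\ell_2})$. When $\hat{\ell_2} \ge \log n$ this expectation is $O(\log n)$ and a Chernoff bound plus union over nodes and rounds gives at most $O(\log n)$ sends per round w.h.p.; when $\hat{\ell_2} < \log n$ we have $s = 1$ and $u$ holds only $O(\log n)$ packets in total, so the bound is trivial.

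\emph{Step 3: Receives per round.} For the receive side, fix a target $t_i = v$; since $v$ is a target of at most $\hat{\ell_2}$ aggregation groups, at most $\hat{\ell_2}$ packets are addressed to $v$ in this phase. Each such packet independently chooses its round uniformly from $\{1,\ldots,s\}$, so the expected number arriving at $v$ in any fixed round is $\hat{\ell_2}/s \le \log n$. Again Lemma~\ref{lem:generalChernoffBound} plus a union bound over all nodes and all $s$ rounds yields at most $O(\log n)$ receives per node per round w.h.p. (the case $\hat{\ell_2} < \log n$ being trivial as before). The main obstacle I anticipate is handling the regime $\hat{\ell_2} < \log n$ cleanly, where the expected load is small but Chernoff requires separate care; treating this as a trivial base case avoids any issue. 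Combining the three steps proves the lemma.
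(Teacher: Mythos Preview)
Your proposal is correct and follows essentially the same approach as the paper: the paper's own proof is a one-line sketch (``Using Chernoff bounds and the fact that every node at level $d$ of the butterfly is target of at most $O(\hat{\ell_2} + \log n)$ aggregation groups, w.h.p., the following result can be shown similarly to Lemma~\ref{lem:preprocessing}''), and your Steps~1--3 are precisely the natural fleshing-out of that sketch. The only cosmetic slip is that $\E[Y_u] = N/2^d \le 2N/n$ rather than $N/n$ (since $2^d = 2^{\lfloor \log n\rfloor}$ may be as small as $n/2$), but this does not affect the $O(\hat{\ell_2}+\log n)$ conclusion.
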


We conclude the following theorem.

\begin{theorem}
  The Aggregation Algorithm takes time $O(L/n + (\ell_1 + \hat{\ell_2})/\log n + \log n)$, w.h.p.
\end{theorem}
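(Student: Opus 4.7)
The plan is straightforward: I would decompose the Aggregation Algorithm into its three phases and simply sum the runtimes already established by the three preceding lemmas, while accounting for the inter-phase synchronization overhead. Concretely, the Preprocessing Phase contributes $O(\ell_1/\log n)$ rounds, the Combining Phase contributes $O(L/n + \log n)$ rounds, and the Postprocessing Phase contributes $O(\hat{\ell_2}/\log n)$ rounds, each w.h.p. Summing these three expressions yields exactly the claimed $O(L/n + (\ell_1 + \hat{\ell_2})/\log n + \log n)$ bound.

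The second ingredient is the synchronization between consecutive phases. This is needed because late senders from one phase could otherwise collide with the early senders of the next phase and violate the per-node $O(\log n)$ capacity bound that each individual phase lemma assumes. I would handle this by inserting the token-passing variant of the Aggregate-and-Broadcast Algorithm at the end of each phase, so that all nodes begin the next phase in the same round. By Theorem~\ref{thm:aggregateAndBroadcast} each such synchronization costs $O(\log n)$ rounds, and since there are only a constant number of synchronization points (after the Preprocessing and Combining Phases, and for termination detection after Postprocessing) the total synchronization overhead is $O(\log n)$, which is already absorbed in the final bound.

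Finally, I would verify that the probabilistic guarantees compose cleanly: each of the three phase lemmas, as well as the correctness of the routing and hashing steps inside them, fails with probability at most $1/n^c$ for a tunable constant $c$, so a union bound over the constant number of high-probability events still yields the claim w.h.p. There is no real obstacle here; all the substantive analytic work, in particular the delay-sequence argument inside the Combining Phase and the Chernoff-style balls-and-bins bounds inside the Preprocessing and Postprocessing Phases, has already been performed. The only item requiring explicit attention when stitching the three pieces together is the condition that no node begins phase $i+1$ before every node has finished phase $i$, and this is precisely what the inter-phase synchronization guarantees.
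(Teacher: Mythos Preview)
Your proposal is correct and matches the paper's approach exactly: the paper simply states ``We conclude the following theorem'' after establishing the three phase lemmas (Preprocessing $O(\ell_1/\log n)$, Combining $O(L/n+\log n)$, Postprocessing $O(\hat{\ell_2}/\log n)$), and the synchronization via Aggregate-and-Broadcast is already described in the algorithm text with the same $O(\log n)$ overhead you note. You are, if anything, more explicit than the paper about the union bound and the need for synchronization to preserve the per-node capacity guarantees.
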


\subsection{Multicast Tree Setup Algorithm} \label{sec:multicastTree}

First, every node $u$ injects an (empty) packet $(i, u)$ for each $i$ such that $u \in A_i$ into a BF-node $l(i, u)$ of level $0$ chosen uniformly and independently at random.
As in the Aggregation Algorithm, packets are sent in batches of size $\lceil \log n \rceil$.
Then, for all $i$, all packets of $A_i$ are aggregated at $h(i)$ using the same routing strategy as in the Aggregation Algorithm and an arbitrary aggregate function.
Alongside the algorithm's execution, every BF-node $u$ records for every $i \in \{1,\ldots,N\}$ all edges along which packets from group $A_i$ arrived during the routing towards $h(i)$, and declares them as edges of $T_i$.
Again, the intermediate steps are synchronized using the Aggregate-and-Broadcast Algorithm, and the final termination is determined using a token passing strategy.

The following theorem follows from the analysis of the Aggregation Algorithm.

\begin{theorem}
  The Multicast Tree Setup Algorithm computes multicast trees in time $O(L/n + \ell/\log n + \log n)$, w.h.p.
  The resulting multicast trees have congestion $O(L/n + \log n)$, w.h.p.
\end{theorem}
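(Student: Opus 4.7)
The plan is to reuse the analysis of the Aggregation Algorithm essentially verbatim and then separately bound the congestion of the resulting trees. Structurally, the Multicast Tree Setup Algorithm is identical to the Preprocessing Phase and Combining Phase of the Aggregation Algorithm applied to the groups $\mathcal{A}$, with an arbitrary aggregate function (for example, one that discards a colliding packet) and no Postprocessing Phase since the groups have no targets. Hence by Lemma~\ref{lem:preprocessing}, injecting one packet per membership of each node in $\lceil \log n \rceil$-sized batches takes time $O(\ell / \log n)$ w.h.p., and by the leveled-path-collection argument underlying Theorem~\ref{thm:combining1}, together with the edge-congestion bound derived in the Combining Phase analysis, the routing of these packets to the $h(i)$ completes in $O(L/n + \log n)$ rounds w.h.p. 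Summing the two phases (plus $O(\log n)$ for each Aggregate-and-Broadcast synchronization step) yields the claimed runtime.

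Correctness of the tree structure follows from the geometry of the butterfly. Each packet of group $A_i$ travels along the unique directed level-$0$-to-level-$d$ path from its injection point $l(i,u)$ to the common destination $h(i)$. Because the butterfly edges used are oriented from level $j$ to $j{+}1$, the union of these paths is necessarily a rooted in-tree at $h(i)$ whose leaves are exactly $\{l(i,u) : u \in A_i\}$. The local recording step, which marks every edge over which a packet of $A_i$ passed as an edge of $T_i$, therefore reconstructs $T_i$ exactly.

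For the congestion bound, the plan is to mirror the congestion analysis of the Combining Phase, but with BF-nodes rather than edges. Fix a BF-node $v$ at level $i$, and let $X_v$ denote the number of indices $j$ with $v \in T_j$. Since $v$ has $2^i$ level-$0$ ancestors and $2^{d-i}$ level-$d$ descendants, a single packet whose source is uniform on level $0$ and whose destination is uniform on level $d$ visits $v$ with probability $(2^i/2^d)\cdot(2^{d-i}/2^d) = 1/2^d$. Hence a union bound over the $|A_j|$ packets of group $j$ gives
\[
  \Pr[v \in T_j] \;\le\; \frac{|A_j|}{2^d} \;\le\; \frac{2|A_j|}{n}.
\]
Summing yields $\E[X_v] \le 2L/n$, and since the injection positions and the hash values $h(j)$ are independent across distinct groups, the indicators $\mathbf{1}[v \in T_j]$ are independent across $j$, so Lemma~\ref{lem:generalChernoffBound} gives $X_v = O(L/n + \log n)$ w.h.p. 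A union bound over all $O(n \log n)$ BF-nodes then yields the stated congestion bound.

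The main obstacle I anticipate is that the hash function $h$ (and the source-selection hash used during Preprocessing) is only $\Theta(\log n)$-wise independent in the actual implementation rather than truly random. As discussed at the end of Section~\ref{sec:primitives}, this is handled either by direct application of the $k$-wise Chernoff bound in Lemma~\ref{lem:generalChernoffBound} whenever the target deviation is $O(\log n)$, or, when a larger deviation is sought, by partitioning events into subsets for which $\Theta(\log n)$-wise independence suffices and summing the deviations; both the edge-congestion step in the runtime argument and the node-congestion step above fit this template.
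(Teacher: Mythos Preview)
Your proposal is correct and follows essentially the same approach as the paper, which simply states that the theorem ``follows from the analysis of the Aggregation Algorithm.'' You have supplied the details the paper omits: the runtime is inherited from the Preprocessing and Combining Phase bounds, and the node-congestion bound is the natural analogue of the paper's edge-congestion computation (a level-$i$ BF-node lies on $2^d$ source--destination paths rather than $2^d/2$, so the per-packet hit probability is $1/2^d$ and the rest of the Chernoff argument goes through unchanged).
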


\subsection{Multicast Algorithm} \label{sec:multicast}

The Multicast Algorithm shares many similarities to the Aggregation Algorithm.
First, every source $s_i$ directly sends $p_i$ to $h(i)$.
Then, in the \emph{Spreading Phase}, $h(i)$ sends $p_i$ to all $l(i,u)$ for all $i$ and $u \in A_i$.
This is done by using the multicast trees and a variant our routing protocol of the Combining Phase:
First, each packet $p_i$ is assigned a $rank(p_i) = \rho(i)$.
Whenever a multicast packet $p_i$ of some aggregation group $A_i$ is stored by an inner node of $T_i$, i.e., by some BF-node $u$ of level $j \in \{1, \ldots, d\}$, then a copy of $p_i$ is sent over each outgoing edge of $u$ in $T_i$, i.e., towards one or both of $u$'s neighbors in level $j-1$.
If two packets from different multicast groups contend to use the same edge at the same time, the one with smallest rank is sent (preferring the one with smallest multicast group identifier in case of a tie), and the others get delayed.
Once there are no packets in transit anymore, which is determined by using the token passing strategy of the Aggregation Algorithm from level $0$ in the direction of level $d$, all leaves of the multicast trees have received their multicast packet.
Finally, every leaf node $l(i, u)$ sends $p_i$ to $u$ in a round randomly chosen from $\{1, \ldots, \lceil \hat{\ell}/\log n \rceil\}$.

The following theorem follows from the discussion of the previous sections and an adaptation of the delay sequence argument in the proof of Theorem~\ref{thm:combining1}.

\begin{theorem}
  The Multicast Algorithm takes time $O(C + \hat{\ell}/\log n + \log n)$, w.h.p.
\end{theorem}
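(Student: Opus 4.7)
The plan is to decompose the Multicast Algorithm into three stages and bound each separately. First, I would analyze the initial send where each source $s_i$ directly transmits $p_i$ to $h(i)$. Since each node is source of at most one multicast group and, by the setup of the multicast trees, the destinations $\{h(i)\}$ are distributed uniformly and independently over the $2^d$ BF-nodes of level $d$, the expected number of packets arriving at any single BF-node in this stage is $O(1)$; applying a Chernoff bound (with $\Theta(\log n)$-wise independent hashing as discussed in Section~\ref{sec:primitives}) shows each node receives $O(\log n)$ packets w.h.p., so this stage completes in $O(\log n)$ rounds once synchronization via the Aggregate-and-Broadcast primitive is accounted for.

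Second, the Spreading Phase runs the random-rank routing strategy in reverse: packets travel from level $d$ down to level $0$ along the multicast trees. Here I would adapt the delay sequence argument of Theorem~\ref{thm:combining1}. The key observation is that along any butterfly edge $e$, only packets from the at most $C$ multicast trees containing $e$ can contend for $e$, so $C$ plays the role of the path-collection congestion. The delay sequence is constructed as a mirror image of the one in the aggregation setting: starting from a packet that arrived late, trace the chain of blocking packets backward in time, using the fact that contention resolution forces the sequence of ranks $\rho(i_1) \le \rho(i_2) \le \cdots$ to be non-increasing along the chain. The one subtle point is that in the multicast setting a packet may fork at a BF-node with two outgoing tree-edges (rather than combining, as in aggregation), so when tracing backward through the tree I would always select the unique tree-edge leading into the blocked node. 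Since all copies of $p_i$ carry the shared rank $\rho(i)$, the counting of $s$-delay sequences is essentially identical to Lemma~\ref{rr_le2}, and substituting $D = O(\log n)$ and $d = O(1)$ into the resulting bound yields a Spreading Phase runtime of $O(C + \log n)$ w.h.p.

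Third, each leaf $l(i,u)$ sends $p_i$ to $u$ in a round selected uniformly at random from $\{1,\ldots,\lceil \hat{\ell}/\log n\rceil\}$. Since every destination $u \in V$ is a member of at most $\ell \le \hat{\ell}$ multicast groups, the expected number of packets arriving at $u$ in any fixed round is at most $\log n$; a Chernoff bound again gives $O(\log n)$ packets per round w.h.p., so this stage runs in $O(\hat{\ell}/\log n + \log n)$ rounds including synchronization. Summing the three stage bounds gives the claimed $O(C + \hat{\ell}/\log n + \log n)$.

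The main obstacle I expect is the delay sequence bookkeeping for the Spreading Phase, because packet duplication at tree forks means that the natural ``reverse'' of an aggregation path is a tree rather than a path, so the standard construction needs a rule for choosing a unique backward edge at each fork. This rule introduces only a constant-factor multiplicative overhead in the enumeration of delay sequences (as the butterfly has degree $4$), and after selecting the parameter $K = \Theta(C)$ exactly as in Theorem~\ref{thm:combining1} the probability bound is preserved; the additive $\log n$ absorbs this constant, together with all synchronization overhead, and yields the final runtime.
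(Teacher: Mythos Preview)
Your proposal is correct and follows essentially the same approach as the paper, which itself gives only a one-sentence proof pointing back to the analysis of the Aggregation Algorithm and noting that the delay sequence argument of Theorem~\ref{thm:combining1} must be adapted. You have correctly fleshed out that adaptation, including the key observation that forking replaces combining (so backward tracing in the delay sequence follows the unique parent edge in each multicast tree); one cosmetic slip is that your chain of ranks should read $\rho(i_1) \ge \rho(i_2) \ge \cdots$ to match ``non-increasing''.
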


\subsection{Multi-Aggregation Algorithm} \label{sec:multiAggregation}

The Multi-Aggregation Algorithm essentially first performs a multicast, then maps each multicast packet to a new aggregation group corresponding to its target, and finally aggregates the packets to their targets.
More precisely, first every node $s_i$ send its multicast packet to $h(i)$.
Then, by using the same strategy as in the Multicast Algorithm, we let each $l(i,u)$ receive $p_i$ for all $u \in A_i$ and all $i$.
Every node $l(i,u)$ then \emph{maps} $p_i$ to a packet $(\id(u), p_i)$ for all $u \in A_i$ and all $i$.
We randomly distribute the resulting packets by letting each BF-node send out its packets, one after the other, to BF-nodes of level $0$ chosen uniformly and independently at random.
By using the same strategy as in the Aggregation Algorithm, we then aggregate all packets $(\id(u), p_i)$ for all $i$ to $h(\id(u))$, and finally send the result $f(\{p_i \mid u \in A_i\})$ from $h(\id(u))$ to $u$.

The following theorem follows from discussion of the previous sections and from the fact that the mapping takes time $O(C)$.

\begin{theorem}\label{thm:multiAggregation}
  The Multi-Aggregation Algorithm takes time $O(C + \log n)$, w.h.p.
\end{theorem}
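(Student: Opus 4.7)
My plan is to decompose the Multi-Aggregation Algorithm into five sub-phases and bound each by $O(C + \log n)$ rounds w.h.p., then conclude by a union bound. The five sub-phases are: (i) each source $s_i$ delivers $p_i$ to the level-$d$ BF-node $h(i)$; (ii) $p_i$ is spread down the multicast tree $T_i$ to every leaf $l(i,u)$; (iii) each leaf produces a relabeled packet $(\id(u), p_i)$ and scatters it to a uniformly random level-$0$ BF-node; (iv) the scattered packets are aggregated towards $h(\id(u))$ using the Combining Phase routing protocol; and (v) each aggregate is delivered from $h(\id(u))$ to $u$.

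For (i), every node sends at most one packet and, since $h$ behaves like a random hash into level-$d$ BF-nodes, each destination receives $O(\log n)$ packets w.h.p.\ by Chernoff; a single round of direct communication in the clique (or $O(\log n)$ rounds of butterfly routing) therefore suffices. Phase (v) is the Postprocessing Phase with $\hat{\ell}_2 = 1$, so Lemma~\ref{lem:postprocessing} yields $O(\log n)$ rounds. Phase (ii) mirrors the Spreading Phase of the Multicast Algorithm: the random-rank protocol is executed along the leveled trees with rank $\rho(i)$, and a delay-sequence argument directly analogous to Theorem~\ref{thm:combining1} (with constant degree, depth $O(\log n)$, and congestion $C$) yields $O(C + \log n)$ rounds w.h.p.

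The heart of the argument is phases (iii) and (iv). For (iii), since the leaves $l(i,u)$ were placed uniformly at random during tree setup, a Chernoff bound shows that every level-$0$ BF-node hosts $O(L/n + \log n) = O(C + \log n)$ leaves w.h.p.\ and therefore holds at most that many mapped packets; sending them in batches of $\log n$ to uniformly random level-$0$ destinations reuses the argument of Lemma~\ref{lem:preprocessing} to give both an $O(C/\log n + 1)$ send-time and $O(\log n)$ arrivals per node per round w.h.p. For (iv), the resulting routing instance is a leveled path collection on the butterfly with global load $L$ and destinations $h(\id(u))$ placed uniformly at random on level $d$; the same calculation as in the Combining Phase yields edge-congestion $O(L/n + \log n) = O(C + \log n)$ w.h.p., whereupon Theorem~\ref{thm:combining1} gives an $O(C + \log n)$ time bound.

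The main obstacle I expect is getting phase (iii) to go through cleanly: the number of mapped packets per BF-node is not bounded a priori but only through a Chernoff-style load balance inherited from the random leaf placements used in the Multicast Tree Setup, and one must ensure that the limited independence available from the pseudo-random hash functions is strong enough for these tail bounds (the discussion in Section~\ref{sec:primitives} shows that $\Theta(\log n)$-wise independence suffices). Once that is in place, the remaining accounting is routine, and a union bound over all five sub-phases, together with the synchronization overhead of $O(\log n)$ from the Aggregate-and-Broadcast Algorithm between phases, delivers the claimed $O(C + \log n)$ total running time w.h.p.
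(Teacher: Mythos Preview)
Your proposal is correct and follows essentially the same approach as the paper. The paper's own proof is a one-line remark that the result ``follows from the discussion of the previous sections and from the fact that the mapping takes time $O(C)$''; your five-phase decomposition (source-to-root, spreading, mapping/scattering, combining, delivery) precisely unpacks that remark, invoking the same ingredients (the Spreading Phase analysis, Lemma~\ref{lem:preprocessing}, Theorem~\ref{thm:combining1}, and Lemma~\ref{lem:postprocessing}) that the paper implicitly relies on.
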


} \fi

\end{document}